\documentclass[11pt]{article}

\usepackage[margin=1in]{geometry}
\usepackage{amsmath,amssymb,amsthm,mathtools}
\usepackage{microtype}
\usepackage[colorlinks=true,linkcolor=blue,citecolor=blue,urlcolor=blue]{hyperref}
\usepackage{cleveref}
\usepackage{enumitem}
\usepackage{booktabs,tabularx,array}
\usepackage{thm-restate}

\usepackage{tikz}
\usetikzlibrary{patterns}

\usepackage[skins]{tcolorbox}
\tcbuselibrary{breakable}

\newtcolorbox[
  auto counter, number within=section
]{algorithmbox}[3][]{%
  enhanced,
  colback=white,colframe=black,coltitle=black,
  sharp corners,boxrule=0.4pt,
  fonttitle=\itshape,
  attach boxed title to top left={yshift=-0.3\baselineskip-0.4pt,xshift=2mm},
  boxed title style={tile,size=minimal,left=0.5mm,right=0.5mm,
                     colback=white,before upper=\strut},
  float*=htb,                         
  title   ={Protocol~\thetcbcounter: #2},                       
  label   ={#3},
  #1                                   
}

\newtheorem{theorem}{Theorem}
\newtheorem{lemma}[theorem]{Lemma}
\newtheorem{proposition}[theorem]{Proposition}
\newtheorem{corollary}[theorem]{Corollary}

\newtheorem*{informaltheorem}{Theorem}
\theoremstyle{definition}
\newtheorem{definition}[theorem]{Definition}
\theoremstyle{remark}

\newcommand{\QMAplus}{\mathsf{QMA}^{+}}
\newcommand{\QMAreal}{\mathsf{QMA}^{\mathbb R}}
\newcommand{\QMAtwo}{\mathsf{QMA}(2)}
\newcommand{\NEXP}{\mathsf{NEXP}}
\newcommand{\Sym}{\operatorname{Sym}}
\newcommand{\D}{\mathsf D}
\newcommand{\Tr}{\operatorname{Tr}}
\newcommand{\E}{\mathbb E}
\newcommand{\Prb}{\mathbb P}
\newcommand{\one}{\mathbf 1}
\newcommand{\ket}[1]{\lvert #1\rangle}
\newcommand{\bra}[1]{\langle #1\rvert}
\newcommand{\braket}[2]{\langle #1\mid #2\rangle}
\newcommand{\proj}[1]{\lvert #1\rangle\!\langle #1\rvert}

\title{Near-Optimal Gap Amplification for Nonnegative Unentangled Quantum Proofs}
\author{
Masayuki Miyamoto
\\University of Tsukuba, Japan
 }
\date{}

\begin{document}
\maketitle

\begin{abstract}
We study gap amplification of the class $\mathsf{QMA}^{+}(2)$ characterized by unentangled quantum proofs whose amplitudes are nonnegative
in the computational basis. This class was recently introduced by Jeronimo and Wu (STOC 2023), and its behavior depends sharply on the completeness-soundness gap: although it captures the power of $\mathsf{NEXP}$ for some small constant gap, it is equal to $\mathsf{QMA}(2)$ for larger constant gap. This is in stark contrast to $\mathsf{QMA}(2)$ where strong gap amplification is known due to the product test by Harrow and Montanaro (FOCS 2010, JACM 2013).

In this paper, we prove for every completeness $c$ and soundness $s$ with $c-s=1/\mathrm{poly}(n)$,
\[
  \NEXP
  = \QMAplus(2,c,s)
  =
  \QMAplus\!\left(2,1-\frac1{\mathrm{poly}(n)},\frac14+\frac1{\mathrm{poly}(n)}\right).
\]
Our result gives a clean complexity phase transition for $\QMAplus(2)$ since we have
\[
  \QMAreal(2)
  =
  \QMAplus\!\left(2,1-\frac1{\mathrm{poly}(n)},\frac14-\frac1{\mathrm{poly}(n)}\right),
\]
where $\QMAreal(2)$ denotes $\QMAplus(2)$ with witnesses restricted to real amplitudes.
Our amplification is thus optimal in the sense that a slight improvement of our soundness would have the collapse
\[
\QMAreal(2)=\NEXP.
\]

Our proof combines symmetric-subspace projections with the relation $\QMAplus(1)=\NEXP$ of Bassirian, Fefferman, and Marwaha (ITCS 2024).
The main technical ingredient is a dimension-independent de Finetti theorem in Hilbert-Schmidt norm that applies when the number of registers under consideration grows logarithmically.  
\end{abstract}

\newpage
\tableofcontents

\section{Introduction}
Quantum entanglement has been one of the most useful resources used in information processing. In the field of quantum complexity theory, the power of entanglement has been studied for many years. One of the major discoveries to emerge from recent research is the class $\mathsf{MIP}^*$ of multi-witness interactive proofs involving quantum entanglement. This class is capable of deciding the notorious halting problem, which is undecidable even in unlimited computational time~\cite{ji2021mip}. In contrast, it is also known that quantum {\it unentanglement} can sometimes provide strong computational power to proof systems. This notion of unentanglement is characterized by the complexity class $\mathsf{QMA}(2)$~\cite{kobayashi2003quantum,aaronson2009power}, where two unentangled provers provide quantum proofs to the verifier. Settling the complexity of $\mathsf{QMA}(2)$ has been a major open problem in quantum complexity theory. Currently, we only know trivial relation: 
\[
\mathsf{QMA} \subseteq \mathsf{QMA}(2)\subseteq \mathsf{NEXP}.
\]

In this paper, we focus on gap amplification, a fundamental property of proof systems. 
In the classical complexity theory, gap amplification is one of the basic robustness properties of proof systems: for example, in classical randomized proof systems, namely $\mathsf{MA}$, an inverse polynomial completeness-soundness gap can be amplified to exponentially small error by standard repetition procedures (running the verification protocol multiple times using the same classical proof). In \(\mathsf{QMA}\), the Marriott--Watrous amplification theorem~\cite{MarriottWatrous05} achieves this without increasing the number of quantum proofs. Moreover, gap amplification is of course standard for computations without proofs, such as $\mathsf{BPP}$ and $\mathsf{BQP}$. Thus the precise choice of constant error parameters is largely immaterial for a wide range of classes.

The situation changes dramatically for unentangled quantum proofs.  The class
\(\mathsf{QMA}(2)\), introduced by Kobayashi, Matsumoto, and Yamakami~\cite{kobayashi2003quantum} and Aaronson, Beigi, Drucker, Fefferman, and Shor~\cite{aaronson2009power}, is the two-prover analogue of \(\mathsf{QMA}\) in which Arthur receives two quantum proofs promised to be unentangled.  Classically, several Merlins can be merged into one. So we have $\mathsf{MA}(2)=\mathsf{MA}$ and $\mathsf{QCMA}(2)=\mathsf{QCMA}$. 
Quantumly, the product constraint is a genuine geometric condition: the verifier optimizes not over all states but over product states, and even the robustness of \(\mathsf{QMA}(2)\) under changes of error parameters was open for years. The obstacle is simple to state. In the single prover setting, sending multiple quantum proofs and repeting the verification protocol multiple times work well, but in the
two-prover setting the same does not hold: after the verifier performs measurements to a part of multiple unentangled proofs from two provers, the remaining part of quantum proofs can be entangled. This phenomenon is known as {\it entanglement swapping}. Harrow and Montanaro~\cite{harrow2013testing} resolved this problem by introducing the product test.
Since the product test implements a separable measurement, it can avoid the problem of entanglement swapping: separable measurements can be sequentially repeated while preserving separability on the unmeasured registers. Their analysis showed both strong gap amplification for
\(\mathsf{QMA}(2)\) and the collapse
\[
        \mathsf{QMA}(k)=\mathsf{QMA}(2)
\]
for polynomially bounded \(k\).  Conceptually, the product test supplies the missing operation: it enforces multipartite product structure using only two
Merlins. 

This paper studies the corresponding amplification problem for
\(\mathsf{QMA}^{+}(k)\), the nonnegative-amplitude variant introduced by
Jeronimo and Wu~\cite{jeronimo2023power}.  In \(\mathsf{QMA}^{+}(2)\), each witness is required to have
nonnegative real amplitudes in the computational basis.  This restriction is not
a harmless simplification. In fact, Jeronimo and Wu~\cite{jeronimo2023power} proved that, for an appropriate
constant gap,
\[
        \mathsf{QMA}^{+}(2)=\mathsf{NEXP}.
\]
This gives a mild gap amplification of $\mathsf{QMA}^{+}(2)$. Since $\mathsf{QMA}^{+}(2)$ is contained in $\mathsf{NEXP}$ for any completeness and soundness with an inverse-polynomial gap, their result also implies an inverse-polynomial gap can be amplified to a certain constant gap. 
They also identified a larger-gap regime in which \(\mathsf{QMA}^{+}(2)\)
collapses back toward \(\mathsf{QMA}(2)\). 
Thus \(\mathsf{QMA}^{+}(2)\) is highly gap-sensitive: improving the gap too far would
have major consequences for the longstanding question of whether
\(\mathsf{QMA}(2)\) reaches \(\mathsf{NEXP}\).

A parallel development comes from disentanglers.  One way to compare
\(\mathsf{QMA}\) and \(\mathsf{QMA}(2)\) is to ask whether a single quantum proof
can be efficiently transformed into two approximately separable proofs.  The
disentangler conjecture asserts that such a transformation should require
exponential dimension blowup. Jeronimo and Wu's dimension-independent
disentangler work~\cite{jeronimo2024dimension} gives a different type of result: using bipartite unentanglement as input, they construct channels that output states close to
multipartite product mixtures, and use this to reprove the result of $\mathsf{QMA}(k)=\mathsf{QMA}(2)$. 
Using this disentangler, they also constructed a $\mathsf{QMA}^{+}(3)$ protocol for $\mathsf{NEXP}$, with completeness $1-1/\exp(n)$ and soundness $1/2+1/\mathrm{poly}(n)$. Here, soundness of this protocol does hold even for a relaxed version of unenetangled proofs, namely, one has nonnegative amplitudes and the other two have arbitrary amplitudes. This is particularly important because this type of protocol can be simulated by a $\mathsf{QMA}^{\mathbb{R}}(3)$ protocol at the cost of doubling the soundness parameter. As a result, if the soundness of their protocol can be slightly improved to $1/2-1/\mathrm{poly}(n)$, then it can be converted to a $\mathsf{QMA}^{\mathbb{R}}(3)$ protocol for $\mathsf{NEXP}$. This demonstrates a limitation of their gap amplification, unless $\mathsf{QMA}^{\mathbb{R}}(3) = \mathsf{NEXP}$.

Recent work by Bassirian, Fefferman and Marwaha~\cite{bassirian2024quantum} focused on $\mathsf{QMA}^{+}$, the nonnegative variant of $\mathsf{QMA}$. They used similar techniques from~\cite{jeronimo2023power}, and showed that $\mathsf{QMA}^{+} = \mathsf{NEXP}$ for a certain constant gap. Since any $\mathsf{QMA}^{+}$ protocol with completeness $c$ and soundness $s$ can be simulated by a $\mathsf{QMA}$ protocol with completeness $c$ and soundness $4s$, it demonstrates that $\mathsf{QMA}^{+}$ does not have strong gap amplification unless $\mathsf{QMA} = \mathsf{NEXP}$ which is highly unbelievable (as it implies e.g., $\mathsf{PP}=\mathsf{NEXP}$). It thus demonstrates that we need to exploit some structure arise from unentanglement in order to obtain gap amplification for $\mathsf{QMA}^{+}(k)$.

\subsection{Our results}
Given these evidences of the subtlety of gap amplification, a natural question to ask is
\begin{center}
    { \it
How much can we amplify the gap between completeness $c$ and soundness $s$, while maintaining the relation $\mathsf{QMA}^+(k,c,s)=\mathsf{NEXP}$?
}
\end{center}
Our result answers this question {\it almost optimally}.

Our main result is the following characterization of \(\NEXP\) with
near-perfect completeness and soundness approaching \(1/4\).

\begin{theorem}[Main theorem]
\label{thm:main}
For every prescribed polynomial \(p\) satisfying \(p(n)\ge4\) for
all sufficiently large \(n\) and \(p(n)\to\infty\),
\[
  \NEXP
  =
  \QMAplus\!\left(2,1-\frac1{p(n)},\frac14+\frac1{p(n)}\right).
\]
\end{theorem}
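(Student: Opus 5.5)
The containment $\QMAplus(2,c,s)\subseteq\NEXP$ for any inverse-polynomial gap is standard: guess exponential-size descriptions of the two nonnegative witnesses to sufficient precision and evaluate the acceptance probability. So the work is the lower bound $\NEXP\subseteq\QMAplus(2,1-1/p,1/4+1/p)$. The plan is to start from the $\QMAplus(1)=\NEXP$ characterization of Bassirian, Fefferman, and Marwaha. That result gives a single-witness nonnegative verifier $V$ with completeness $c_0$ and soundness $s_0$, where $c_0-s_0$ is some constant (or inverse polynomial). We then use the second unentangled prover only to enforce that the witness is effectively a product $\ket{\psi}^{\otimes k}$. This lets us run $V$ on $k=O(\log n)$ or more copies with a majority vote and drive its error down. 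The constant $1/4$ should come from a combined test of the following form:
\begin{itemize}[label={}]
\item with some probability, a swap or symmetric-subspace test between the two proofs;
\item with complementary probability, the amplified check.
\end{itemize}
The overlap $\braket{\phi}{\psi}\ge 0$ forced by nonnegativity is exactly what makes a cheating pair pay at least roughly $3/4$.

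Concretely, each Merlin sends $k$ registers. The verifier first projects each proof onto the symmetric subspace $\Sym^k$ and checks consistency between the two proofs via a swap test on a random register pair. Since both states are nonnegative and symmetric, any product of symmetric states passing this step with high probability is close to a mixture of i.i.d.\ states. The main technical step is the de Finetti statement promised in the abstract. For a nonnegative (hence real, positive-overlap) state projected onto $\Sym^{m}$, the reduced state on $k=O(\log n)$ registers is close in Hilbert--Schmidt norm to a mixture $\int \proj{\psi}^{\otimes k}\,d\mu$, with error independent of the local dimension. After this reduction, $V^{\otimes k}$ with majority vote accepts a NO instance with probability at most $\exp(-\Omega(k))+\mathrm{err}_{\mathrm{dF}}$.

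Next, combine the tests and balance the probabilities. For a YES instance, honest provers send $\ket{\psi}^{\otimes k}$ with the BFM witness $\psi$. The symmetric projection and swap test pass with certainty, and the amplified $V$ passes with probability $1-1/\mathrm{poly}$, which gives completeness $1-1/p$. For a NO instance, we upper bound the acceptance probability by a case split on the overlap $|\braket{\phi}{\psi}|^2$ between the two proofs. If the overlap is large, the de Finetti approximation applies and amplified soundness gives a small value. If the overlap is small, the swap-type test rejects with probability about $(1-\text{overlap})/2$. Tuning the mixing probability so that the worst case of the two branches is $1/4+1/p$ should give the stated soundness. Optimality relative to $\QMAreal(2)$ suggests this $1/4$ is exactly the value at which a real-amplitude cheater (e.g.\ $\psi$ versus $-\psi$ components) breaks the test.

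The hardest step is the dimension-independent Hilbert--Schmidt de Finetti bound with error $1/\mathrm{poly}(n)$ while $k$ grows logarithmically. Trace-norm de Finetti theorems carry a dimension factor, which is useless here. I would first try to expand the symmetric nonnegative state in the occupation-number basis and use nonnegativity to show that the Hilbert--Schmidt distance to the i.i.d.\ mixture is governed by collision probabilities of order $k^2/m$. This would be choosing $m=\mathrm{poly}(n)$ total registers and analyzing only $k=O(\log n)$ of them. A second difficulty is checking that Hilbert--Schmidt closeness suffices to bound acceptance probabilities. This needs the BFM verifier's measurement operator on $k$ copies to have Hilbert--Schmidt norm $2^{O(k)}$, which is polynomial when $k=O(\log n)$, and that is why logarithmic $k$ suffices.
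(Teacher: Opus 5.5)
\textbf{There is a genuine gap: the amplified check is applied to de Finetti components that need not be nonnegative.} After the de Finetti reduction you conclude that the majority vote over $V$ accepts a NO instance with probability at most $\exp(-\Omega(k))+\mathrm{err}_{\mathrm{dF}}$. That step applies the BFM soundness to each component $\ket{\psi}$ of the mixture $\int\bigl(\proj{\psi}\bigr)^{\otimes k}\,d\mu(\psi)$. But those components need not have nonnegative amplitudes; only the reduced state is entrywise nonnegative.

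For example, let $\ket{a},\ket{b}\ge0$ have disjoint supports with $\lVert\ket{a}\rVert^2=\lVert\ket{b}\rVert^2=\frac12$, and set $\ket{u}=\ket{a}+\ket{b}$ and $\ket{v}=\ket{a}-\ket{b}$. Then $(\ket{u}^{\otimes m}+\ket{v}^{\otimes m})/\sqrt2$ is a nonnegative unit vector in $\Sym^m(\mathcal H)$, since only terms with an even number of $b$-factors survive. Because $\braket{u}{v}=0$, its $k$-register marginal is exactly $\frac12\bigl(\proj{u}\bigr)^{\otimes k}+\frac12\bigl(\proj{v}\bigr)^{\otimes k}$. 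The best bound the nonnegative soundness gives on $\bra{v}A\ket{v}$ is $2s_0$, and $2s_0<c_0$ would already put $\NEXP$ in $\mathsf{QMA}$. So $\ket{v}^{\otimes k}$ may pass your majority vote, and this single proof then passes with probability about $1/2$.

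A sanity check shows the claimed bound cannot hold. It is derived for each proof separately and never uses the second prover. It would therefore give a one-proof $\QMAplus$ verifier for $\NEXP$ with completeness $1-o(1)$ and soundness $o(1)$, hence $\mathsf{QMA}=\NEXP$ via the simulation of soundness $s$ by soundness $4s$.

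Relatedly, mixing in a swap test and tuning probabilities gives no mechanism for the constant $1/4$. In the paper, $1/4$ is $\lambda(1-\lambda)$, where $\lambda$ is the $\mu$-mass of components $\ket{\phi}$ with $\bra{\phi}B\ket{\phi}\ge1-\gamma$. Entrywise nonnegativity of $\rho_j$, carried through the Hilbert--Schmidt approximation, lets one replace $B\ket{\phi}$ by its entrywise absolute value $\ket{x_\phi}$. That is a nonnegative vector to which soundness applies, and it gives $\int|\bra{\phi}B\ket{\psi}|^{2j}\,d\mu(\psi)\lesssim1-\lambda$. Since $\ket{x_\phi}$ is not a product over the $L$ inner copies, this needs $\bra{\Psi}B\ket{\Psi}\le r<1$ for all nonnegative, possibly entangled, $\ket{\Psi}$. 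The paper gets this from a threshold test via Markov's inequality, not Chernoff.

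\textbf{Second gap: passing from Hilbert--Schmidt closeness to acceptance probabilities.} The estimate $|\Tr(A(\rho-\sigma))|\le\lVert A\rVert_2\lVert\rho-\sigma\rVert_2$ does not help here. Your majority operator acts on $\mathcal W^{\otimes k}$ with $\dim\mathcal W=2^{\mathrm{poly}(n)}$. Its Hilbert--Schmidt norm is governed by its rank (it is $\sqrt{\operatorname{rank}}$ for a projector), which is exponential in the witness length, not $2^{O(k)}$. This reintroduces exactly the dimension dependence you wanted to avoid.

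The paper instead projects the $2N$ retained registers from both proofs jointly onto $\Sym^{2N}$ and then applies $B^{\otimes 2N}$. With $X=(B^{1/2})^{\otimes N}\rho_N^{(1)}(B^{1/2})^{\otimes N}$ and $Y$ defined likewise, the acceptance probability equals $\binom{2N}{N}^{-1}\sum_j\binom{N}{j}^2\Tr(X_jY_j)$. Each term satisfies $\Tr(X_jY_j)\le\lVert X_j\rVert_2\lVert Y_j\rVert_2$ for subnormalized positive operators, so Hilbert--Schmidt closeness of $\rho_j$ to a tensor-power mixture is exactly what is needed.

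Because the weights concentrate near $j=N/2$, one needs a de Finetti bound on $\Theta(\log m)$ registers. The paper proves this for arbitrary symmetric states, with no use of nonnegativity. It splits $\rho_1$ at eigenvalue $1/m$ and applies the Jeronimo--Wu--Xu trace-norm bound to the heavy part, which has rank at most $m$. It then shows each of the $2^\ell-1$ blocks touching the light part has purity $O(\ell^2/m)$. Your occupation-number sketch does not yet supply such a bound.
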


Since \(\QMAplus(k,c,s)\subseteq\NEXP\) for polynomially many proofs, which is the trivial upper bound for this class, the theorem gives the following.

\par\medskip
\begin{corollary}[Gap amplification of $\QMAplus(k)$]
\label{cor:general-amplification}
Let \(k=k(n)\) be polynomially bounded, and suppose that
\(c(n)-s(n)\ge 1/\operatorname{poly}(n)\).  Then, for every \(p\) as
in Theorem~\ref{thm:main},
\[
  \QMAplus(k,c,s)
  \subseteq
  \QMAplus\!\left(2,1-\frac1{p(n)},\frac14+\frac1{p(n)}\right).
\]
\end{corollary}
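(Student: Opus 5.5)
The corollary follows from Theorem~\ref{thm:main} by a chain of two inclusions, with $\NEXP$ in the middle:
\[
  \QMAplus(k,c,s)\subseteq\NEXP=\QMAplus\!\left(2,1-\tfrac1{p(n)},\tfrac14+\tfrac1{p(n)}\right).
\]
The second relation is exactly Theorem~\ref{thm:main}, and $p$ satisfies the hypotheses required there. So the only thing to prove is the upper bound $\QMAplus(k,c,s)\subseteq\NEXP$ whenever $k$ is polynomially bounded and $c-s\ge 1/\mathrm{poly}(n)$.

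For the upper bound I would use the standard guess-and-verify argument, following Jeronimo--Wu.

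\textbf{Setup.} Take a promise problem $L\in\QMAplus(k,c,s)$ with verifier $V$. On input $x$ of length $n$, $V$ receives $k$ nonnegative proofs $\ket{\psi_1},\dots,\ket{\psi_k}$, each on $m=\mathrm{poly}(n)$ qubits. Its acceptance probability is
\[
  \bra{\psi_1\otimes\cdots\otimes\psi_k}M_x\ket{\psi_1\otimes\cdots\otimes\psi_k},
\]
where $M_x$ is the POVM element of a polynomial-size circuit. Each entry of $M_x$ is computable to exponential precision in exponential time.

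\textbf{The $\NEXP$ machine.} Set $\delta=(c-s)/4$.
\begin{enumerate}[label=(\roman*)]
  \item Nondeterministically guess each $\ket{\psi_i}$ as a vector of $2^m$ nonnegative amplitudes, each written to $\mathrm{poly}(n)$ bits of precision. The total description length is $k\cdot 2^{m}\cdot\mathrm{poly}(n)=2^{\mathrm{poly}(n)}$.
  \item Normalize the guessed vectors.
  \item Compute the acceptance probability exactly, up to rounding, in time $2^{O(km)}$.
  \item Accept iff the computed value is at least $(c+s)/2$.
\end{enumerate}

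\textbf{Rounding.} Rounding each amplitude to precision $2^{-q}$, with $q=\mathrm{poly}(n)$ chosen large relative to $m$ and $\log(1/\delta)$, moves each normalized state by at most $2^{m}2^{-q}$ in Euclidean norm. By the triangle inequality over the $k$ tensor factors, the acceptance probability changes by at most $O(k2^{m}2^{-q})<\delta$. Nonnegativity is preserved by rounding down, and renormalization keeps the entries nonnegative.

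\textbf{Correctness.}
\begin{itemize}
  \item In the yes case, some nonnegative product witness achieves acceptance probability at least $c$. The guessed rounding of it achieves at least $c-\delta>(c+s)/2$, so the machine accepts.
  \item In the no case, every nonnegative product witness has acceptance probability at most $s$. The machine therefore computes at most $s+\delta<(c+s)/2$ on every guess, so it rejects.
\end{itemize}
The inverse-polynomial gap is what makes a polynomial number $q$ of precision bits suffice.

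There is no real obstacle. All the difficulty lives in Theorem~\ref{thm:main}, and the only care needed is the precision bookkeeping in the rounding step: $k$ must be polynomial so that the joint dimension $2^{km}$ stays exponential, and the gap must be inverse-polynomial so that $q=\mathrm{poly}(n)$ suffices.
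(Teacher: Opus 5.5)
Your proposal is correct and follows the paper's argument: the paper also obtains the corollary by combining the containment $\QMAplus(k,c,s)\subseteq\NEXP$, which it attributes to direct exponential-time simulation without further detail, with Theorem~\ref{thm:main}. Your guess-and-verify simulation, with polynomially many bits of precision per amplitude, is a valid way to fill in that containment, and it uses the two hypotheses correctly: $k$ polynomially bounded and an inverse-polynomial gap.
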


As mentioned earlier, previous work provided the gap can be $1-1/\exp(n)$ vs $7/9 + 1/\mathrm{poly}(n)$ for any $k\geq 2$ based on the product test, and $1-1/\exp(n)$ vs $1/2 + 1/\mathrm{poly}(n)$ for any $k\geq 3$ based on disentanglers\footnote{Our result thus indicates that the $1/2$ barrier of~\cite{jeronimo2024dimension} is not a barrier for $\QMAplus(3)$. It is a barrier for $\QMAplus(3)$ with only one nonnegative proof}. See also~\Cref{tab:amplification-comparison}.

\begin{table}[t]
\centering
\small
\setlength{\tabcolsep}{5pt}
\renewcommand{\arraystretch}{1.15}
\begin{tabularx}{\linewidth}{
  @{} l c >{\raggedright\arraybackslash}X c c @{}
}
\toprule
Result
& Proofs
& Proof restriction
& Completeness
& Soundness \\
\midrule
Jeronimo--Wu~\cite{jeronimo2023power}
& \(2\)
& Both proofs have nonnegative amplitudes
& \(1-\exp(-\operatorname{poly}(n))\)
& \(\frac79+\exp(-\operatorname{poly}(n))\) \\

Jeronimo--Wu~\cite{jeronimo2024dimension}
& \(3\)
& One proof has nonnegative amplitudes; the other two are unrestricted
& \(1-\exp(-\operatorname{poly}(n))\)
& \(\frac12+\frac1{\operatorname{poly}(n)}\) \\

This work
& \(2\)
& Both proofs have nonnegative amplitudes
& \(1-\frac1{\mathrm{poly}(n)}\)
& \(\frac14+\frac1{\mathrm{poly}(n)}\) \\
\bottomrule
\end{tabularx}
\caption{
Known gap-amplification results involving nonnegative-amplitude
unentangled proofs. 
}
\label{tab:amplification-comparison}
\end{table}

The value \(1/4\) is also the threshold reached by a general
sign-removal reduction.  Let \(\QMAreal(2)\) denote the two-proof model
in which both honest and adversarial proofs are restricted to real
amplitudes.  Proposition~\ref{prop:sign-removal} in Section~\ref{sec:preliminaries} shows that a verifier with soundness \(s\) against nonnegative product proofs has soundness at
most \(\min\{1,4s\}\) against arbitrary real product proofs. Hence crossing the line \(c=4s\) by an inverse-polynomial amount would have a strong complexity-theoretic consequence, as shown in the following corollary.

\par\medskip
\begin{corollary}
\label{cor:crossing-threshold}
Let \(\delta:\mathbb N\to[0,1]\) and
\(\eta:\mathbb N\to[0,1/4]\) be polynomial-time computable.
If
\[
  \NEXP
  \subseteq
  \QMAplus\!\left(2,1-\delta(n),\frac14-\eta(n)\right)
\]
and \(4\eta(n)-\delta(n)\ge1/q(n)\) for some polynomial \(q\) and all
sufficiently large \(n\), then \(\QMAreal(2)=\NEXP\).
\end{corollary}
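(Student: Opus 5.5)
Given a language $L\in\NEXP$, the plan is to take the assumed $\QMAplus(2)$ verifier $V$ for $L$, with completeness $c=1-\delta(n)$ and soundness $s=\frac14-\eta(n)$, and use \emph{the same verifier} as a $\QMAreal(2)$ protocol. The gap will come from Proposition~\ref{prop:sign-removal}. After that we amplify with the standard techniques available for real unentangled proofs, namely the product test of Harrow and Montanaro, which is known to work over the reals as well. Finally we use the trivial upper bound $\QMAreal(2)\subseteq\NEXP$ to get equality.

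The first step is completeness. In the yes case the honest witnesses are nonnegative product states, so in particular they are real product states. They are accepted with probability at least $1-\delta(n)$, so completeness of $V$ as a $\QMAreal(2)$ verifier is at least $1-\delta(n)$.

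The second step is soundness. In the no case, every nonnegative product proof is accepted with probability at most $s$. Proposition~\ref{prop:sign-removal} then says that every real product proof $\ket{\psi_1}\otimes\ket{\psi_2}$ is accepted with probability at most $\min\{1,4s\}=1-4\eta(n)$. The idea behind that proposition is to split each real vector into positive and negative parts, $\psi_i=\psi_i^+-\psi_i^-$, and apply Cauchy--Schwarz over the four cross terms; we only need to quote it. The resulting gap is
\[
(1-\delta)-(1-4\eta)=4\eta-\delta\ge 1/q(n),
\]
which is inverse-polynomial.

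The third step turns this into membership in $\QMAreal(2)$ under whatever standard error parameters the class uses, for instance $2/3$ versus $1/3$. We run polynomially many copies with the Harrow--Montanaro product test (valid for real amplitudes), or we invoke the known robustness of $\QMAreal(2)$ under inverse-polynomial gaps. If the paper instead defines $\QMAreal(2)$ with an inverse-polynomial gap, this step is immediate. Here we need the assumption that $\delta$ and $\eta$ are polynomial-time computable, since the verifier must know the acceptance threshold.

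This third step is the only delicate one: we must make sure that amplification for real-amplitude two-prover systems is available, i.e.\ that the swap/product test analysis does not rely on complex phases. If that is not available, the fallback is to note that the containment $\NEXP\subseteq\QMAreal(2,1-\delta,1-4\eta)$ already gives the conclusion with the class defined by an inverse-polynomial gap. The reverse inclusion $\QMAreal(2)\subseteq\NEXP$ is by guessing exponential-size descriptions of the witnesses, as noted before the corollary.
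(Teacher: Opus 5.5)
Your argument is correct and matches the paper's intended one: Proposition~\ref{prop:sign-removal} turns the assumed verifier into a $\QMAreal(2)$ verifier with completeness $1-\delta(n)$ and soundness $\min\{1,4s\}=1-4\eta(n)$. The gap $4\eta-\delta\ge 1/q(n)$ is inverse-polynomial with polynomial-time computable parameters, and $\QMAreal(2)\subseteq\NEXP$ holds by direct simulation. The paper defines the unparameterized $\QMAreal(2)$ as the union over exactly such inverse-polynomial-gap parameters, so your ``fallback'' in the third step is already the conclusion. You should drop the Harrow--Montanaro amplification over real amplitudes: it is not needed, and it is neither used nor established in the paper.
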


The consequence \(\QMAreal(2)=\NEXP\) should be treated carefully. 
It leads to at least one of
\begin{itemize}
    \item \(\mathsf{QMA}(2) = \NEXP\), or
    \item \(\QMAreal(2)\neq \mathsf{QMA}(2) \).
\end{itemize}
The former is an important open question in quantum complexity theory. The latter is also significant, since, as stated in the survey~\cite{jeronimo2026qma}, it is believed that real and complex amplitudes should not fundamentally change the power of quantum verification models.  See also~\Cref{fig:gap-landscape} for the illustration of the complexity landscape of $\QMAplus(2)$ with respect to the gap.

\begin{figure}[t]
\centering
\begin{tikzpicture}[
  x=1cm,
  y=1cm,
  classlabel/.style={font=\bfseries\large, text=black},
  intervallabel/.style={
    font=\scriptsize,
    align=center,
    text=black!82
  },
  boundary/.style={
    black!72,
    densely dashed,
    line width=0.55pt
  },
  axis/.style={
    -latex,
    line width=0.95pt
  },
]

\def\xstart{0.00}
\def\xnexpEnd{4.25}
\def\xgapAEnd{5.05}
\def\xrealEnd{8.45}
\def\xgapBEnd{9.25}
\def\xend{13.45}
\def\ybot{0.00}
\def\ytop{1.05}

\filldraw[
  fill=red!12,
  draw=black!45,
  line width=0.45pt
]
  (\xstart,\ybot) rectangle (\xnexpEnd,\ytop);

\filldraw[
  fill=blue!10,
  draw=black!45,
  line width=0.45pt
]
  (\xgapAEnd,\ybot) rectangle (\xrealEnd,\ytop);

\filldraw[
  fill=green!13,
  draw=black!45,
  line width=0.45pt
]
  (\xgapBEnd,\ybot) rectangle (\xend,\ytop);

\filldraw[
  pattern=north east lines,
  pattern color=black!28,
  draw=black!45,
  line width=0.45pt
]
  (\xnexpEnd,\ybot) rectangle (\xgapAEnd,\ytop);

\filldraw[
  pattern=north east lines,
  pattern color=black!28,
  draw=black!45,
  line width=0.45pt
]
  (\xrealEnd,\ybot) rectangle (\xgapBEnd,\ytop);

\foreach \x in {
  \xnexpEnd,
  \xgapAEnd,
  \xrealEnd,
  \xgapBEnd
}{
  \draw[boundary] (\x,-0.08) -- (\x,1.23);
}

\node[classlabel]
  at ({(\xstart+\xnexpEnd)/2},0.54)
  {$\mathsf{NEXP}$};

\node[classlabel]
  at ({(\xgapAEnd+\xrealEnd)/2},0.54)
  {$\mathsf{QMA}^{\mathbb R}(2)$};

\node[classlabel]
  at ({(\xgapBEnd+\xend)/2},0.54)
  {$\mathsf{QMA}(2)$};

\node[font=\bfseries\large]
  at ({(\xnexpEnd+\xgapAEnd)/2},0.52)
  {?};

\node[font=\bfseries\large]
  at ({(\xrealEnd+\xgapBEnd)/2},0.52)
  {?};

\draw[axis]
  (\xstart,\ybot) -- (13.85,\ybot)
  node[right=2pt,font=\small] {gap $c-s$};

\def\xthreefourths{4.65}
\def\xseveneighths{8.85}

\draw[line width=0.75pt]
  (\xthreefourths,0.055) -- (\xthreefourths,-0.09);

\draw[line width=0.75pt]
  (\xseveneighths,0.055) -- (\xseveneighths,-0.09);

\node[below=5pt,font=\small]
  at (\xthreefourths,-0.04)
  {$\frac34$};

\node[below=5pt,font=\small]
  at (\xseveneighths,-0.04)
  {$\frac78$};

\draw[line width=0.75pt]
  (\xstart,0.055) -- (\xstart,-0.09);

\draw[line width=0.75pt]
  (\xend,0.055) -- (\xend,-0.09);

\node[below=5pt,font=\small]
  at (\xstart,-0.04)
  {$\frac{1}{\mathrm{poly}(n)}$};

\node[below=5pt,font=\small]
  at (\xend,-0.04)
  {$1$};

\node[font=\scriptsize,text=black!70]
  at (\xthreefourths,1.35)
  {inverse-polynomial gap};

\node[font=\scriptsize,text=black!70]
  at (\xseveneighths,1.35)
  {inverse-polynomial gap};

\end{tikzpicture}

\caption{
A complexity landscape for $\QMAplus(2,c,s)$ as a function of the completeness--soundness gap \(c-s\), when $c=1-1/\mathrm{poly}$. The phase transition around $7/8$ is due to Proposition~\ref{prop:sign-removal}.
}
\label{fig:gap-landscape}
\end{figure}

\paragraph{A dimension-independent tensor-power approximation.}
The main technical ingredient in the proof is an approximation theorem for
symmetric states.

\begin{informaltheorem}[Informal version of
  Theorem~\ref{lem:tensor-power-approximation}]
Let \(\Gamma_m\) be a state supported on
\(\operatorname{Sym}^m(\mathcal H)\), and let \(\rho_\ell\) be its reduced
state on \(\ell\) registers.  For every fixed \(\eta>0\) and every
\[
  \ell\le \left(\frac12-\eta\right)\log_2 m,
\]
\(\rho_\ell\) is
\(\operatorname{poly}(\log m)m^{-\eta/2}\)-close in Hilbert--Schmidt norm
to a convex combination of states of the form
\[
  \bigl(\lvert\psi\rangle\langle\psi\rvert\bigr)^{\otimes\ell}.
\]
The error is independent of the local dimension \(\dim\mathcal H\).
\end{informaltheorem}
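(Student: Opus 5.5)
The natural route is the Chiribella / Christandl--König--Mitchison--Renner measure-and-prepare construction for symmetric states, with its error measured in Hilbert--Schmidt norm rather than trace norm. Write $\Gamma_m=\sum_i p_i\proj{\Phi_i}$, so it suffices to treat a pure symmetric state $\ket{\Phi}\in\Sym^m(\mathcal H)$ and then use convexity. Let $d=\dim\mathcal H$ and $D_m=\binom{m+d-1}{m}$. The resolution of the identity $\Pi_{\Sym^m}=D_m\int d\psi\,\proj{\psi}^{\otimes m}$ gives the candidate
\[
  \sigma_\ell \;=\; D_m\int d\psi\,\bra{\psi}^{\otimes m}\Gamma_m\ket{\psi}^{\otimes m}\,\proj{\psi}^{\otimes \ell},
\]
which is manifestly a convex combination of $\ell$-fold tensor powers. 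The standard computation, via the Chiribella identity, expresses $\sigma_\ell$ as a combination $\sigma_\ell=\sum_{j=0}^{\ell} c_j(m,\ell,d)\,\Pi_{\Sym^\ell}(\rho_j\otimes \one^{\otimes(\ell-j)})\Pi_{\Sym^\ell}$ with nonnegative coefficients. Here $\rho_j$ is the $j$-register marginal of $\Gamma_m$. The top coefficient is $c_\ell=\binom{m}{\ell}/\binom{m+\ell+d-1}{\ell}$-type, and the remaining mass sits on terms carrying identity factors.

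The trace-norm analysis loses a factor of $d$, because the identity terms have trace-norm weight about $\ell d/m$. In Hilbert--Schmidt norm, however, a term $\Pi(\rho_j\otimes\one^{\otimes(\ell-j)})\Pi$ divided by its trace has small HS norm. A maximally mixed factor on $\ell-j$ symmetric registers has HS norm $\binom{\ell-j+d-1}{\ell-j}^{-1/2}\le 1$. Even without using large $d$, the HS norm of each normalized term is at most $1$. The key point is therefore the weights. I would rewrite the construction dimension-independently by choosing, instead of the Haar-twirl, a \emph{different} decomposition that avoids $d$ altogether.

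Concretely, I will use the combinatorial de Finetti argument, which works in HS norm. The sampling step is to measure $m-\ell$ of the registers in a way that "learns" $\psi$. Equivalently, I compare the marginal $\rho_\ell$ of $\Gamma_m$ with the "sampling with replacement" state. For a symmetric pure state write $\ket{\Phi}$ in the symmetrized product expansion and compare sampling $\ell$ of $m$ registers without replacement (giving $\rho_\ell$) versus with replacement. The with-replacement object is a mixture of tensor powers after one further step, namely polarization/expansion of $\ket{\Phi}$ as an average of $\ket{\psi}^{\otimes m}$ with complex weights. I would control the collision terms by their probability $\binom{\ell}{2}/m$. This is where the HS norm is crucial: off-diagonal polarization coefficients can be as large as $2^{\ell}$-ish in norm (from polarization constants over $\ell$ registers), and the HS norm keeps this blowup to $2^{\ell}$ rather than a $d$-dependent factor. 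The resulting bound has the shape $\mathrm{poly}(\ell)\,2^{\ell}/\sqrt m$, or $4^{\ell}/m$ under the square root. With $\ell\le(\tfrac12-\eta)\log_2 m$ we get $2^{\ell}\le m^{1/2-\eta}$, and the error becomes $\mathrm{poly}(\log m)\,m^{-\eta}$, or $m^{-\eta/2}$ after taking square roots. This matches the claimed rate.

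The main obstacle is the step converting the collision-free (with-replacement) marginal into an honest \emph{convex} combination of $\proj{\psi}^{\otimes\ell}$. Polarization naturally yields signed or complex combinations, and I would need a positivity-restoring projection, namely projecting onto the cone of separable symmetric mixtures. That projection is a contraction in HS norm since the cone is convex, so the error at most doubles. I would try to make this precise by showing the signed approximant is HS-close to $\rho_\ell$, and then using convexity of the target set in the Hilbert space of operators: the nearest point in the closed convex set of mixtures is no farther than twice the distance from the signed approximant, provided that approximant is itself close to some mixture. Establishing this last proviso is the delicate part. To get it I would fall back on the Chiribella-type $\sigma_\ell$ and bound $\|\rho_\ell-\sigma_\ell\|_{HS}$ via the $2^{\ell}$ combinatorics above.
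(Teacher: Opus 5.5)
Your proposal has a genuine gap: no step in it yields a bound independent of $d=\dim\mathcal H$, and the measure-and-prepare state you fall back on provably fails in that regime. Take $\Gamma_m=\proj{\psi_0}^{\otimes m}$, so $\rho_\ell=\proj{\psi_0}^{\otimes\ell}$ and $\lVert\rho_\ell\rVert_2=1$. For your $\sigma_\ell$, $\Tr(\rho_\ell\sigma_\ell)=D_m\int\lvert\langle\psi|\psi_0\rangle\rvert^{2(m+\ell)}\,d\psi=D_m/D_{m+\ell}=\prod_{i=1}^{\ell}\frac{m+i}{m+d-1+i}$. Hence $\lVert\rho_\ell-\sigma_\ell\rVert_2^2\ge1-2D_m/D_{m+\ell}\to1$ as $d\to\infty$ with $m,\ell$ fixed; already for $\ell=1$ the overlap is $(m+1)/(m+d)$. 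So ``the key point is the weights'' works against you: when $d\gg m$, almost all of the Chiribella mass sits on the identity-carrying terms, and no $2^\ell$ combinatorics can repair this $\sigma_\ell$.

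The polarization route does not close the gap either. You give no mechanism that keeps the signed weights controlled by $\ell$ rather than by the degree $m$ of $\ket\Phi$. The positivity-restoring projection is also circular. For a signed approximant $\tau$ and the closed convex set $K$ of tensor-power mixtures, you only have $\mathrm{dist}_2(\tau,K)\le\lVert\tau-\rho_\ell\rVert_2+\mathrm{dist}_2(\rho_\ell,K)$, and the last term is exactly what the theorem asserts is small. The rate $\mathrm{poly}(\ell)2^\ell/\sqrt m$ is therefore stated rather than derived.

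The missing idea is a spectral cut that creates an effective dimension. Set $P=\one_{(1/m,1]}(\rho_1)$ and $Q=I-P$. Then $\operatorname{rank}P\le m$, while $\Tr[(Q\rho_1Q)^2]\le\frac1m\Tr(Q\rho_1Q)\le\frac1m$ however large $\operatorname{rank}Q$ is; the trace norm never gives this discount.

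The paper shows that every block $\Pi_{\mathbf S}\rho_\ell\Pi_{\mathbf S}$ containing at least one $Q$ has purity at most $\Tr[(Q\rho_1Q)^2]+O(\ell^2/m)$. The proof uses averaged swap operators $\mathcal T_P,\mathcal T_Q$, which commute and have spectrum in $[-1/m,1]$ on $\Sym^m(\mathcal H)\otimes\Sym^m(\mathcal H)$. It also uses a collision estimate comparing distinct and independent index lists, which is where your with/without-replacement intuition legitimately belongs. Summing over the $(2^\ell-1)^2$ blocks $\Pi_{\mathbf S}\rho_\ell\Pi_{\mathbf T}$ with $\mathbf S,\mathbf T\ne(P,\ldots,P)$ is the true source of the $2^\ell$ factor and of the range $\ell\le(\frac12-\eta)\log_2m$.

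The surviving block $P^{\otimes\ell}\rho_\ell P^{\otimes\ell}$ is approximated sector by sector, according to the number of $P$-outcomes among all $m$ registers. This is needed because $\Gamma_m$ need not be supported on $(P\mathcal H)^{\otimes m}$. Each sector is handled by a trace-norm de Finetti theorem in dimension at most $m$. A $d$-linear bound of Christandl--K\"onig--Mitchison--Renner type would give only $O(\ell)$ there, so the $\sqrt{d-1}$ rate of Jeronimo--Wu--Xu is what yields $8\ell/\sqrt m$. Finally, the trace lost by the cut is restored with tensor powers of light eigenvectors, at Hilbert--Schmidt cost $\sqrt{\ell/m}$.
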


Jeronimo, Wu, and Xu~\cite{jeronimo2026optimal} recently proved a dimension-independent
quantum de Finetti theorem in Hilbert-Schmidt norm for the two-register reduced state, with the optimal worst-case rate \(\Theta(m^{-1/2})\).  Our result extends this
type of guarantee to reduced states on a number of registers that grows with \(m\), allowing as many as \((1/2-\eta)\log_2m\) registers. This distinction is essential for our
application: as discussed in \Cref{sec:intro-tensor-strategy}, our soundness analysis depends on reduced states on \(\Theta(\log m)\) registers, so the
two-register theorem alone does not suffice.

\subsection{Proof strategy for gap amplification}
Now we briefly explain our proof strategy. The proof follows the construction of the new verifier. First we turn the fixed-gap one-proof characterization of \(\NEXP\) into a robust
test on one larger register.
Next we ask two Merlins for long lists of such registers and enforce permutation symmetry.
The only substantial analytic question is then whether a small part of each symmetric proof
behaves sufficiently like a mixture of tensor powers.
Section~\ref{sec:tensor-power-approximation} proves the tensor-power
approximation theorem needed in this high-dimensional regime, but here we
use it only as a black box.

\paragraph{Step 1: obtain a robust one-proof test.}
We use
\[
  \QMAplus(k,c,s)\subseteq\NEXP
  =\QMAplus(1,c_0,s_0),
\]
where \(c_0>s_0\) are fixed constants, to replace the input verifier by
the fixed-gap one-proof verifier of Bassirian, Fefferman, and Marwaha
\cite{bassirian2024itcs}.  Run this verifier on \(L\) proof registers and accept
when at least a fixed fraction \(\theta\), with
\(s_0<\theta<c_0\), of the runs accept.  Regard these \(L\) registers
together as one larger register with Hilbert space \(\mathcal H\), and
let \(B\) be the acceptance operator of the threshold test.

On the honest tensor power, a Chernoff bound makes the rejection
probability exponentially small in \(L\).  Soundness does not assume
that the \(L\) registers are independent.  For a nonnegative proof,
fixing all but one register in the computational basis leaves a valid
nonnegative one-proof state.  The expected number of accepting runs is
therefore at most \(s_0L\), and Markov's inequality gives
\begin{equation}
\label{eq:intro-B-bound}
  \sup_{\substack{\ket{\psi}\ge0\\\lVert\ket{\psi}\rVert=1}}
  \bra{\psi}B\ket{\psi}
  \le r:=\frac{s_0}{\theta}<1.
\end{equation}
This is the only property of the starting protocol used by the
subsequent analytic argument.  The threshold construction is not a
generic amplification theorem for arbitrary one-proof \(\QMAplus\)
verifiers; it is available here because we first pass through the
fixed-gap characterization of \(\NEXP\).

\paragraph{Step 2: build the two-proof verifier.}
Each Merlin sends \(M\) registers with one-register space \(\mathcal H\).
The verifier projects each proof separately onto \(\Sym^M(\mathcal H)\).  
It then retains \(N=\Theta(\log M)\) registers from each proof, projects the resulting \(2N\) registers jointly onto \(\Sym^{2N}(\mathcal H)\), and applies \(B\) to every retained register.

If both Merlins send the honest tensor power, both symmetry tests
accept with certainty and all \(B\)-tests accept with probability
\(1-o(1)\).  A dishonest Merlin, however, may entangle the \(M\)
registers arbitrarily.  Projection onto \(\Sym^M(\mathcal H)\) makes
the proof permutation symmetric, but it does not make the registers
independent.

\paragraph{Step 3: prove soundness.}
The first symmetry test (the projection onto \(\Sym^M(\mathcal H)\)) leaves arbitrary correlations among the \(M\) registers. 
Theorem~\ref{lem:tensor-power-approximation} gives the approximation we
need: for every \(j\le(1/2-\eta)\log_2M\), the reduced state on \(j\)
registers is close, in Hilbert--Schmidt norm, to a mixture of identical
pure tensor powers.
Its error is independent of the one-register dimension $\dim\mathcal{H}$. This independence is essential because $\dim\mathcal{H}$ can be exponentially large.

Let \(\rho_N^{(1)}\) and \(\rho_N^{(2)}\) be the states on the \(N\)
registers selected from the two proofs, and define the subnormalized
operators
\[
  X=(B^{1/2})^{\otimes N}\rho_N^{(1)}(B^{1/2})^{\otimes N},
  \qquad
  Y=(B^{1/2})^{\otimes N}\rho_N^{(2)}(B^{1/2})^{\otimes N}.
\]
These are the operators associated with acceptance of all \(B\)-tests
on the selected registers.  The probability that the joint symmetry
test and all \(B\)-tests accept is then
\[
  \Tr\!\left[P_{\mathrm{sym}}^{(2N)}(X\otimes Y)\right]
  =\frac1{\binom{2N}{N}}
   \sum_{j=0}^{N}\binom Nj^2\Tr(X_jY_j),
\]
where \(X_j\) and \(Y_j\) are the \(j\)-register reduced operators of
\(X\) and \(Y\). The coefficient 
\[
\frac{\binom Nj^2}{\binom{2N}{N}}
\]
is largest when $j$ is close to $N/2$, so the dominant terms involve reduced states on $\Theta(N)$ registers.  Since the protocol takes $N=\Theta(\log M)$, its soundness analysis requires a dimension-independent approximation for reduced states on $\Theta(\log M)$ registers. 

We now analyze each of the two proofs separately. Let \(\rho_j\) denote its reduced state on \(j\) registers before applying the \(B\)-tests.  The tensor-power approximation gives
\[
  \sigma_j
  =
  \int
  \bigl(\lvert\phi\rangle\langle\phi\rvert\bigr)^{\otimes j}
  \,d\mu(\phi),
  \qquad
  \lVert\rho_j-\sigma_j\rVert_2=o(1).
\]
Because the original proof has nonnegative amplitudes and the
projection onto \(\operatorname{Sym}^M(\mathcal H)\) has nonnegative
matrix entries in the computational basis, \(\rho_j\) is entrywise
nonnegative.

By the symmetric-subspace identity above and the Hilbert--Schmidt
Cauchy--Schwarz inequality, it suffices to bound, separately for each
proof,
\[
  \left\|
    (B^{1/2})^{\otimes j}
    \rho_j
    (B^{1/2})^{\otimes j}
  \right\|_2^2.
\]
Replacing \(\rho_j\) by \(\sigma_j\) changes this bound by only \(o(1)\).
We are therefore reduced to bounding
\[
  \left\|
    (B^{1/2})^{\otimes j}
    \sigma_j
    (B^{1/2})^{\otimes j}
  \right\|_2^2
  =
  \iint
  |\langle\phi|B|\psi\rangle|^{2j}
  \,d\mu(\phi)d\mu(\psi).
\]

The states \(\lvert\phi\rangle\) appearing in the mixture need not have
nonnegative amplitudes, so \eqref{eq:intro-B-bound} cannot be applied
to them directly.  Entrywise nonnegativity of \(\rho_j\) supplies the
missing constraint.  For
\(\lvert w\rangle=\sum_z w_z\lvert z\rangle\), define
\[
  \lvert\operatorname{abs}(w)\rangle
  =
  \sum_z|w_z|\lvert z\rangle.
\]
Then
\[
  \langle w\rvert^{\otimes j}\rho_j
  \lvert w\rangle^{\otimes j}
  \le
  \langle\operatorname{abs}(w)\rvert^{\otimes j}
  \rho_j
  \lvert\operatorname{abs}(w)\rangle^{\otimes j}.
\]
The same inequality holds for \(\sigma_j\), up to the \(o(1)\)
approximation error.

We now explain how this comparison yields the constant \(1/4\).  Fix a
sufficiently small constant \(\gamma>0\), and define
\[
  \mathcal G_\gamma
  =
  \left\{
    \lvert\phi\rangle:
    \langle\phi|B|\phi\rangle\ge1-\gamma
  \right\},
  \qquad
  \lambda=\mu(\mathcal G_\gamma).
\]
For a fixed \(\lvert\phi\rangle\), write
\[
  I(\phi)
  :=
  \int
  |\langle\phi|B|\psi\rangle|^{2j}
  \,d\mu(\psi).
\]
If \(\lvert\phi\rangle\notin\mathcal G_\gamma\), the Cauchy--Schwarz
inequality for the positive operator \(B\) gives
\[
  I(\phi)\le(1-\gamma)^j.
\]

Now suppose that \(\lvert\phi\rangle\in\mathcal G_\gamma\), and let
\(\lvert x_\phi\rangle\) be the normalized nonnegative state obtained
by taking the absolute values of the coefficients of
\(B\lvert\phi\rangle\).  Writing
\(\langle\phi|B|\psi\rangle=\langle w|\psi\rangle\) with
\(\lvert w\rangle=B\lvert\phi\rangle\), we have
\(I(\phi)=\langle w\rvert^{\otimes j}\sigma_j\lvert w\rangle^{\otimes j}\).
Exchanging \(\sigma_j\) for \(\rho_j\) at a cost of \(o(1)\),
applying the absolute-value comparison above to \(\lvert w\rangle\),
whose entrywise absolute value is
\(\lVert w\rVert\,\lvert x_\phi\rangle\), and exchanging back, we
obtain
\[
  I(\phi)
  \le
  \int|\langle x_\phi|\psi\rangle|^{2j}\,d\mu(\psi)
  +o(1),
\]
the corresponding moment centered at \(\lvert x_\phi\rangle\); here
we also used \(\lVert w\rVert\le1\).  Since
\(\lvert x_\phi\rangle\) is nonnegative,
\eqref{eq:intro-B-bound} gives
\[
  \langle x_\phi|B|x_\phi\rangle\le r.
\]
Every \(\lvert\psi\rangle\in\mathcal G_\gamma\), on the other hand, is
almost unchanged by \(B\), in the sense that
\(\lVert(I-B)\lvert\psi\rangle\rVert\le\sqrt\gamma\).  Splitting
\(\langle x_\phi|\psi\rangle
  =\langle x_\phi|B|\psi\rangle
  +\langle x_\phi|(I-B)|\psi\rangle\)
and applying the Cauchy--Schwarz inequality to each term, we get
\[
  |\langle x_\phi|\psi\rangle|
  \le
  \lVert B\lvert x_\phi\rangle\rVert
  +\lVert(I-B)\lvert\psi\rangle\rVert
  \le
  \sqrt r+\sqrt\gamma
  =:\kappa,
\]
where the first term is bounded using \(B^2\preceq B\) and the bound
above.  For \(\gamma<(1-\sqrt r)^2\), the constant \(\kappa\) is
smaller than one.  Hence
\[
  I(\phi)
  \le
  (1-\lambda)+\lambda\kappa^{2j}+o(1).
\]

Integrating separately over
\(\mathcal G_\gamma\) and its complement gives
\[
  \left\|
    (B^{1/2})^{\otimes j}
    \sigma_j
    (B^{1/2})^{\otimes j}
  \right\|_2^2
  \le
  (1-\lambda)(1-\gamma)^j
  +\lambda(1-\lambda)
  +\lambda^2\kappa^{2j}
  +o(1).
\]
The first and third terms vanish as \(j\) grows, while
\[
  \lambda(1-\lambda)\le\frac14.
\]
Thus the squared Hilbert--Schmidt norm above is at most
\(1/4+o(1)\).

Applying this bound separately to the two proofs and using the
Hilbert--Schmidt Cauchy--Schwarz inequality gives
\[
  \operatorname{Tr}(X_jY_j)\le\frac14+o(1)
\]
for every \(j\) carrying nonnegligible weight in the
symmetric-subspace identity.  Averaging these bounds in that identity
gives total soundness \(1/4+o(1)\).

\subsection{Proof strategy for the tensor-power approximation theorem}
\label{sec:intro-tensor-strategy}

The approximation theorem used in Step~3 is a separate
information-theoretic result. Consider applying the optimal trace-distance theorem of Jeronimo, Wu, and
Xu~\cite{jeronimo2026optimal} directly: its error
\(O(\ell\sqrt d/m)\) grows with the local dimension
\(d=\dim\mathcal H\), and in our application \(d\) is exponentially
large while \(m\) is polynomial.  This dependence is not an artifact:
their lower bounds show that in trace norm the factor \(\sqrt d\) is
unavoidable.  A dimension-free statement is therefore only possible
in a weaker norm, and the Hilbert--Schmidt norm turns out to be
exactly weak enough.

The proof rests on two observations.  The first is an effective
dimension bound.  Since the one-register reduced state \(\rho_1\) has
unit trace, at most \(m\) of its eigenvalues can exceed \(1/m\).
Splitting the spectrum of \(\rho_1\) at \(1/m\) therefore yields a
\emph{heavy} subspace of dimension at most \(m\), and a \emph{light}
complement whose dimension may be huge but whose eigenvalues are all
at most \(1/m\).  On registers limited to the heavy subspace, the trace-distance theorem applies with the effective dimension \(m\) in place of \(d\): decomposing the input \(\Gamma_m \in \operatorname{Sym}^m(\mathcal H)\) according to the number of
registers in the heavy subspace and applying it to each component
gives a combined error of \(O(\ell/\sqrt m)\), with no reference to
\(d\).

The second observation is the new ingredient, and it explains why the
light part can be ignored.  It rests on an elementary fact.  If a
positive semidefinite operator \(X\) has all its eigenvalues at most
\(1/m\) and trace at most one, then
\[
  \Tr(X^2)\le\frac1m\,\Tr(X)\le\frac1m.
\]
The squared Hilbert--Schmidt norm of \(X\) is exactly \(\Tr(X^2)\),
the quantity known as the \emph{purity} of \(X\): for a normalized
state it equals one exactly when the state is pure, and it is small
when the spectrum is flat.  Thus a flat spectrum forces a small
Hilbert--Schmidt norm no matter how large the rank is.  The trace
norm offers no such discount.
To exploit this, let \(P\) and \(Q\) be the projectors onto the heavy
and light subspaces.  Inserting \(P\) or \(Q\) on each of the
\(\ell\) registers cuts \(\rho_\ell\) into \(2^\ell\) blocks, and our
goal is to show that every block containing at least one \(Q\) is
negligible.  This is the content of our bound
(Lemma~\ref{lem:projected-pattern-bound}): using the permutation
invariance of \(\Gamma_m\) and the swap identity from
Section~\ref{sec:preliminaries}, the purity of any such block turns
out to be at most
\[
  \Tr[(Q\rho_1Q)^2]+O(\ell^2/m).
\]
The first term involves only the light part of the one-register
state, so the elementary fact above bounds it by \(1/m\).  Hence
every block containing a \(Q\) has Hilbert--Schmidt norm
\(O(\ell/\sqrt m)\), independently of \(d\).

The theorem follows in three steps. 
First, we delete all blocks that contain a \(Q\). 
There are \(2^\ell-1\) of them, each of Hilbert--Schmidt size \(O(\ell/\sqrt m)\), so the deletion is affordable precisely when \(2^\ell\ll\sqrt m\). 
This is the origin of the range \(\ell\le(1/2-\eta)\log_2m\). 
Second, the surviving block lives entirely in the heavy subspace, where the effective
dimension is \(m\), and the first observation approximates it by a tensor-power mixture with error \(O(\ell/\sqrt m)\). 
Third, the deletion removed some trace, and we restore it by adding a mixture of tensor powers of light eigenvectors with exactly the missing weight, at a cost of \(O(\sqrt{\ell/m})\). 
In summary, the dimension \(d\) never enters.  The heavy subspace has dimension at most \(m\) and is covered by the existing trace-distance theorem, while every block that touches the light subspace has purity \(O(\ell^2/m)\) and is therefore negligible in Hilbert--Schmidt norm.

\subsection{Related work}
The analysis of Harrow and Montanaro's product test is sharpened and generalized
by Soleimanifar and Wright~\cite{soleimanifar2022testing}, and Beckey,
Jeronimo, and Wu~\cite{beckey2026optimal} recently determined the exact acceptance probability of the product test.

Restrictions on the allowed proofs lead to a different landscape.
Grilo, Kerenidis, and Sikora~\cite{grilo2015qma} showed that requiring an honest witness to
be a subset state does not change QMA or \(\QMAtwo\). 
By contrast, \(\QMAplus(k)\) restricts both honest and dishonest proofs to nonnegative amplitudes. Related recent models restrict the internal separability of a proof~\cite{bassirian2024quantum}, non-collapsing measurements~\cite{bassirian2025superposition}, or study how unentanglement
interacts with post-measurement branching in interactive proofs~\cite{grewal2025unentanglement}.
Dimension-independent disentanglers have also been used for error reduction in the pure quantum polynomial hierarchy~\cite{grewal2026pure}. 
A recent survey gives a broader context of the rapidly developing \(\QMAtwo\) landscape~\cite{jeronimo2026qma}.

Recent work on unentangled stoquastic proof systems is close in motivation, but concerns a different restriction.  
In \(\QMAplus(k)\), the proofs have nonnegative amplitudes while the verifier
is an arbitrary quantum circuit.  In \(\mathsf{StoqMA}(2)\), the verification procedure itself has stoquastic structure. 
Liu and Wu initiated a systematic study of \(\mathsf{StoqMA}(2)\)~\cite{liu2026unentangled}. Grilo and Rozos identified the separable stoquastic sparse Hamiltonian problem as a natural \(\mathsf{StoqMA}(2)\)-complete problem \cite{grilo2026complexity}. 
Gay and Jeronimo~\cite{gay2026collapse} subsequently proved \(\mathsf{StoqMA}(k)=\mathsf{StoqMA}\) for polynomially many proofs, using a positive, value-based de Finetti theorem for separately symmetric extensions.
Their theorem controls the optimum of an entrywise nonnegative test, whereas our approximation theorem controls the reduced state itself in Hilbert--Schmidt norm and is
designed for an unrestricted verifier. 
The two results therefore use related positivity and symmetry ideas but neither directly substitutes for the other.

The de Finetti results most directly comparable with our approximation theorem are discussed in Section~\ref{sec:tensor-power-approximation}, immediately before its proof.

\paragraph{Acknowledgments}
The author is grateful to Satoya Imai for helpful discussions regarding Ref.~\cite{marconi2026entanglement}, and to Carlo Marconi for presenting that work, which inspired the use of the symmetric projector for gap amplification. The author was supported by JSPS KAKENHI Grant Number JP25K24383 and MEXT AI for Science program (SPReAD-1000).

ChatGPT 5.5 and 5.6 were used to improve and simplify soundness analysis with substantial human input, improve writing, and in aiding the calculations.
The author takes full responsibility for the content of the manuscript.

\section{Preliminaries}
\label{sec:preliminaries}

All Hilbert spaces are finite dimensional, and a computational basis
is fixed whenever nonnegative amplitudes are discussed.  A pure state
\[
  \ket{\psi}=\sum_z \psi_z\ket{z}
\]
has \emph{nonnegative amplitudes} if \(\psi_z\in\mathbb R_{\ge0}\) for every computational-basis string \(z\). We use \(\ket{\psi}\ge0\) as shorthand for this condition.

For a Hilbert space \(\mathcal H\), let \(\D(\mathcal H)\) denote its density operators.
The subspace of \(\mathcal H^{\otimes m}\) invariant under every permutation of the \(m\) registers is denoted by \(\Sym^m(\mathcal H)\), and its orthogonal projector is
\[
  P_{\mathrm{sym}}^{(m)}
  =\frac1{m!}\sum_{\pi\in S_m}U_\pi,
\]
where the sum is over all permutations.
The acceptance operator of a verifier is the POVM element \(0\preceq A\preceq I\) whose expectation value is the verifier's acceptance probability.

\paragraph{Norms and inner products.}

For a linear operator \(X\) on a finite-dimensional Hilbert space,
\(\lVert X\rVert_1=\Tr\sqrt{X^\dagger X}\) denotes the trace norm,
\(\lVert X\rVert_2=\sqrt{\Tr(X^\dagger X)}\) the Hilbert--Schmidt
norm, and \(\lVert X\rVert_\infty\) the operator norm. 
These are the \(\ell_1\), \(\ell_2\), and \(\ell_\infty\) norms of the vector of
singular values of \(X\), so that
\[
  \lVert X\rVert_\infty
  \le\lVert X\rVert_2
  \le\lVert X\rVert_1.
\]
For positive semidefinite \(X\) we have
\(\lVert X\rVert_1=\Tr X\) and thus subnormalized positive
semidefinite operators have all three norms at most one. 
The Hilbert--Schmidt norm is induced by the inner product
\(\langle A,B\rangle_{\mathrm{HS}}=\Tr(A^\dagger B)\), whose
Cauchy--Schwarz inequality
\[
  |\Tr(A^\dagger B)|
  \le\lVert A\rVert_2\lVert B\rVert_2
\]
is used repeatedly throughout this paper, as is H\"older duality in the form
\(|\Tr(AB)|\le\lVert A\rVert_\infty\lVert B\rVert_1\). 
For a density operator \(\sigma\), the latter implies
\(|\Tr(A\sigma)|\le\lVert A\rVert_\infty\). 
All three norms are multiplicative under tensor products, and
\(\lVert\,\ket u\!\bra v\,\rVert_2
 =\lVert\ket u\rVert\,\lVert\ket v\rVert\)
for rank-one operators, and hence
\(\lVert(\proj w)^{\otimes j}\rVert_2=\lVert\ket w\rVert^{2j}\).

\paragraph{The swap identity.}

Let \(F\) be the swap operator on
\(\mathcal H\otimes\mathcal H\), defined by
\(F(\ket\alpha\otimes\ket\beta)=\ket\beta\otimes\ket\alpha\).  For all
operators \(A\) and \(B\) on \(\mathcal H\),
\[
  \Tr[(A\otimes B)F]
  =\Tr(AB).
\]
We refer to this as the \emph{swap identity}. 
For self-adjoint
\(X\) it gives
\(\lVert X\rVert_2^2=\Tr[(X\otimes X)F]\), which turns
Hilbert--Schmidt norms of reduced states into expectations of swap
operators on two copies of the underlying state. This is the form
used in the proof of \Cref{lem:tensor-power-approximation}.

\paragraph{Complexity classes and their relations.}

\begin{definition}[\(\QMAplus(k,c,s)\)]
A promise problem
\(\Pi=(\Pi_{\mathrm{yes}},\Pi_{\mathrm{no}})\) is in
\(\QMAplus(k,c,s)\) if there is a uniform polynomial-time quantum verifier receiving \(k(n)\) mutually unentangled proof states, each on polynomially many qubits, such that:
\begin{itemize}

\item if \(x\in\Pi_{\mathrm{yes}}\), there are \(k(n)\) pure proof
states with nonnegative computational-basis amplitudes whose tensor
product is accepted with probability at least \(c(n)\);

\item if \(x\in\Pi_{\mathrm{no}}\), every
such product of pure proof states with nonnegative amplitudes is
accepted with probability at most \(s(n)\).

\end{itemize}
\end{definition}

Following Jeronimo and Wu~\cite{jeronimo2024dimension}, we write \(\QMAreal(2,c,s)\) for the analogous two-proof class in which both honest and adversarial proof states are required to have real amplitudes. We use \(\mathsf{QMA}(2,c,s)\) for the standard model with arbitrary
complex-amplitude proofs. 
The following direct simulation shows the relation between these two proof models that we need.

\begin{proposition}
\label{prop:complex-to-real}
For every \(c>s\),
\[
  \mathsf{QMA}(2,c,s)
  \subseteq
  \QMAreal\!\left(2,\frac{1+c}{2},\frac{1+s}{2}\right).
\]
Consequently, \(\QMAtwo\subseteq\QMAreal(2)\).
\end{proposition}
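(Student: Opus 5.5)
The plan is the standard real-encoding of complex amplitudes, applied to each of the two proofs separately, together with a swap/control test that handles the ``complex conjugate'' ambiguity and produces the $(1+c)/2$, $(1+s)/2$ parameters. Given a complex proof $\ket{\psi}=\sum_z(a_z+ib_z)\ket z$ on $n$ qubits, the honest real prover sends
\[
  \ket{\tilde\psi}=\sum_z\bigl(a_z\ket z\ket0+b_z\ket z\ket1\bigr),
\]
which is a real unit vector on $n+1$ qubits. Conversely, every real state on $n+1$ qubits has this form for some complex $\ket\psi$. Real encoding is a real-linear isometry, and complex multiplication by $i$ becomes the real orthogonal operator $J=I\otimes(-iY)=I\otimes\begin{pmatrix}0&-1\\1&0\end{pmatrix}$ acting on the extra qubit.

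For any operator $M$ on $\mathcal H$, write $M=M_R+iM_I$ with real matrices $M_R,M_I$, and let $\tilde M=M_R\otimes I+M_I\otimes J$. For real $\ket{\tilde\psi}$, the real part of the complex expectation value is recovered by $\bra{\tilde\psi}\tilde M\ket{\tilde\psi}$, since $\bra{\tilde\psi}(\cdot)\otimes J\ket{\tilde\psi}$ picks up the imaginary-part cross terms. For a Hermitian acceptance operator $A$ on two $n$-qubit registers, define $\tilde A$ by substituting $M\mapsto\tilde M$ on each register, i.e.\ $\tilde A=\sum_k \mathrm{Re}$-pieces tensored with $I$ or $J$ on the two ancilla qubits. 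The difficulty is that for a product of real proofs, $\langle\tilde\psi_1\tilde\psi_2|\tilde A|\tilde\psi_1\tilde\psi_2\rangle$ equals the average of $\langle A\rangle$ on $\psi_1\otimes\psi_2$ and on $\psi_1\otimes\overline{\psi_2}$. This is because $J\otimes J$ represents $i\cdot i$ only when both ancillas carry the same orientation; the two orientations differ by complex conjugation. I would therefore aim for the verifier that, with probability $1/2$, accepts outright, and with probability $1/2$ performs the two-outcome measurement $\{\tilde A',I-\tilde A'\}$. Here $\tilde A'$ is the real operator on $2n+2$ qubits obtained from the real-linear map, so that $\tilde A'=\frac12\bigl(\mathcal R(A)+\mathcal R(A^{\text{conj}_2})\bigr)$, which satisfies $0\preceq\tilde A'\preceq I$ because each summand does. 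The verifier implements it by first applying a real unitary that encodes the choice, then running the real-encoded circuit of the $\mathsf{QMA}(2)$ verifier; each complex gate $G$ is simulated by $\tilde G$, which is real orthogonal.

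Completeness then follows: the honest pair $\tilde\psi_1,\tilde\psi_2$ gives acceptance $\frac12+\frac12\cdot\frac12(p(\psi_1\psi_2)+p(\psi_1\overline{\psi_2}))$. To make this at least $(1+c)/2$, I would let the honest prover choose $\psi_2$ so that both terms are at least $c$. If that is not automatic, I would instead use the cleaner fact that the simulated circuit realizes exactly $\mathrm{Re}$, i.e.\ $\langle A\rangle$ on $\psi_1\otimes\psi_2$, when the verifier applies the encoding with a single shared ancilla qubit for the imaginary unit. That is, the verifier uses the standard Rudolph–Grover/McKague encoding of a whole $2n$-qubit circuit with one extra real qubit, preparing it by asking prover 1 to carry it. For soundness, every real product $\ket{u_1}\otimes\ket{u_2}$ decodes to complex states $\psi_1,\psi_2$ (or their conjugate), and the accepted weight in the second branch is a convex combination of $\mathsf{QMA}(2)$ acceptance probabilities on product states, hence at most $s$. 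The total is then at most $(1+s)/2$.

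The main obstacle is exactly this one-ancilla bookkeeping: the real simulation of a complex circuit needs a single global ``imaginary'' qubit, whereas unentangled provers naturally each bring their own. I would handle it by having the verifier itself supply the global $\ket{0}$ ancilla and using the decomposition of each prover's state into real and imaginary parts. I would then check that cross terms only ever produce $\langle A\rangle$ evaluated on products of (possibly conjugated) genuine states, which is where the $1/2$ mixing with trivial acceptance absorbs the residual normalization. The consequence $\QMAtwo\subseteq\QMAreal(2)$ follows since an inverse-polynomial gap $c-s$ becomes gap $(c-s)/2$, still inverse-polynomial.
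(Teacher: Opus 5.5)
There is a genuine gap, and it is in completeness. The verifier you actually specify accepts on an \emph{independent} fair coin and otherwise measures $\{\tilde A',I-\tilde A'\}$. Grant that $\tilde A'$ is a valid POVM element. That is not automatic: the operator whose expectation on $\psi_1\otimes\psi_2$ equals that of $A$ on $\psi_1\otimes\overline{\psi_2}$ is the partial transpose of $A$, which need not be positive. Even so, the honest encodings are accepted with probability $\frac12+\frac14\bigl(p(\psi_1\otimes\psi_2)+p(\psi_1\otimes\overline{\psi_2})\bigr)$, and nothing forces the second term to be large. For instance, if $V$ checks that some qubit of the second proof is $(\ket0+i\ket1)/\sqrt2$, then $p(\psi_1\otimes\overline{\psi_2})=0$. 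Being able to take the honest witnesses real is essentially what the proposition asserts, so you cannot assume it. This verifier therefore gives only completeness $\frac12+\frac c4$ against soundness $\frac12+\frac s2$, which is no gap unless $c>2s$.

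Your fallback does not repair this. The imaginary part of $\ket{\psi_1}\otimes\ket{\psi_2}$ is $\ket{a_1}\ket{b_2}+\ket{b_1}\ket{a_2}$, which involves both provers' imaginary parts. So a single global imaginary qubit cannot be carried by prover~1 alone, since that would force the second witness to be real. Nor can it be supplied fresh by the verifier, because merging the two provers' ancillas into one is not a unitary operation.

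The missing idea is to make the trivially accepting branch \emph{correlated} with the conjugation mismatch rather than independent of it. Measure the two ancilla qubits jointly with the projector onto $\operatorname{span}\{\ket{00}-\ket{11},\ket{01}+\ket{10}\}$, and let $\ket{e_0},\ket{e_1}$ be the normalized spanning vectors. For any real product proof, decoded as $\ket{\psi_j}=\ket{a_j}+i\ket{b_j}$, the projected vector is
\[
\frac1{\sqrt2}\Bigl[\bigl(\ket{a_1}\ket{a_2}-\ket{b_1}\ket{b_2}\bigr)\ket{e_0}+\bigl(\ket{a_1}\ket{b_2}+\ket{b_1}\ket{a_2}\bigr)\ket{e_1}\Bigr].
\]
Its squared norm is exactly $\frac12$, and after normalization it is the one-ancilla real encoding of $\ket{\psi_1}\otimes\ket{\psi_2}$. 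On this state the standard real simulation of $V$ accepts with probability exactly $p(\psi_1\otimes\psi_2)$. The complementary outcome, which carries $\psi_1\otimes\overline{\psi_2}$, also has probability $\frac12$, and there the verifier simply accepts. This yields exactly $\frac12+\frac12 p(\psi_1\otimes\psi_2)$ on every real product proof, hence both $(1+c)/2$ and $(1+s)/2$.

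The paper realizes the same split by measuring each ancilla separately in the basis $(\ket0\pm i\ket1)/\sqrt2$. Equal outcomes leave $\psi_1\otimes\psi_2$ or $\overline{\psi_1}\otimes\overline{\psi_2}$, on which it runs $V$ or $\overline V$ respectively. Unequal outcomes occur with probability exactly $\frac12$ and are accepted.
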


\begin{proof}
Write each complex proof as
\(\ket{\psi_j}=\ket{a_j}+i\ket{b_j}\) with real vectors
\(\ket{a_j}\) and \(\ket{b_j}\), and encode it by the real state
\[
  \ket{\widetilde\psi_j}
  =\ket0\ket{a_j}+\ket1\ket{b_j},
\]
adding one qubit per proof. The encoding is applied separately to the two proofs, so it preserves the product structure required of honest proofs. 
The new verifier acts as follows: it measures the added qubit of each proof in the basis
\[
  \ket{y_\pm}=\frac{\ket0\pm i\ket1}{\sqrt2};
\]
if the two outcomes differ, it accepts; if both outcomes are \(-\),
it runs the original verifier \(V\) on the two remaining states; and
if both are \(+\), it runs the entrywise complex conjugate
\(\overline V\) of \(V\).

Consider an arbitrary pair of real unit proofs.  Every real unit
vector on the enlarged register can be written as
\(\ket{\widetilde\phi_j}=\ket0\ket{a_j}+\ket1\ket{b_j}\) with real
\(\ket{a_j}\) and \(\ket{b_j}\) satisfying
\(\lVert a_j\rVert^2+\lVert b_j\rVert^2=1\).  Set
\(\ket{\phi_j}=\ket{a_j}+i\ket{b_j}\).

\emph{Outcome distribution and remaining states.}
Since \(\braket{y_\pm}{0}=1/\sqrt2\) and
\(\braket{y_\pm}{1}=\mp i/\sqrt2\),
\[
  \bigl(\bra{y_\pm}\otimes I\bigr)\ket{\widetilde\phi_j}
  =\frac{\ket{a_j}\mp i\ket{b_j}}{\sqrt2}.
\]
Because \(\langle a_j | b_j\rangle\) is real, the cross terms cancel in
\[
  \bigl\lVert\,\ket{a_j}\mp i\ket{b_j}\bigr\rVert^2
  =\lVert a_j\rVert^2+\lVert b_j\rVert^2
  =1.
\]
Hence, on each proof, the two outcomes occur with probability \(1/2\)
each, and the normalized remaining state after outcome \(-\) is
\(\ket{\phi_j}\), while after outcome \(+\) it is the entrywise
conjugate \(\ket{\overline{\phi_j}}\).  The joint proof state is the
product \(\ket{\widetilde\phi_1}\otimes\ket{\widetilde\phi_2}\) and
the two measurements act on disjoint registers, so the outcomes are
independent and each of the four outcome pairs occurs with
probability exactly \(1/4\).

\emph{Acceptance probability.}
Let \(p\) be the acceptance probability of \(V\) on
\(\ket{\phi_1}\otimes\ket{\phi_2}\).  Running \(\overline V\) on
\(\ket{\overline{\phi_1}}\otimes\ket{\overline{\phi_2}}\) conjugates
every amplitude of the computation of \(V\) on
\(\ket{\phi_1}\otimes\ket{\phi_2}\), so its acceptance probability is
\(\overline p=p\).  The two unequal-outcome branches accept with
probability one.  Hence the overall acceptance probability is exactly
\[
  \frac14\cdot1
  +\frac14\cdot1
  +\frac14\,p
  +\frac14\,p
  =\frac12+\frac p2.
\]
Since the honest encodings decode to the original witnesses
(\(\ket{\phi_j}=\ket{\psi_j}\)), while an arbitrary pair of real
proofs decodes to a product of complex unit vectors, to which the
soundness of \(V\) applies, completeness and soundness become
\((1+c)/2\) and \((1+s)/2\), respectively.
\end{proof}

In particular, we have the inclusion
\[
  \QMAtwo\subseteq\QMAreal(2).
\]
The unparameterized notations \(\QMAreal(2)\) and \(\QMAtwo\) denote
the unions over polynomial-time computable completeness and soundness
parameters separated by an inverse-polynomial gap.  We work with a
standard finite universal gate set closed under entrywise complex
conjugation.  Equivalently, one may approximate conjugated gates to
inverse-polynomial accuracy and absorb the resulting error into the
promise gap.

When \(k(n)\) and all proof lengths are polynomially bounded and
\(c(n)-s(n)\) is inverse polynomial, direct exponential-time
simulation gives the containment
\[
  \QMAplus(k,c,s)\subseteq\NEXP.
\]
The same simulation gives
\[
  \QMAtwo\subseteq\QMAreal(2)\subseteq\NEXP,
\]
where the first inclusion is Proposition~\ref{prop:complex-to-real} and the second
does not require amplification.

We also give the following relation, which is the source of the complexity-theoretic barrier at the line \(c=4s\).

\begin{proposition}[Sign and phase removal]
\label{prop:sign-removal}
\begin{align}
  \QMAplus(2,c,s)
  &\subseteq
  \QMAreal\!\left(2,c,\min\{1,4s\}\right),
  \label{eq:real-sign-removal}\\
  \QMAplus(2,c,s)
  &\subseteq
  \mathsf{QMA}\!\left(2,c,\min\{1,8s\}\right).
  \label{eq:complex-sign-removal}
\end{align}
\end{proposition}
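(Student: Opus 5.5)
The plan is to keep the verifier unchanged and argue only about soundness. Completeness is immediate: the honest nonnegative proofs are in particular real (respectively complex) product proofs, so they are still accepted with probability at least \(c\). For soundness in \eqref{eq:real-sign-removal}, fix an arbitrary pair of real unit proofs \(\ket{\phi_1}\otimes\ket{\phi_2}\) and split each into its positive and negative parts, \(\ket{\phi_j}=\ket{\phi_j^+}-\ket{\phi_j^-}\). Here \(\ket{\phi_j^\pm}\ge0\) have disjoint supports, so \(\lVert\phi_j^+\rVert^2+\lVert\phi_j^-\rVert^2=1\). Expanding the product then writes \(\ket{\phi_1}\otimes\ket{\phi_2}\) as a signed sum of the four vectors \(\ket{\phi_1^{a}}\otimes\ket{\phi_2^{b}}\) with \(a,b\in\{+,-\}\). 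Each of these is a (subnormalized) product of nonnegative vectors.

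The key step bounds \(\bra{\phi_1\phi_2}A\ket{\phi_1\phi_2}\) using \(A\succeq0\). Write \(\ket v=\sum_{a,b}\epsilon_{ab}\ket{v_{ab}}\) with \(\ket{v_{ab}}=\ket{\phi_1^a}\ket{\phi_2^b}\). Cauchy--Schwarz for the positive form \(\langle\cdot,A\,\cdot\rangle\) gives
\[
\bra vA\ket v\le\Bigl(\sum_{a,b}\sqrt{\bra{v_{ab}}A\ket{v_{ab}}}\Bigr)^2 .
\]
Soundness against nonnegative product proofs, applied to normalized vectors, gives \(\bra{v_{ab}}A\ket{v_{ab}}\le s\,\lVert\phi_1^a\rVert^2\lVert\phi_2^b\rVert^2\). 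Therefore
\[
\bra vA\ket v\le s\bigl(\lVert\phi_1^+\rVert+\lVert\phi_1^-\rVert\bigr)^2\bigl(\lVert\phi_2^+\rVert+\lVert\phi_2^-\rVert\bigr)^2\le s\cdot2\cdot2=4s,
\]
where the last inequality uses \((x+y)^2\le2(x^2+y^2)\) on each factor. Since acceptance probabilities are also at most \(1\), this yields the bound \(\min\{1,4s\}\).

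For \eqref{eq:complex-sign-removal} I would do the same with complex proofs, writing \(\ket{\phi_j}=\ket{a_j}+i\ket{b_j}\) and then splitting \(a_j,b_j\) into positive and negative parts. This gives four nonnegative pieces per proof, and the same Cauchy--Schwarz argument produces the factor \(4\cdot4=16\), not \(8\). To get \(8\), I would instead combine the two results already available. Proposition~\ref{prop:complex-to-real} cannot be applied directly, since its encoding changes the verifier. The better route is to note that the real and imaginary parts are orthogonal in the real inner product, so \(\lVert a_j\rVert^2+\lVert b_j\rVert^2=1\). Then \(\bra vA\ket v\) splits, via \(|x+iy|^2\)-type bounds, into a real part \(\Re\) and imaginary contribution that can be handled more tightly. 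Concretely, for a Hermitian \(A\) with real entries the cross terms between real and imaginary parts cancel, and one could first reduce to such a verifier by the conjugation trick of Proposition~\ref{prop:complex-to-real}.

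Finding this factor-\(2\) saving in the complex case is the step I expect to be the main obstacle. A naive split gives \(16s\), and achieving \(8s\) needs an argument that treats the phase as costing only a factor \(2\) on top of the real bound \(4s\). The first thing I would try is to write \(\ket{\phi_1}\otimes\ket{\phi_2}=\ket{R}+i\ket{I}\), with \(\ket R,\ket I\) real vectors that are sums of real product terms. I would then use \(\bra vA\ket v\le 2(\bra RA\ket R+\bra IA\ket I)\) and bound each term by the real-case argument, tracking the norms carefully so the total stays \(8s\) rather than \(16s\).
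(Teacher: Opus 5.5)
Your argument for \eqref{eq:real-sign-removal} is correct and matches the paper's: the triangle inequality for $\lVert A^{1/2}\,\cdot\,\rVert$ over the four nonnegative product pieces, then $(x+y)^2\le2(x^2+y^2)$ using disjoint supports.

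For \eqref{eq:complex-sign-removal} there is a genuine gap. Your concrete plan keeps the original acceptance operator $A$ and uses $\bra vA\ket v\le2(\bra RA\ket R+\bra IA\ket I)$. Write $\alpha,\beta$ and $\gamma,\delta$ for the norms of the real and imaginary parts of the two proofs. The real-case estimate then gives $\bra RA\ket R\le4s(\alpha\gamma+\beta\delta)^2$ and $\bra IA\ket I\le4s(\alpha\delta+\beta\gamma)^2$. Their sum is $8s$ when all four norms equal $1/\sqrt2$, so this route ends at $16s$, and no norm bookkeeping repairs it.

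In fact $16s$ is actually attained when the verifier is left unchanged. Take $\ket x=(\ket0+i\ket1-\ket2-i\ket3)/2$ and $A=\proj x\otimes\proj x$. Every nonnegative unit vector $\ket u$ satisfies $|\langle x|u\rangle|^2=\frac14\left((u_0-u_2)^2+(u_1-u_3)^2\right)\le\frac14$, so $s=1/16$, while $\ket x\ket x$ is accepted with probability $1$. Hence your opening plan of keeping the verifier unchanged must be abandoned for the complex case.

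The remark you make only in passing is the essential step. Run $V$ or $\overline V$ with probability $1/2$ each (possible since the gate set is closed under conjugation), so that the acceptance operator becomes the real symmetric matrix $A_{\mathbb R}=(A+\overline A)/2$. It coincides with $A$ on real vectors. So completeness (honest proofs are nonnegative) and the nonnegative soundness hypothesis carry over, and with them your real-case estimate. For real symmetric $A_{\mathbb R}$ the cross terms cancel exactly, giving $\bra vA_{\mathbb R}\ket v=\bra RA_{\mathbb R}\ket R+\bra IA_{\mathbb R}\ket I$. The total is then $4s\left[(\alpha\gamma+\beta\delta)^2+(\alpha\delta+\beta\gamma)^2\right]=4s(1+4\alpha\beta\gamma\delta)\le8s$, since $\alpha\beta,\gamma\delta\le\frac12$. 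On the example above, $A_{\mathbb R}$ accepts $\ket x\ket x$ with probability exactly $\frac12=8s$, so the constant is tight.
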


\begin{proof}[Proof of the real bound \eqref{eq:real-sign-removal}]
Fix an input and let \(0\preceq\Pi_{\mathrm{acc}}\preceq I\) be the
acceptance operator of a $\QMAplus(2,c,s)$ verifier.
The soundness says that
\[
  \bra{u}\bra{v}\Pi_{\mathrm{acc}}\ket{u}\ket{v}\le s
\]
for all nonnegative unit vectors \(\ket{u},\ket{v}\).

First let \(\ket{a},\ket{c}\) be arbitrary real vectors, not
necessarily normalized.  Write their positive- and negative-amplitude
parts as
\[
  \ket{a}=\ket{a^+}-\ket{a^-},
  \qquad
  \ket{c}=\ket{c^+}-\ket{c^-},
\]
so that \(\ket{a^\pm},\ket{c^\pm}\) are entrywise nonnegative with
disjoint supports.  By the soundness,
$\lVert\Pi_{\mathrm{acc}}^{1/2}(\ket{u}\otimes\ket{v})\rVert
  \le\sqrt s\,\lVert\ket{u}\rVert\,\lVert\ket{v}\rVert$
for all entrywise nonnegative \(\ket{u},\ket{v}\).  Expanding
\[
  \ket{a}\otimes\ket{c}
  =\ket{a^+}\otimes\ket{c^+}
  -\ket{a^+}\otimes\ket{c^-}
  -\ket{a^-}\otimes\ket{c^+}
  +\ket{a^-}\otimes\ket{c^-}
\]
and applying the triangle inequality to the four nonnegative product
terms gives
\begin{align}
 \left\|\Pi_{\mathrm{acc}}^{1/2}(\ket{a}\otimes\ket{c})\right\|
 &\le
 \sum_{x,y\in\{+,-\}}
 \left\|\Pi_{\mathrm{acc}}^{1/2}
   (\ket{a^{x}}\otimes\ket{c^{y}})\right\|
 \notag\\
 &\le
 \sqrt{s}\,
 \bigl(\lVert\ket{a^+}\rVert+\lVert\ket{a^-}\rVert\bigr)
 \bigl(\lVert\ket{c^+}\rVert+\lVert\ket{c^-}\rVert\bigr)
 \notag\\
 &\le 2\sqrt{s}\,
 \lVert\ket{a}\rVert\lVert\ket{c}\rVert,
 \label{eq:real-vector-bound}
\end{align}
where the last step uses
\(\lVert\ket{a^+}\rVert+\lVert\ket{a^-}\rVert
  \le\sqrt2\bigl(\lVert\ket{a^+}\rVert^2
                +\lVert\ket{a^-}\rVert^2\bigr)^{1/2}
  =\sqrt2\,\lVert\ket{a}\rVert\)
(by Cauchy--Schwarz and the disjointness of supports), and likewise
for \(\ket{c}\). Therefore,
\[
  \bra{a}\bra{b}\Pi_{\mathrm{acc}}\ket{a}\ket{b}\le 4s
\]
\end{proof}
\begin{proof}[Proof of the complex bound \eqref{eq:complex-sign-removal}]
Replace \(\Pi_{\mathrm{acc}}\) by the real acceptance operator
\(\Pi_{\mathrm{acc},\mathbb R}
  :=(\Pi_{\mathrm{acc}}+\overline{\Pi_{\mathrm{acc}}})/2\),
implemented by running, with probability \(1/2\) each, the original
verifier or its entrywise complex conjugate.  Since
\(0\preceq\Pi_{\mathrm{acc},\mathbb R}\preceq I\) and every real proof
has the same acceptance probability under \(\Pi_{\mathrm{acc}}\) and
\(\Pi_{\mathrm{acc},\mathbb R}\), completeness and the soundness on nonnegative proofs are unchanged and in particular, \eqref{eq:real-vector-bound} holds for
\(\Pi_{\mathrm{acc},\mathbb R}\).

Write arbitrary normalized complex proofs as
\(\ket{\psi}=\ket{a}+i\ket{b}\) and
\(\ket{\phi}=\ket{c}+i\ket{d}\) with
\(\ket{a},\ket{b},\ket{c},\ket{d}\) real, and set
\(\alpha=\lVert\ket{a}\rVert\), \(\beta=\lVert\ket{b}\rVert\),
\(\gamma=\lVert\ket{c}\rVert\), \(\delta=\lVert\ket{d}\rVert\), so
that \(\alpha^2+\beta^2=\gamma^2+\delta^2=1\).  Then
\(\ket{\psi}\otimes\ket{\phi}=\ket{r}+i\ket{t}\) with the real
vectors
\[
  \ket{r}=\ket{a}\otimes\ket{c}-\ket{b}\otimes\ket{d},
  \qquad
  \ket{t}=\ket{a}\otimes\ket{d}+\ket{b}\otimes\ket{c}.
\]
Because \(\Pi_{\mathrm{acc},\mathbb R}\) is Hermitian with real
entries, it is a real symmetric matrix, so
\(\bra{x}\Pi_{\mathrm{acc},\mathbb R}\ket{y}
 =\bra{y}\Pi_{\mathrm{acc},\mathbb R}\ket{x}\)
for all real vectors \(\ket{x},\ket{y}\).  Expanding
\(\bigl(\bra{r}-i\bra{t}\bigr)
  \Pi_{\mathrm{acc},\mathbb R}
  \bigl(\ket{r}+i\ket{t}\bigr)\),
the two cross terms therefore cancel, and
\[
  \bra{\psi}\bra{\phi}\Pi_{\mathrm{acc},\mathbb R}
  \ket{\psi}\ket{\phi}
  =\bra{r}\Pi_{\mathrm{acc},\mathbb R}\ket{r}
   +\bra{t}\Pi_{\mathrm{acc},\mathbb R}\ket{t}.
\]
Each of \(\ket{r}\) and \(\ket{t}\) is a sum of two real product
vectors, so the triangle inequality and
\eqref{eq:real-vector-bound} (with
\(\Pi_{\mathrm{acc},\mathbb R}\)) give
\[
  \bigl\|\Pi_{\mathrm{acc},\mathbb R}^{1/2}\ket{r}\bigr\|
  \le
  \bigl\|\Pi_{\mathrm{acc},\mathbb R}^{1/2}
    (\ket{a}\otimes\ket{c})\bigr\|
  +\bigl\|\Pi_{\mathrm{acc},\mathbb R}^{1/2}
    (\ket{b}\otimes\ket{d})\bigr\|
  \le
  2\sqrt s\,(\alpha\gamma+\beta\delta),
\]
hence
\(\bra{r}\Pi_{\mathrm{acc},\mathbb R}\ket{r}
  \le4s(\alpha\gamma+\beta\delta)^2\), and likewise
\(\bra{t}\Pi_{\mathrm{acc},\mathbb R}\ket{t}
  \le4s(\alpha\delta+\beta\gamma)^2\).
Finally,
\[
  (\alpha\gamma+\beta\delta)^2
  +(\alpha\delta+\beta\gamma)^2
  =(\alpha^2+\beta^2)(\gamma^2+\delta^2)
   +4\alpha\beta\gamma\delta
  =1+4\alpha\beta\gamma\delta
  \le2,
\]
since \(\alpha\beta\le(\alpha^2+\beta^2)/2=1/2\) and likewise
\(\gamma\delta\le1/2\).  Therefore
\(\bra{\psi}\bra{\phi}\Pi_{\mathrm{acc},\mathbb R}
  \ket{\psi}\ket{\phi}\le8s\), which proves
\eqref{eq:complex-sign-removal} for normalized complex product
proofs.
\end{proof}

We also use the following known one-proof characterization.

\begin{theorem}[Bassirian--Fefferman--Marwaha \cite{bassirian2024itcs}]
\label{thm:bfm}
There are constants \(1>c_0>s_0>0\) such that
\[
  \NEXP=\QMAplus(1,c_0,s_0).
\]
\end{theorem}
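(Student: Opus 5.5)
The containment \(\QMAplus(1,c_0,s_0)\subseteq\NEXP\) is the easy direction, and I would prove it by direct simulation. A nondeterministic exponential-time machine guesses an entrywise nonnegative vector of dimension \(2^{\mathrm{poly}(n)}\), with each amplitude written to polynomially many bits of precision. It then computes the acceptance probability \(\bra{\psi}A\ket{\psi}\) exactly, by evaluating the polynomial-size verifier circuit as an exponential-size matrix product, and accepts iff the value exceeds \((c_0+s_0)/2\). Rounding error is controlled by the constant gap, since \(A\) has operator norm at most one and rounding changes the state by an exponentially small amount in \(\ell_2\).

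For the hardness direction \(\NEXP\subseteq\QMAplus(1,c_0,s_0)\) I would start from a scaled-up PCP. Concretely, I would use a succinct constraint satisfaction problem that is \(\NEXP\)-hard with a constant gap: a regular constraint graph on \(N=2^{\mathrm{poly}(n)}\) vertices, with constant alphabet \(\Sigma\), whose edges and constraints are computable in polynomial time from the vertex names. The honest witness is the nonnegative state
\[
  \ket{\psi}=\frac{1}{\sqrt N}\sum_{v}\ket{v}\ket{\sigma(v)}
\]
for a satisfying assignment \(\sigma\). The verifier runs one of three tests, chosen at random with suitable constant probabilities.
\begin{itemize}
  \item \emph{Uniformity test.} Apply Hadamards to the vertex register and check for outcome \(0\), after first moving the label into a form where it is traced out or controlled. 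Nonnegativity is essential here: a nonnegative vector has large overlap with the uniform superposition only if its mass is genuinely spread over the vertices.
  \item \emph{Consistency/validity test.} Check that each vertex carries essentially one label, for instance by a Fourier-type measurement on the label register that detects superpositions of two labels on the same vertex.
  \item \emph{Constraint test.} Compare the labels on the two endpoints of a random edge. Since only one copy of the state is available, I would use a walk step: apply the edge unitary \(\ket{v}\mapsto\) a uniform superposition over neighbours, and test the projection onto a nonnegative ``edge-consistent'' subspace. Nonnegativity again prevents the cancellations that would let a cheating state fake consistency.
\end{itemize}
Completeness is a direct calculation on \(\ket{\psi}\), and it gives a constant \(c_0\).

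The main obstacle is soundness against arbitrary nonnegative states, which need not have the intended form. My first attempt would be a rounding argument. Write \(\ket{\phi}=\sum_{v,a}\phi_{v,a}\ket{v,a}\) with \(\phi_{v,a}\ge0\), and define a randomized assignment by sampling \(a\) with probability proportional to \(\phi_{v,a}^2\). Passing the uniformity test with high probability forces \(\sum_a\phi_{v,a}^2\approx1/N\) for most \(v\), because for nonnegative vectors the \(\ell_1\) mass controls the overlap with the uniform state. Passing the consistency test forces a dominant label at most vertices. Under these two conditions, the constraint-test acceptance probability is bounded above by a constant plus the fraction of constraints satisfied by the rounded assignment. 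Hence if every assignment violates a constant fraction of constraints, the total acceptance probability is at most some \(s_0<c_0\).

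The delicate part is making these three robustness statements quantitative, with constants independent of \(N\). I would establish them one test at a time, using only the inequality
\[
  |\langle x|y\rangle|\le\langle\operatorname{abs}(x)|\operatorname{abs}(y)\rangle
\]
and Cauchy--Schwarz, and then tune the mixing probabilities of the three tests so that a constant gap survives.
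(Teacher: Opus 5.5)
Your containment $\QMAplus(1,c_0,s_0)\subseteq\NEXP$ is fine: guessing a discretized nonnegative witness and simulating the verifier works. The paper itself does not prove the hardness direction. It imports the theorem from Bassirian, Fefferman, and Marwaha and uses it only through Lemma~\ref{lem:threshold}. So your sketch has to stand on its own, and it has a genuine gap: the constraint test is neither well defined nor sound as described.

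The map $\ket v\mapsto D^{-1/2}\sum_{w\sim v}\ket w$ is not unitary for $D\ge2$. Projecting the vertex register onto $\{u,v\}$ for a sampled edge succeeds only with probability about $2/N$. What you need is a coin register and the rotation map of the graph. These coherently send the two half-edges of $e=\{u,v\}$ to $\ket e\ket0$ and $\ket e\ket1$, so that measuring $e$ leaves $\tfrac1{\sqrt2}(\ket0\ket{\sigma(u)}+\ket1\ket{\sigma(v)})$.

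On such a state the two labels are visible only through interference. The projector onto $\operatorname{span}\{\ket0\ket a+\ket1\ket b:(a,b)\in R\}$ only detects which connected component of the bipartite graph of $R$ contains $(a,b)$. For 3-coloring, for example,
\[
  \ket0\ket0+\ket1\ket0=(\ket0\ket0+\ket1\ket1)-(\ket0\ket2+\ket1\ket1)+(\ket0\ket2+\ket1\ket0),
\]
so monochromatic edges are accepted with certainty. A generic constant-alphabet CSP is therefore unusable. You must restrict to constraints of the form $f(a)=g(b)$, i.e.\ projection or permutation constraints such as succinct label cover or Max-2-Lin. You must then also check that a constant-gap succinct version of that CSP is $\NEXP$-hard.

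The larger gap is that soundness, which is the whole content of the theorem, is only asserted. The bound on the constraint test by the value of the rounded assignment is never derived. Your dichotomy is that a state either loses noticeably on uniformity or validity, or is close to a valid, evenly spread assignment state. That dichotomy must be made quantitative with explicit constants, because the honest state does not pass every test with certainty and a cheater can trade a constant loss on one test for a gain on another. Two examples show the tradeoff is real:
\begin{itemize}
\item With permutation constraints, the state uniform over all vertex--label pairs passes both the Hadamard uniformity test and the interference constraint test with certainty, and loses only on validity.
\item A state supported on part of the vertex set where the constraints are more satisfiable trades uniformity against constraint value.
\end{itemize}
A Hadamard test on the vertex register also controls the profile of $\sum_a\phi_{v,a}$ (or of $\sum_v\phi_{v,a}$), not of $\sum_a\phi_{v,a}^2$. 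So your claimed consequence of uniformity holds only jointly with validity.

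What is missing is one robust inequality per test, together with explicit mixing weights that make every such tradeoff unprofitable and yield explicit constants $s_0<c_0$. That is exactly the analysis the paper takes as a black box from Bassirian, Fefferman, and Marwaha.
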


\section{Repetition for the one-proof verifier}
\label{sec:one-proof}

Fix an \(\NEXP\) language and regard it as the promise problem \(\Pi\).
By Theorem~\ref{thm:bfm}, choose the corresponding one-proof verifier
family with completeness \(c_0\) and soundness \(s_0\).  For a fixed
input, let \(\mathcal W\) be its proof space and let \(A\) be its
acceptance operator.  Choose a constant \(\theta\) satisfying
\[
  s_0<\theta<c_0.
\]
We call a register with Hilbert space \(\mathcal W\) a
\(\mathcal W\)-register.  For an integer \(L\), let
\(\mathcal H:=\mathcal W^{\otimes L}\), and call a register with this
Hilbert space an \(\mathcal H\)-register.  Apply the one-proof
verification circuit separately to each of the \(L\)
\(\mathcal W\)-registers and accept the \(\mathcal H\)-register if at
least \(\lceil\theta L\rceil\) runs accept.  Let \(B\) be the accepting
POVM element of this threshold test.  Equivalently, measure
\(\{A,I-A\}\) on every \(\mathcal W\)-register, and
\[
  B=
  \sum_{\substack{S\subseteq[L]\\|S|\ge\lceil\theta L\rceil}}
  \bigotimes_{i=1}^L C_i(S),
  \qquad
  C_i(S)=
  \begin{cases}
    A,&i\in S,\\
    I-A,&i\notin S.
  \end{cases}
\]

\begin{lemma}
\label{lem:threshold}
The acceptance operator \(B\) has the following properties.
\begin{enumerate}[label=(\roman*)]
\item If \(\ket{h}\ge0\) satisfies
      \(\bra{h}A\ket{h}\ge c_0\), then
      \[
        \bra{h^{\otimes L}}B\ket{h^{\otimes L}}
        \ge 1-e^{-\Omega(L)}.
      \]
\item If the fixed input is a no instance, every normalized pure state
      \(\ket{\Psi}\in\mathcal H\) with
      nonnegative amplitudes, including a state that may be entangled
      among the \(L\) \(\mathcal W\)-registers,
      satisfies
      \[
        \bra{\Psi}B\ket{\Psi}
        \le r:=\frac{s_0}{\theta}<1.
      \]
\end{enumerate}
\end{lemma}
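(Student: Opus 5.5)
For part (i), the honest state \(\ket{h^{\otimes L}}\) is a product state, so the \(L\) measurements of \(\{A,I-A\}\) give independent Bernoulli outcomes. Each has success probability \(p:=\bra{h}A\ket{h}\ge c_0\). Then \(\bra{h^{\otimes L}}B\ket{h^{\otimes L}}\) is the probability that the number of accepts is at least \(\lceil\theta L\rceil\). Since \(\theta<c_0\le p\) is a constant gap, a Hoeffding/Chernoff bound gives failure probability at most \(\exp(-2(c_0-\theta)^2L+O(1))=e^{-\Omega(L)}\). The \(O(1)\) accounts for the ceiling, which shifts the threshold by at most \(1/L\) in fraction.

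For part (ii), the plan is the Markov argument sketched in the introduction. Let \(N_{\mathrm{acc}}\) be the number of accepting runs when all \(L\) measurements are performed on \(\ket{\Psi}\). Its expectation is \(\sum_{i=1}^L\bra{\Psi}A_i\ket{\Psi}\), where \(A_i\) denotes \(A\) on the \(i\)-th \(\mathcal W\)-register. This holds by linearity, because the measurement operators on different registers commute and the marginal of the \(i\)-th outcome is \(\bra{\Psi}A_i\ket{\Psi}\). The key step is to show \(\bra{\Psi}A_i\ket{\Psi}\le s_0\) for every \(i\) even though \(\Psi\) may be entangled.

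To show this, write \(\ket{\Psi}=\sum_{z}\ket{z}\otimes\ket{\phi_z}\), where \(z\) ranges over computational-basis strings of the other \(L-1\) registers and \(\ket{\phi_z}\in\mathcal W\). Each \(\ket{\phi_z}\) is entrywise nonnegative. Since \(A_i\) acts as identity on the other registers, the cross terms vanish:
\[
  \bra{\Psi}A_i\ket{\Psi}=\sum_z\bra{\phi_z}A\ket{\phi_z}.
\]
Each term is at most \(s_0\lVert\phi_z\rVert^2\) by the soundness of the one-proof \(\QMAplus\) verifier of Theorem~\ref{thm:bfm}, applied to the normalized nonnegative state \(\ket{\phi_z}/\lVert\phi_z\rVert\) when \(\phi_z\neq0\). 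Summing and using \(\sum_z\lVert\phi_z\rVert^2=1\) gives \(\bra{\Psi}A_i\ket{\Psi}\le s_0\). Hence \(\E[N_{\mathrm{acc}}]\le s_0L\).

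Finally, \(\bra{\Psi}B\ket{\Psi}=\Prb[N_{\mathrm{acc}}\ge\lceil\theta L\rceil]\). By Markov's inequality this is at most \(s_0L/\lceil\theta L\rceil\le s_0/\theta=r\), and \(r<1\) because \(\theta>s_0\). The only point needing care is the identity \(\bra{\Psi}A_i\ket{\Psi}=\sum_z\bra{\phi_z}A\ket{\phi_z}\), that is, that conditioning on basis strings of the other registers preserves nonnegativity and normalizes correctly. It is elementary, so I do not expect a real obstacle.
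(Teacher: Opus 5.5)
Your proposal is correct and follows essentially the same argument as the paper: a Chernoff bound on the independent outcomes of the product state for (i), and for (ii) the computational-basis expansion over the other $L-1$ registers, one-proof soundness applied by homogeneity to each nonnegative component, and Markov's inequality on the number of accepting runs. As a minor remark, the $O(1)$ correction for the ceiling is unnecessary, since fewer than $\lceil\theta L\rceil$ accepts already means fewer than $\theta L$ accepts, so Hoeffding gives $e^{-2(c_0-\theta)^2L}$ directly.
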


\begin{proof}
For part~(i), on the product state \(\ket{h}^{\otimes L}\), the \(L\)
acceptance outcomes are independent, so a Chernoff bound applies.

For part~(ii), fix a \(\mathcal W\)-register \(i\) and expand the proof
state with nonnegative amplitudes
in the computational basis of all other registers:
\[
  \ket{\Psi}=\sum_z \ket{\psi_{i,z}}\otimes\ket{z},
  \qquad \ket{\psi_{i,z}}\ge0.
\]
The vectors \(\ket{\psi_{i,z}}\) may be subnormalized.  The reduced
state on \(\mathcal W\)-register \(i\) is
\[
  \rho_i=\sum_z\proj{\psi_{i,z}}.
\]
By one-proof soundness and homogeneity,
\[
  \Tr(A\rho_i)
  =\sum_z\bra{\psi_{i,z}}A\ket{\psi_{i,z}}
  \le s_0\sum_z\lVert\ket{\psi_{i,z}}\rVert^2
  =s_0.
\]
If \(X\) is the number of accepting runs, then
\(\E X\le s_0L\), without any independence assumption.  Markov's
inequality therefore gives
\[
  \Prb\{X\ge\lceil\theta L\rceil\}
  \le\frac{s_0L}{\lceil\theta L\rceil}
  \le\frac{s_0}{\theta}.
\]
\end{proof}

This gives a fixed constant \(r<1\) for every pure state with nonnegative amplitudes in \(\mathcal H\), including states that may
be entangled among the \(L\) \(\mathcal W\)-registers, not only states of
the form \(\ket{\psi}^{\otimes L}\).

\section{The two-proof protocol and its analysis}
\label{sec:protocol}

We first state the soundness bound used in the protocol.  Its proof,
given later in this section, combines the tensor-power approximation
theorem from Section~\ref{sec:tensor-power-approximation} with an exact
identity for the symmetric-subspace projection.

\begin{proposition}[Soundness bound for the two-proof test]\label{prop:two-proof-soundness}
Let $\mathcal H$ have a fixed computational basis, and let
$0\preceq B\preceq I$ satisfy
\[
 \sup_{\substack{\ket{\psi}\ge0\\\|\ket{\psi}\|=1}}
 \langle\psi|B|\psi\rangle
 \le r<1.
\]
Let
\[
 \rho_M^{(1)},\rho_M^{(2)}
 \in\mathrm{D}\!\left(\operatorname{Sym}^M(\mathcal H)\right)
\]
be entrywise nonnegative density operators.  For sufficiently large
$M$, set
\[
 N=2\left\lfloor\frac14\log_2M\right\rfloor,
\]
and let $\rho_N^{(1)}$ and $\rho_N^{(2)}$ be their reductions to $N$
registers.  Then
\begin{align}\label{eq:two-proof-soundness}
\operatorname{Tr}\!\left[
 P_{\mathrm{sym}}^{(2N)}
 B^{\otimes2N}
 \left(\rho_N^{(1)}\otimes\rho_N^{(2)}\right)
\right]
\le
\frac14+
\operatorname{poly}(\log M)M^{-\Omega_r(1)}.
\end{align}
The implicit constants depend only on $r$.
\end{proposition}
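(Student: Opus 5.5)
The plan follows the outline of Step~3 in the introduction, in four moves.

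\emph{Exact identity.} Expand $P_{\mathrm{sym}}^{(2N)}$ over cosets of $S_N\times S_N$ in $S_{2N}$. Because $B^{\otimes 2N}$ commutes with every permutation, and each $\rho_N^{(i)}$ is permutation invariant (being a reduction of a symmetric state), the trace depends only on the number $j$ of registers exchanged between the two halves. Writing $X=(B^{1/2})^{\otimes N}\rho_N^{(1)}(B^{1/2})^{\otimes N}$ and $Y$ likewise, this should give
\[
 \Tr\!\left[P_{\mathrm{sym}}^{(2N)}(X\otimes Y)\right]=\frac{1}{\binom{2N}{N}}\sum_{j=0}^N\binom Nj^2\Tr(X_jY_j),
\]
where $X_j$ is the partial trace of $X$ over $N-j$ registers. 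The swap identity applied register-by-register justifies the formula. The weights $\binom Nj^2/\binom{2N}N$ sum to one and are hypergeometric, so they concentrate at $j\approx N/2$. The terms with $j<N/4$ then contribute $e^{-\Omega(N)}=M^{-\Omega(1)}$, and for these I only need the trivial bound $\Tr(X_jY_j)\le1$.

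\emph{Reduction to a one-proof quantity.} For $j\ge N/4$, use $X_j\preceq(B^{1/2})^{\otimes j}\rho_j^{(1)}(B^{1/2})^{\otimes j}$, which holds because tracing out $B$-weighted registers only decreases the operator since $B\preceq I$. Then Hilbert--Schmidt Cauchy--Schwarz gives
\[
\Tr(X_jY_j)\le\|(B^{1/2})^{\otimes j}\rho_j^{(1)}(B^{1/2})^{\otimes j}\|_2\,\|\cdots\rho_j^{(2)}\cdots\|_2,
\]
so it suffices to show each factor squared is at most $1/4+\epsilon$. Since $j\le N\le\frac12\log_2M$, which is below $(\frac12-\eta)\log_2M$ for $\eta\approx1/4$, Theorem~\ref{lem:tensor-power-approximation} yields $\sigma_j=\int(\proj\phi)^{\otimes j}d\mu$ with $\|\rho_j-\sigma_j\|_2\le\mathrm{poly}(\log M)M^{-\Omega(1)}$. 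Conjugation by the contraction $(B^{1/2})^{\otimes j}$ preserves this error, so we may replace $\rho_j$ by $\sigma_j$. The target quantity becomes $\iint|\langle\phi|B|\psi\rangle|^{2j}d\mu\,d\mu$.

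\emph{Bounding the integral.} Define the good set $\mathcal G_\gamma=\{\phi:\langle\phi|B|\phi\rangle\ge1-\gamma\}$ and let $\lambda$ be its $\mu$-mass.
\begin{itemize}
\item If $\phi\notin\mathcal G_\gamma$ (or symmetrically $\psi\notin\mathcal G_\gamma$), Cauchy--Schwarz gives $|\langle\phi|B|\psi\rangle|^2\le\langle\phi|B|\phi\rangle\le1-\gamma$.
\item If both lie in $\mathcal G_\gamma$, set $w=B\phi$. Then $\int|\langle w|\psi\rangle|^{2j}d\mu(\psi)=\langle w|^{\otimes j}\sigma_j|w\rangle^{\otimes j}$. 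Pass to $\rho_j$ at cost $\|\rho_j-\sigma_j\|_2$, using $\|(\proj w)^{\otimes j}\|_2\le1$. Entrywise nonnegativity of $\rho_j$ lets us replace $w$ by $\operatorname{abs}(w)=\|w\|\,x_\phi$, and we then pass back to $\sigma_j$.
\item Lemma~\ref{lem:threshold}(ii)-type hypothesis gives $\|Bx_\phi\|^2\le\langle x_\phi|B|x_\phi\rangle\le r$. For $\psi\in\mathcal G_\gamma$, $\|(I-B)\psi\|^2\le\langle\psi|I-B|\psi\rangle\le\gamma$. Hence $|\langle x_\phi|\psi\rangle|\le\sqrt r+\sqrt\gamma=\kappa<1$ for $\psi\in\mathcal G_\gamma$, while terms with $\psi\notin\mathcal G_\gamma$ are bounded by $1$.
\end{itemize}
Splitting the double integral into these cases gives
\[
(1-\lambda)(1-\gamma)^j+\lambda(1-\lambda)+\lambda^2\kappa^{2j}+\epsilon\le\tfrac14+(1-\gamma)^{N/4}+\kappa^{N/2}+\epsilon.
\]
Choosing $\gamma=(1-\sqrt r)^2/4$, a constant depending only on $r$, makes both exponentials $M^{-\Omega_r(1)}$.

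\emph{Main obstacle.} The delicate step is the sign-removal inside the mixture, because the $\phi$ in the de Finetti decomposition need not be nonnegative. Three things must be checked carefully there:
\begin{itemize}
\item the entrywise bound $|\langle w|^{\otimes j}\rho_j|w\rangle^{\otimes j}|\le\langle\operatorname{abs}(w)|^{\otimes j}\rho_j|\operatorname{abs}(w)\rangle^{\otimes j}$, which follows from the nonnegativity of the entries of $\rho_j$ via the triangle inequality, since reductions of entrywise nonnegative operators remain entrywise nonnegative;
\item that the $\sigma\leftrightarrow\rho$ exchanges cost only $O(\|\rho_j-\sigma_j\|_2)$ uniformly in $\phi$, so they integrate harmlessly;
\item that the fixed-$\phi$ bound, restricted to $\psi$ in and out of $\mathcal G_\gamma$, actually yields the $\lambda(1-\lambda)+\lambda^2\kappa^{2j}$ split.
\end{itemize}
For the last point I would apply the nonnegativity comparison to the full $\psi$-integral, obtaining at most $\int|\langle x_\phi|\psi\rangle|^{2j}d\mu+\epsilon$, and then split that integral. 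The part over $\psi\notin\mathcal G_\gamma$ contributes at most $1-\lambda$, and the part over $\mathcal G_\gamma$ contributes at most $\lambda\kappa^{2j}$.
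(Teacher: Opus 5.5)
Your overall route is the paper's: the coset identity of Lemma~\ref{lem:double-coset-new}, the operator bound $X_j\preceq(B^{1/2})^{\otimes j}\rho_j(B^{1/2})^{\otimes j}$ followed by Hilbert--Schmidt Cauchy--Schwarz, and the good-set and sign-removal argument with $\gamma=(1-\sqrt r)^2/4$ and $\kappa=(1+\sqrt r)/2$. That last part is exactly Lemma~\ref{lem:second-moment-new}, including your correct choice to apply the nonnegativity comparison to the full $\psi$-integral.

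\textbf{The gap.} The range of $j$ on which you invoke Theorem~\ref{lem:tensor-power-approximation} is wrong. You discard only $j<N/4$, and then claim that every $j\le N$ satisfies $j\le(\frac12-\eta)\log_2M$ with $\eta\approx1/4$. That is false.
\begin{itemize}
\item $N=2\lfloor\frac14\log_2M\rfloor$ is about $\frac12\log_2M$, and equals it whenever $\frac14\log_2M$ is an integer.
\item By contrast, $(\frac12-\frac14)\log_2M=\frac14\log_2M$.
\end{itemize}
For $j$ near $N$ the theorem gives nothing. At $\ell=\frac12\log_2M$, the first term of \eqref{eq:hs-tensor-power-main} is $\sqrt{2(M^{1/2}-1)}\,\bigl((3+2\ell(\ell-1))/M\bigr)^{1/4}=\Theta(\sqrt{\log M})$, which is vacuous. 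More generally, for $j=(\frac12-o(1))\log_2M$ the error is not $M^{-\Omega(1)}$. So your $\epsilon$ is not small uniformly over $N/4\le j\le N$, and the final bound $\frac14+M^{-\Omega_r(1)}$ does not follow as written.

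\textbf{The repair.} Use the same tool you already apply to the lower tail, as the paper does.
\begin{itemize}
\item The weights $\binom Nj^2/\binom{2N}N$ are symmetric under $j\mapsto N-j$. Hence the upper tail $j>3N/4$ also has mass $e^{-\Omega(N)}$; the paper bounds both tails together by $2e^{-N/8}$ via Hoeffding for sampling without replacement.
\item Bound those terms trivially by $\Tr(X_jY_j)\le1$, just as for $j<N/4$.
\item On the central range $N/4\le j\le 3N/4$ you have $j\le\frac38\log_2M=(\frac12-\frac18)\log_2M$. The theorem then applies with $\eta=1/8$ and error $\mathrm{poly}(\log M)M^{-1/16}$.
\end{itemize}
This is precisely why $N$ is taken to be $2\lfloor\frac14\log_2M\rfloor$. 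With this change, the rest of your argument goes through and coincides with the paper's proof.
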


\subsection{Verifier}

Let \(n\) denote the input length.  Choose constants
\(0<\alpha<1/3\), \(C\in\mathbb N\), and \(C'>0\), with \(C\)
sufficiently large, and set
\[
  M=(n+2)^C,\qquad
N=2\left\lfloor\frac14\log_2M\right\rfloor,\qquad
L=\left\lceil C'\log(n+2)\right\rceil
\]
We choose \(C\) so that \(t\ge1\) for every input length.  Each Merlin
sends a normalized pure proof state
\(\ket{\Psi_i}\in\mathcal H^{\otimes M}\), \(i\in\{1,2\}\), whose
computational-basis amplitudes are nonnegative.  Here
\(\mathcal H=\mathcal W^{\otimes L}\), so each proof consists of \(M\)
\(\mathcal H\)-registers.  The joint proof is
\(\ket{\Psi_1}\otimes\ket{\Psi_2}\), as required in the definition
of \(\QMAplus(2)\).

The verifier performs the following tests.

\begin{tcolorbox}[title=Our amplification protocol, breakable=true]
\begin{enumerate}[label=\arabic*.]
\item On each proof \(\ket{\Psi_i}\), perform the measurement
      \(\{P_{\mathrm{sym}}^{(M)},I-P_{\mathrm{sym}}^{(M)}\}\).
      Reject unless both measurements have outcome
      \(P_{\mathrm{sym}}^{(M)}\).
\item From each proof, select any fixed \(N\) \(\mathcal H\)-registers
      and discard the other \(M-N\).  The choice is immaterial after
      the first test.
\item On the resulting \(2N\) \(\mathcal H\)-registers, perform
      \(\{P_{\mathrm{sym}}^{(2N)},I-P_{\mathrm{sym}}^{(2N)}\}\).
      Reject unless the outcome is \(P_{\mathrm{sym}}^{(2N)}\).
\item On each of the \(2N\) selected \(\mathcal H\)-registers, perform
      the two-outcome measurement \(\{B,I-B\}\).  Accept if and only if
      every measurement has outcome \(B\). Here $B$ is the same as of~\Cref{lem:threshold}.
\end{enumerate}
\end{tcolorbox}

Since \(P_{\mathrm{sym}}^{(2N)}\) commutes with \(B^{\otimes2N}\),
steps 3 and 4 have joint acceptance operator
\(P_{\mathrm{sym}}^{(2N)}B^{\otimes2N}\).

\subsection{Completeness}

On a yes instance, let \(\ket{h}\ge0\) be the honest proof for the
one-proof verifier.  Both Merlins send
\[
  \bigl(\ket{h}^{\otimes L}\bigr)^{\otimes M}.
\]
The two local symmetric-subspace measurements and the joint
symmetric-subspace measurement return their symmetric outcomes with
probability one.  Lemma~\ref{lem:threshold}(i) gives
\[
  c_{\mathrm{new}}
  \ge\bigl(1-e^{-\Omega(L)}\bigr)^{2N}
  =1-o(1).
\]
Taking \(C'\) sufficiently large makes the completeness error smaller
than \(1/p(n)\) for any prescribed polynomial \(p\).

\subsection{Soundness}

Consider a no instance.  Condition on both measurements in step 1
returning their symmetric outcomes.  If this event has probability
zero, there is nothing to prove.  Because the two initial proofs are
unentangled and the two measurements act separately, the conditional
state has the form
\[
  \rho_M^{(1)}\otimes\rho_M^{(2)},
  \qquad
  \rho_M^{(i)}\in\D(\Sym^M(\mathcal H)).
\]
Both the initial proof states and \(P_{\mathrm{sym}}^{(M)}\) are
entrywise nonnegative.  Hence each \(\rho_M^{(i)}\), and every reduced
state obtained from it, is entrywise nonnegative.

Lemma~\ref{lem:threshold}(ii) supplies
the condition needed for Proposition~\ref{prop:two-proof-soundness}, which therefore bounds the conditional probability that steps 3 and 4 both
accept by
\[
  \frac14+
  \operatorname{poly}(\log M)M^{-\Omega_{r}(1)}.
\]
Multiplying by the probability of reaching the conditional branch can
only decrease this expression.  Choosing \(C\) sufficiently large
makes the error smaller than \(1/p(n)\) for any prescribed polynomial
\(p\).

\subsection{Proof of Proposition~\ref{prop:two-proof-soundness}}
\label{sec:technical-soundness}
Now we proceed to prove Proposition~\ref{prop:two-proof-soundness}.
We first prove the following lemma.

\begin{lemma}\label{lem:second-moment-new}
Let \(0\preceq B\preceq I\) satisfy
\[
  \sup_{\substack{\ket{\psi}\ge0\\
                   \lVert\ket{\psi}\rVert=1}}
  \bra{\psi}B\ket{\psi}\le r<1.
\]
Let \(j\ge1\), let
\(\rho\in\D(\mathcal H^{\otimes j})\) be entrywise nonnegative in the
fixed computational basis, and let \(\mu\) be a probability measure
on unit vectors such that
\[
  \sigma=\int\bigl(\proj{\phi}\bigr)^{\otimes j}\,d\mu(\phi)
\]
satisfies \(\lVert\rho-\sigma\rVert_2\le\varepsilon\).  Then
\begin{align}
 &\left\|
   (B^{1/2})^{\otimes j}\rho(B^{1/2})^{\otimes j}
  \right\|_2^2 \notag\\
 &\quad\le
 \frac14+
 \left(1-\frac{(1-\sqrt r)^2}{4}\right)^j
 +\left(\frac{1+\sqrt r}{2}\right)^{2j}
 +4\varepsilon.
\label{eq:second-moment-bound}
\end{align}
\end{lemma}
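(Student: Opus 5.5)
Following the outline in the introduction, we first pass from \(\rho\) to \(\sigma\), then split the mixture \(\mu\) according to whether \(\ket{\phi}\) is nearly accepted by \(B\), and use entrywise nonnegativity of \(\rho\) to control the "good" part.

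\emph{Step 1: replace \(\rho\) by \(\sigma\).} Write \(T(X)=(B^{1/2})^{\otimes j}X(B^{1/2})^{\otimes j}\). Since \(\lVert B^{1/2}\rVert_\infty\le1\), the map \(T\) is a contraction in \(\lVert\cdot\rVert_2\). Moreover \(\lVert T(\rho)\rVert_2\le1\) and \(\lVert T(\sigma)\rVert_2\le1\), because both are subnormalized PSD operators. Hence
\[
  \bigl|\lVert T\rho\rVert_2^2-\lVert T\sigma\rVert_2^2\bigr|\le 2\varepsilon .
\]
The same estimate applies to any quantity of the form \(\bra{w}^{\otimes j}X\ket{w}^{\otimes j}\) with \(\lVert w\rVert\le1\): its difference between \(X=\rho\) and \(X=\sigma\) is at most \(\lVert\rho-\sigma\rVert_2\le\varepsilon\). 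Next, \(\lVert T\sigma\rVert_2^2=\iint|\bra{\phi}B\ket{\psi}|^{2j}\,d\mu\,d\mu\). For fixed \(\phi\), set \(\ket w=B\ket\phi\); then the inner integral is \(I(\phi)=\bra{w}^{\otimes j}\sigma\ket{w}^{\otimes j}\).

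\emph{Step 2: split with \(\gamma=(1-\sqrt r)^2/4\).} Let \(\mathcal G=\{\phi:\bra\phi B\ket\phi\ge1-\gamma\}\) and \(\lambda=\mu(\mathcal G)\).
\begin{itemize}
\item For \(\phi\notin\mathcal G\), Cauchy--Schwarz for the positive form \(B\) gives \(|\bra\phi B\ket\psi|^2\le\bra\phi B\ket\phi\le1-\gamma\). So \(I(\phi)\le(1-\gamma)^j\), and by symmetry the same bound holds whenever \(\psi\notin\mathcal G\).
\item For \(\phi,\psi\) both in \(\mathcal G\), we bound \(|\bra\phi B\ket\psi|\) as described in Step 3.
\end{itemize}

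\emph{Step 3: the good--good block, where nonnegativity enters.} Let \(\ket{x}=\ket{\mathrm{abs}(w)}/\lVert w\rVert\). Entrywise nonnegativity of \(\rho\) gives the comparison \(\bra{w}^{\otimes j}\rho\ket{w}^{\otimes j}\le\bra{\mathrm{abs}(w)}^{\otimes j}\rho\ket{\mathrm{abs}(w)}^{\otimes j}\), since \(|\sum \bar w\,\rho\,w|\le\sum|w|\rho|w|\). Transferring to \(\sigma\) on both sides via Step 1 costs \(2\varepsilon\), and we get \(I(\phi)\le\int|\braket{x}{\psi}|^{2j}d\mu(\psi)+2\varepsilon\). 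For \(\psi\in\mathcal G\), we have \(\lVert(I-B)\ket\psi\rVert^2\le\bra\psi(I-B)\ket\psi\le\gamma\), using \((I-B)^2\preceq I-B\). Also \(\lVert B\ket x\rVert^2\le\bra xB\ket x\le r\). Therefore \(|\braket x\psi|\le\sqrt r+\sqrt\gamma=(1+\sqrt r)/2\). This yields, for \(\phi\in\mathcal G\),
\[
  I(\phi)\le(1-\lambda)+\lambda\Bigl(\tfrac{1+\sqrt r}{2}\Bigr)^{2j}+2\varepsilon .
\]
Here we bound the part with \(\psi\notin\mathcal G\) crudely by \(1-\lambda\), since \(|\braket{x}{\psi}|\le1\).

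\emph{Step 4: combine.} Integrating over \(\phi\),
\[
  \lVert T\sigma\rVert_2^2\le(1-\lambda)(1-\gamma)^j+\lambda(1-\lambda)+\lambda^2\Bigl(\tfrac{1+\sqrt r}{2}\Bigr)^{2j}+2\varepsilon .
\]
Using \(\lambda(1-\lambda)\le1/4\) and adding the \(2\varepsilon\) from Step 1 gives \eqref{eq:second-moment-bound}.

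The main obstacle is Step 3, specifically justifying the transfer of the absolute-value comparison from \(\rho\) to \(\sigma\) with \(\lVert w\rVert\le1\). The vectors \(\ket{w}^{\otimes j}\) and \(\ket{\mathrm{abs}(w)}^{\otimes j}\) have norm at most \(1\), so \(\lVert(\proj{w})^{\otimes j}\rVert_2\le1\), and HS Cauchy--Schwarz gives error \(\varepsilon\) per exchange. The total is then \(2\varepsilon+2\varepsilon=4\varepsilon\), consistent with the stated constant.
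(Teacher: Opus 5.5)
Your proposal is correct and follows the paper's proof essentially step for step: the same $2\varepsilon$ cost for replacing $\rho$ by $\sigma$, the same good set $\mathcal G$ with $\gamma=(1-\sqrt r)^2/4$, the same absolute-value transfer through $\rho$ with $\ket{w}=B\ket{\phi}$ at a further cost of $2\varepsilon$, and the same $\lambda(1-\lambda)\le 1/4$ bookkeeping. The only blemish is the second bullet of your Step~2, which suggests a pointwise bound on $|\bra{\phi}B\ket{\psi}|$ for $\phi,\psi\in\mathcal G$; that would be false, e.g.\ for $\psi=\phi$. What Step~3 actually bounds is the full inner integral $I(\phi)$, which is exactly why the cross term $\lambda(1-\lambda)$ survives, and your Step~4 correctly reflects this.
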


\begin{proof}
First, observe that both
\[
(B^{1/2})^{\otimes j}\rho(B^{1/2})^{\otimes j} \text{ and }
(B^{1/2})^{\otimes j}\sigma(B^{1/2})^{\otimes j}
\]
have trace at most 1, as $\operatorname{Tr}\left[(B^{1/2})^{\otimes j}\rho(B^{1/2})^{\otimes j}\right]\le \operatorname{Tr}[\rho]=1$, and therefore their Hilbert--Schmidt norms are also bounded by 1. Consequently,
\begin{align*}
 &\left|
 \left\|(B^{1/2})^{\otimes j}\rho(B^{1/2})^{\otimes j}\right\|_2^2
 -
 \left\|(B^{1/2})^{\otimes j}\sigma(B^{1/2})^{\otimes j}\right\|_2^2
 \right|                                                        \\
 &\qquad = 
 \left(
 \left\|(B^{1/2})^{\otimes j}\rho(B^{1/2})^{\otimes j}\right\|_2
 +
 \left\|(B^{1/2})^{\otimes j}\sigma(B^{1/2})^{\otimes j}\right\|_2
 \right) \cdot
 \left(
 \left\|(B^{1/2})^{\otimes j}\rho(B^{1/2})^{\otimes j}\right\|_2
 -
 \left\|(B^{1/2})^{\otimes j}\sigma(B^{1/2})^{\otimes j}\right\|_2
 \right)       
 \\
 &\qquad\le (1+1) \cdot \lVert\rho-\sigma\rVert_2
 \le2\varepsilon.
\end{align*}
For the approximating state $\sigma$, direct expansion gives
\begin{equation}
\label{eq:second-moment-kernel}
\left\|
 (B^{1/2})^{\otimes j}\sigma(B^{1/2})^{\otimes j}
\right\|_2^2
=\iint|\bra{\phi}B\ket{\psi}|^{2j}
  \,d\mu(\phi)d\mu(\psi).
\end{equation}
For \(\ket{w}=\sum_z w_z\ket z\), write
\[
  \ket{\operatorname{abs}(w)}=\sum_z|w_z|\ket z.
\]
Entrywise nonnegativity of \(\rho\) and the triangle inequality give
\begin{align*}
  \bra{w^{\otimes j}}\rho\ket{w^{\otimes j}} =
  |\bra{w^{\otimes j}}\rho\ket{w^{\otimes j}}|
  &\le \sum_{z,z'}|w_z|\cdot|w_{z'}|\bra{z}\rho\ket{z'}\\
  &\le
  \bra{\operatorname{abs}(w)^{\otimes j}}
  \rho
  \ket{\operatorname{abs}(w)^{\otimes j}}.
\end{align*}

The Hilbert--Schmidt approximation of $\rho$ by $\sigma$,
the Hilbert--Schmidt Cauchy--Schwarz inequality, and
\[
 \left\|(\ket{w}\!\bra{w})^{\otimes j}\right\|_2
 =\lVert\ket{w}\rVert^{2j}
\]
give
\begin{align}\label{eq:absolute-value-moment}
\int |\langle w|\psi\rangle|^{2j}\,d\mu(\psi)
&=
(\bra{w})^{\otimes j}
\sigma
(\ket{w})^{\otimes j}
\notag\\
&=
(\bra{w})^{\otimes j}
\rho
(\ket{w})^{\otimes j}
+
(\bra{w})^{\otimes j}
(\sigma-\rho)
(\ket{w})^{\otimes j}
\notag\\
&\le
(\bra{w})^{\otimes j}
\rho
(\ket{w})^{\otimes j}
+
\left|
(\bra{w})^{\otimes j}
(\sigma-\rho)
(\ket{w})^{\otimes j}
\right|
\notag\\
&\le
(\bra{w})^{\otimes j}
\rho
(\ket{w})^{\otimes j}
+
\|\sigma-\rho\|_2
\left\|
(\ket{w}\!\bra{w})^{\otimes j}
\right\|_2
\notag\\
&\le
(\bra{w})^{\otimes j}
\rho
(\ket{w})^{\otimes j}
+
\varepsilon\lVert\ket{w}\rVert^{2j}
\notag\\
&\le
(\bra{\operatorname{abs}(w)})^{\otimes j}
\rho
(\ket{\operatorname{abs}(w)})^{\otimes j}
+
\varepsilon\lVert\ket{w}\rVert^{2j}
\notag\\
&=
(\bra{\operatorname{abs}(w)})^{\otimes j}
\sigma
(\ket{\operatorname{abs}(w)})^{\otimes j}
\notag\\
&\quad+
(\bra{\operatorname{abs}(w)})^{\otimes j}
(\rho-\sigma)
(\ket{\operatorname{abs}(w)})^{\otimes j}
+
\varepsilon\lVert\ket{w}\rVert^{2j}
\notag\\
&\le
(\bra{\operatorname{abs}(w)})^{\otimes j}
\sigma
(\ket{\operatorname{abs}(w)})^{\otimes j}
\notag\\
&\quad+
\left|
(\bra{\operatorname{abs}(w)})^{\otimes j}
(\rho-\sigma)
(\ket{\operatorname{abs}(w)})^{\otimes j}
\right|
+
\varepsilon\lVert\ket{w}\rVert^{2j}
\notag\\
&\le
(\bra{\operatorname{abs}(w)})^{\otimes j}
\sigma
(\ket{\operatorname{abs}(w)})^{\otimes j}
+
2\varepsilon\lVert\ket{w}\rVert^{2j}
\notag\\
&=
\int
|\langle\operatorname{abs}(w)|\psi\rangle|^{2j}
\,d\mu(\psi)
+
2\varepsilon\lVert\ket{w}\rVert^{2j}.
\end{align}
Here the sixth line uses the preceding inequality obtained from the
entrywise nonnegativity of $\rho$.  In the second last
line, we use
\[
 \lVert\ket{\operatorname{abs}(w)}\rVert
 =
 \lVert\ket{w}\rVert
\]
together with $\|\rho-\sigma\|_2\le\varepsilon$.

We now use \eqref{eq:absolute-value-moment} to bound the double integral in~\eqref{eq:second-moment-kernel}.
The strategy is to split the vectors according to whether their expectation under $B$
is close to one. 
A vector with smaller $B$-expectation gives an exponentially small contribution directly by Cauchy--Schwarz. For vectors with large $B$-expectation, we use~\eqref{eq:absolute-value-moment} to remove the phases from $B\ket{\varphi}$ and then apply the bound on nonnegative
vectors in the assumption of this lemma. 
The only remaining term will be the product of the measures of the two parts, which is at most $1/4$.

Define
\[
 \mathcal G
 =
 \left\{
  \ket{\varphi}:
  \langle\varphi|B|\varphi\rangle
  \ge
  1-\frac{(1-\sqrt r)^2}{4}
 \right\}.
\]

First suppose that $\ket{\varphi}\notin\mathcal G$.  For every unit
vector $\ket{\psi}$, Cauchy--Schwarz applied to
$B^{1/2}\ket{\varphi}$ and $B^{1/2}\ket{\psi}$ gives
\begin{align*}
 |\langle\varphi|B|\psi\rangle|^2
 &\le
 \langle\varphi|B|\varphi\rangle
 \langle\psi|B|\psi\rangle\\
 &\le
 1-\frac{(1-\sqrt r)^2}{4},
\end{align*}
where the second inequality uses $B\preceq I$.  Therefore
\[
 \int
 |\langle\varphi|B|\psi\rangle|^{2j}\,d\mu(\psi)
 \le
 \left(
  1-\frac{(1-\sqrt r)^2}{4}
 \right)^j.
\]

It remains to consider $\ket{\varphi}\in\mathcal G$.  Define the
nonnegative unit vector
\[
 \ket{x_\varphi}
 =
 \frac{
  \sum_z|\langle z|B|\varphi\rangle|\ket z
 }{
  \|B\ket{\varphi}\|
 }.
\]
We first bound the overlap of $\ket{x_\varphi}$ with any
$\ket{\psi}\in\mathcal G$.  Since $0\preceq B\preceq I$, we have
$(I-B)^2\preceq I-B$, and hence
\begin{align*}
 \|(I-B)\ket{\psi}\|^2
 &=
 \langle\psi|(I-B)^2|\psi\rangle\\
 &\le
 \langle\psi|(I-B)|\psi\rangle\\
 &=
 1-\langle\psi|B|\psi\rangle\\
 &\le
 \frac{(1-\sqrt r)^2}{4}.
\end{align*}
Thus
\[
 \|(I-B)\ket{\psi}\|
 \le
 \frac{1-\sqrt r}{2}.
\]

On the other hand, $\ket{x_\varphi}$ is a nonnegative unit vector, so
the hypothesis of the lemma applies to it.  Since $B^2\preceq B$,
\begin{align*}
 \|B\ket{x_\varphi}\|^2
 &=
 \langle x_\varphi|B^2|x_\varphi\rangle\\
 &\le
 \langle x_\varphi|B|x_\varphi\rangle\\
 &\le r.
\end{align*}
Consequently,
\begin{align*}
 |\langle x_\varphi|\psi\rangle|
 &\le
 |\langle x_\varphi|B|\psi\rangle|
 +
 |\langle x_\varphi|(I-B)|\psi\rangle|\\
 &\le
 \|B\ket{x_\varphi}\|
 +
 \|(I-B)\ket{\psi}\|\\
 &\le
 \sqrt r+\frac{1-\sqrt r}{2}\\
 &=
 \frac{1+\sqrt r}{2}.
\end{align*}

We now apply~\eqref{eq:absolute-value-moment} with $\ket w=B\ket{\varphi}$.  By the definition of
$\ket{x_\varphi}$, the vector obtained by taking the entrywise
absolute value of $B\ket{\varphi}$ is
$\|B\ket{\varphi}\|\ket{x_\varphi}$.  Hence
\begin{align*}
 \int
 |\langle\varphi|B|\psi\rangle|^{2j}\,d\mu(\psi)
 &\le
 \|B\ket{\varphi}\|^{2j}
 \left(
  \int
  |\langle x_\varphi|\psi\rangle|^{2j}\,d\mu(\psi)
  +2\varepsilon
 \right).
\end{align*}

For $\ket{\psi}\in\mathcal G$, the overlap in the last integral is at
most $(1+\sqrt r)/2$.  Outside $\mathcal G$, we use the trivial bound
$|\langle x_\varphi|\psi\rangle|\le1$.  It follows that
\begin{align*}
 \int
 |\langle x_\varphi|\psi\rangle|^{2j}\,d\mu(\psi)
 &\le
 1-\mu(\mathcal G)
 +
 \mu(\mathcal G)
 \left(\frac{1+\sqrt r}{2}\right)^{2j}.
\end{align*}
Since $\|B\ket{\varphi}\|\le1$, every
$\ket{\varphi}\in\mathcal G$ therefore satisfies
\begin{align*}
 \int
 |\langle\varphi|B|\psi\rangle|^{2j}\,d\mu(\psi)
 &\le
 1-\mu(\mathcal G)
 +
 \mu(\mathcal G)
 \left(\frac{1+\sqrt r}{2}\right)^{2j}
 +
 2\varepsilon.
\end{align*}

Finally, split the outer integral in~\eqref{eq:second-moment-kernel} into
$\ket{\varphi}\notin\mathcal G$ and
$\ket{\varphi}\in\mathcal G$.  The preceding two bounds give
\begin{align*}
&\iint
 |\langle\varphi|B|\psi\rangle|^{2j}
 \,d\mu(\varphi)d\mu(\psi)\\
&\le
 \bigl(1-\mu(\mathcal G)\bigr)
 \left(
  1-\frac{(1-\sqrt r)^2}{4}
 \right)^j\\
&\quad+
 \mu(\mathcal G)\bigl(1-\mu(\mathcal G)\bigr)\\
&\quad+
 \mu(\mathcal G)^2
 \left(\frac{1+\sqrt r}{2}\right)^{2j}
 +
 2\varepsilon\mu(\mathcal G)\\
&\le
 \frac14
 +
 \left(
  1-\frac{(1-\sqrt r)^2}{4}
 \right)^j
 +
 \left(\frac{1+\sqrt r}{2}\right)^{2j}
 +
 2\varepsilon.
\end{align*}
Here we used
\[
 \mu(\mathcal G)\bigl(1-\mu(\mathcal G)\bigr)\le\frac14,
 \qquad
 1-\mu(\mathcal G)\le1,
 \qquad
 \mu(\mathcal G)^2\le1.
\]
Combining it with~\eqref{eq:second-moment-kernel} and with the $2\varepsilon$ error incurred when replacing $\rho$ by $\sigma$, we obtain
\begin{align*}
\left\|
 (B^{1/2})^{\otimes j}
 \rho
 (B^{1/2})^{\otimes j}
\right\|_2^2
&\le
\left\|
 (B^{1/2})^{\otimes j}
 \sigma
 (B^{1/2})^{\otimes j}
\right\|_2^2
+2\varepsilon\\
&\le
\frac14
+
\left(
 1-\frac{(1-\sqrt r)^2}{4}
\right)^j
+
\left(\frac{1+\sqrt r}{2}\right)^{2j}
+
4\varepsilon,
\end{align*}
as claimed.
\end{proof}

We also use the following expression of the symmetric subspace projector.

\begin{lemma}
\label{lem:double-coset-new}
Let \(X,Y\) be subnormalized positive semidefinite operators supported
on \(\Sym^N(\mathcal H)\), and let \(X_j,Y_j\) be their reductions to
\(j\) registers.  Then
\begin{equation}
\label{eq:double-coset-new}
  \Tr[P_{\mathrm{sym}}^{(2N)}(X\otimes Y)]
  =\frac1{\binom{2N}{N}}
   \sum_{j=0}^N\binom Nj^2\Tr(X_jY_j).
\end{equation}
\end{lemma}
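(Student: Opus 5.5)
The plan is to expand $P_{\mathrm{sym}}^{(2N)}=\frac1{(2N)!}\sum_{\pi\in S_{2N}}U_\pi$ and sort the permutations by how they move registers between the two blocks. Label the first $N$ registers (carrying $X$) by $L=\{1,\dots,N\}$ and the last $N$ (carrying $Y$) by $R$.

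First, we use that $X$ and $Y$ are supported on $\Sym^N(\mathcal H)$, so $X=U_\alpha X=XU_\alpha$ for all $\alpha\in S_N$, and similarly for $Y$. Hence $\Tr[U_\pi(X\otimes Y)]$ is constant on double cosets $(S_N\times S_N)\pi(S_N\times S_N)$. These double cosets are indexed by $j=|\pi^{-1}(R)\cap L|$, the number of registers sent across. The coset with index $j$ has size $(N!)^2\binom Nj^2$: choose which $j$ positions of $L$ cross and which $j$ positions of $R$ they land in, then arrange within blocks. Summing gives the total $(N!)^2\binom{2N}{N}$ by Vandermonde's identity, as a consistency check.

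Second, we compute the trace for a canonical representative of each coset. Let $\pi_j$ be the product of $j$ disjoint transpositions swapping register $N-j+i$ with register $N+i$ for $i=1,\dots,j$. Then $U_{\pi_j}=\bigotimes_{i=1}^j F_{(N-j+i),(N+i)}$, tensored with identities. Tracing out the $N-j$ untouched registers on each side reduces $X$ to $X_j$ and $Y$ to $Y_j$; by permutation symmetry it does not matter which registers are kept. This gives
\[
\Tr[U_{\pi_j}(X\otimes Y)]=\Tr[(X_j\otimes Y_j)F^{(j)}],
\]
where $F^{(j)}$ swaps the two $j$-register blocks. The swap identity from Section~\ref{sec:preliminaries}, applied with $\mathcal H^{\otimes j}$ as the local space, turns this into $\Tr(X_jY_j)$.

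Finally, summing over $j$ with weights $(N!)^2\binom Nj^2/(2N)!=\binom Nj^2/\binom{2N}N$ gives \eqref{eq:double-coset-new}. The $j=0$ term gives $\Tr X\Tr Y$, consistent with the convention $X_0=\Tr X$.

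The step that needs the most care is the double-coset bookkeeping. One must check that invariance of $X$ under both left and right multiplication by $U_\alpha$ really makes the trace constant on each coset; this follows from cyclicity of the trace, since $\Tr[U_{\beta\pi\gamma}(X\otimes Y)]=\Tr[U_\pi U_\gamma(X\otimes Y)U_\beta]$. One must also check that every $\pi$ with crossing number $j$ lies in the coset of $\pi_j$. This holds because the block actions can permute $L$ and $R$ independently, so any set of crossing positions can be moved to the canonical ones.
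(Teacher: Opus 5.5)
Your proposal is correct and follows essentially the same route as the paper: you classify permutations by the number $j$ of registers crossing between the two blocks, use $S_N\times S_N$ invariance together with cyclicity of the trace to make the contribution constant on each class of size $(N!)^2\binom Nj^2$, and evaluate a representative made of $j$ disjoint transpositions by partial trace and the swap identity. The only cosmetic difference is in the last step: you cite the swap identity on $\mathcal H^{\otimes j}$, whereas the paper verifies $\Tr[\mathrm{SWAP}_j(X_j\otimes Y_j)]=\Tr(X_jY_j)$ by an explicit basis expansion.
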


\begin{proof}
We evaluate the contribution of the permutations in
$P_{\mathrm{sym}}^{(2N)}$ in three steps.  First, we classify each
permutation by the number of registers that it moves between the two
groups of $N$ registers.  Second, we count the permutations in each
class.  Third, we evaluate one representative of each class using the
swap identity.

Expanding the symmetric-subspace projection gives
\begin{align*}
\operatorname{Tr}\!\left[
 P_{\mathrm{sym}}^{(2N)}(X\otimes Y)
\right]
=
\frac{1}{(2N)!}
\sum_{\pi\in S_{2N}}
\operatorname{Tr}\!\left[
 U_\pi(X\otimes Y)
\right].
\end{align*}

Because $X$ is supported on $\operatorname{Sym}^N(\mathcal H)$, every
permutation of its $N$ registers acts as the identity on the support
of $X$.  Thus, for every such permutation,
\[
 U_\pi X=XU_\pi=X.
\]
The same statement holds for $Y$.  Consequently, composing a
permutation on either side with permutations that act separately
within the two groups does not change its contribution to the trace.
More explicitly, if $\alpha,\beta\in S_N\times S_N$ permute registers
only within the two groups, then
\begin{align*}
\operatorname{Tr}\!\left[
 U_{\alpha\pi\beta}(X\otimes Y)
\right]
&=
\operatorname{Tr}\!\left[
 U_\alpha U_\pi U_\beta(X\otimes Y)
\right]\\
&=
\operatorname{Tr}\!\left[
 U_\alpha U_\pi(X\otimes Y)
\right]\\
&=
\operatorname{Tr}\!\left[
 U_\pi(X\otimes Y)U_\alpha
\right]\\
&=
\operatorname{Tr}\!\left[
 U_\pi(X\otimes Y)
\right].
\end{align*}

For $\pi\in S_{2N}$, let $j$ be the number of registers in the first
group whose images under $\pi$ lie in the second group.  Since $\pi$
is a bijection, exactly $j$ registers also move from the second group
to the first group.  Permutations with the same value of $j$ differ
only by relabelings within the two groups before and after the
permutation.  Hence their contributions to the trace are equal.

We now count the permutations having a fixed value of $j$.  The image
of the first group contains $N-j$ registers from the first group and
$j$ registers from the second group.  There are
\[
 \binom{N}{N-j}\binom{N}{j}
 =
 \binom{N}{j}^2
\]
ways to choose this image.  Once the image is fixed, there are $N!$
bijections from the first group onto it.  The second group must map
bijectively onto the complementary set of $N$ registers, which gives
another $N!$ choices.  Therefore, the number of permutations in the
class indexed by $j$ is
\[
 (N!)^2\binom{N}{j}^2.
\]
Its relative weight in the average over $S_{2N}$ is consequently
\[
 \frac{(N!)^2\binom{N}{j}^2}{(2N)!}
 =
 \frac{\binom{N}{j}^2}{\binom{2N}{N}}.
\]

It remains to evaluate the trace for one representative of this
class.  Choose the representative that swaps $j$ corresponding
registers between the two groups and fixes all remaining registers.

Let $\operatorname{SWAP}_j$ denote the operator that exchanges the two
groups of $j$ registers.  Since the representative permutation acts
as $\operatorname{SWAP}_j$ on the exchanged registers and as the
identity on all remaining registers, the defining property of the
partial trace gives
\begin{align*}
\operatorname{Tr}\!\left[
 U_\pi(X\otimes Y)
\right]
&=
\operatorname{Tr}\!\left[
 \operatorname{SWAP}_j
 \left(
  \operatorname{Tr}_{N-j}(X)
  \otimes
  \operatorname{Tr}_{N-j}(Y)
 \right)
\right]\\
&=
\operatorname{Tr}\!\left[
 \operatorname{SWAP}_j(X_j\otimes Y_j)
\right].
\end{align*}

To evaluate the remaining trace, let $\{\ket a\}$ be an orthonormal
basis of the $j$-register space.  The swap operator has the expansion
\[
 \operatorname{SWAP}_j
 =
 \sum_{a,b}
 \ket b\!\bra a\otimes\ket a\!\bra b.
\]
Therefore,
\begin{align*}
\operatorname{Tr}\!\left[
 \operatorname{SWAP}_j(X_j\otimes Y_j)
\right]
&=
\sum_{a,b}
 \operatorname{Tr}(\ket b\!\bra a X_j)\,
 \operatorname{Tr}(\ket a\!\bra b Y_j)\\
&=
\sum_{a,b}
 \langle a|X_j|b\rangle
 \langle b|Y_j|a\rangle\\
&=
\sum_a
 \left\langle a\left|
 X_j
 \left(\sum_b\ket b\!\bra b\right)
 Y_j
 \right|a\right\rangle\\
&=
\sum_a\langle a|X_jY_j|a\rangle\\
&=
\operatorname{Tr}(X_jY_j).
\end{align*}
For
$j=0$, we interpret $X_0=\operatorname{Tr}(X)$ and
$Y_0=\operatorname{Tr}(Y)$, so the same formula also covers the
identity permutation class.

Combining the number of permutations in each class with the value of
their trace contribution yields
\begin{align*}
\operatorname{Tr}\!\left[
 P_{\mathrm{sym}}^{(2N)}(X\otimes Y)
\right]
&=
\frac{1}{(2N)!}
\sum_{j=0}^N
 (N!)^2\binom{N}{j}^2
 \operatorname{Tr}(X_jY_j)\\
&=
\frac{1}{\binom{2N}{N}}
\sum_{j=0}^N
 \binom{N}{j}^2
 \operatorname{Tr}(X_jY_j),
\end{align*}
which completes the proof.
\end{proof}

\begin{proof}[Proof of Proposition~\ref{prop:two-proof-soundness}]
The proof has two parts.  First, we use
Lemma~\ref{lem:double-coset-new} to express the quantity
in~\eqref{eq:two-proof-soundness} as a weighted average of overlaps
between reduced operators, and we bound each overlap by the quantities
controlled by Lemma~\ref{lem:second-moment-new}.  Second, we show that
the weights are concentrated on values of $j$ for which the
tensor-power approximation required by
Lemma~\ref{lem:second-moment-new} has inverse-polynomial error.

\paragraph{The weighted-average formula.}
For $i\in\{1,2\}$, let $\rho_j^{(i)}$ denote the reduction of
$\rho_M^{(i)}$ to $j$ registers, and define
\[
 X^{(i)}
 =
 (B^{1/2})^{\otimes N}
 \rho_N^{(i)}
 (B^{1/2})^{\otimes N},
 \qquad
 X_j^{(i)}
 =
 \operatorname{Tr}_{N-j}\!\left(X^{(i)}\right).
\]

Because $\rho_M^{(i)}$ is supported on
$\operatorname{Sym}^M(\mathcal H)$, its reduction $\rho_N^{(i)}$ is
supported on $\operatorname{Sym}^N(\mathcal H)$.  Moreover,
$(B^{1/2})^{\otimes N}$ commutes with every permutation of the $N$
registers.  It follows that $X^{(i)}$ is also supported on
$\operatorname{Sym}^N(\mathcal H)$.  It is positive semidefinite and
subnormalized, since
\[
 \operatorname{Tr}\!\left(X^{(i)}\right)
 =
 \operatorname{Tr}\!\left(B^{\otimes N}\rho_N^{(i)}\right)
 \le
 \operatorname{Tr}\!\left(\rho_N^{(i)}\right)
 =
 1.
\]

The commutivity of $P_{\mathrm{sym}}^{(2N)}$ and
$(B^{1/2})^{\otimes 2N}$ and Lemma~\ref{lem:double-coset-new} gives
\begin{align}
&\operatorname{Tr}\!\left[
 P_{\mathrm{sym}}^{(2N)}
 \left(X^{(1)}\otimes X^{(2)}\right)
\right]\notag\\
&=
\operatorname{Tr}\!\left[
 (B^{1/2})^{\otimes 2N}
 P_{\mathrm{sym}}^{(2N)}
 (B^{1/2})^{\otimes 2N}
 \left(\rho_N^{(1)}\otimes\rho_N^{(2)}\right)
\right]\notag\\
&=
\operatorname{Tr}\!\left[
 P_{\mathrm{sym}}^{(2N)}
 B^{\otimes 2N}
 \left(\rho_N^{(1)}\otimes\rho_N^{(2)}\right)
\right]\notag\\
&=
\frac{1}{\binom{2N}{N}}
\sum_{j=0}^{N}
 \binom{N}{j}^{2}
 \operatorname{Tr}\!\left(
  X_j^{(1)}X_j^{(2)}
 \right).
\label{eq:prop8-average}
\end{align}

\paragraph{Bounding each summand.}
We next relate $X_j^{(i)}$ to the operator obtained by applying
$B^{1/2}$ directly to the $j$-register reduction $\rho_j^{(i)}$.
Let $K$ be any positive semidefinite operator on $j$ registers.  By
the definition of the partial trace,
\begin{align*}
\operatorname{Tr}\!\left(KX_j^{(i)}\right)
&=
\operatorname{Tr}\!\left[
 \left(K\otimes I^{\otimes(N-j)}\right)X^{(i)}
\right]\\
&=
\operatorname{Tr}\!\left[
 \left(
  (B^{1/2})^{\otimes j}
  K
  (B^{1/2})^{\otimes j}
 \right)
 \otimes B^{\otimes(N-j)}
 \;\rho_N^{(i)}
\right].
\end{align*}
Moreover,
\begin{align*}
&
\left(
 (B^{1/2})^{\otimes j}
 K
 (B^{1/2})^{\otimes j}
\right)
\otimes B^{\otimes(N-j)}
\\
&\qquad\preceq
\left(
 (B^{1/2})^{\otimes j}
 K
 (B^{1/2})^{\otimes j}
\right)
\otimes I^{\otimes(N-j)}.
\end{align*}
Therefore,
\begin{align*}
\operatorname{Tr}\!\left(KX_j^{(i)}\right)
&\le
\operatorname{Tr}\!\left[
 \left(
  (B^{1/2})^{\otimes j}
  K
  (B^{1/2})^{\otimes j}
 \right)
 \otimes I^{\otimes(N-j)}
 \;\rho_N^{(i)}
\right]\\
&=
\operatorname{Tr}\!\left[
 (B^{1/2})^{\otimes j}
 K
 (B^{1/2})^{\otimes j}
 \rho_j^{(i)}
\right]\\
&=
\operatorname{Tr}\!\left[
 K
 (B^{1/2})^{\otimes j}
 \rho_j^{(i)}
 (B^{1/2})^{\otimes j}
\right].
\end{align*}
This holds for every $K\succeq0$.  In particular, taking
$K=\ket v\!\bra v$ for an arbitrary vector $\ket v$ shows that
\[
 \bra v
 \left[
  (B^{1/2})^{\otimes j}
  \rho_j^{(i)}
  (B^{1/2})^{\otimes j}
  -
  X_j^{(i)}
 \right]
 \ket v
 \ge0.
\]
Hence
\begin{align}
0
\preceq
X_j^{(i)}
\preceq
(B^{1/2})^{\otimes j}
\rho_j^{(i)}
(B^{1/2})^{\otimes j}.
\label{eq:prop8-order}
\end{align}

We now use this operator inequality to bound the overlap appearing
in~\eqref{eq:prop8-average}.  Since $X_j^{(i)}$ is positive semidefinite, it follows from~\eqref{eq:prop8-order} that
\[
 \left\|X_j^{(i)}\right\|_2
 \le
 \left\|
  (B^{1/2})^{\otimes j}
  \rho_j^{(i)}
  (B^{1/2})^{\otimes j}
 \right\|_2.
\]
The Hilbert--Schmidt Cauchy--Schwarz inequality therefore gives
\begin{align}
\operatorname{Tr}\!\left(
 X_j^{(1)}X_j^{(2)}
\right)
&\le
\left\|X_j^{(1)}\right\|_2
\left\|X_j^{(2)}\right\|_2
\notag\\
&\le
\prod_{i=1}^{2}
\left\|
 (B^{1/2})^{\otimes j}
 \rho_j^{(i)}
 (B^{1/2})^{\otimes j}
\right\|_2.
\label{eq:prop8-hs-bound}
\end{align}

\paragraph{Averaging over $j$.}
Let $J$ be the hypergeometric random variable with
\[
 \Pr(J=j)
 =
 \frac{\binom{N}{j}^{2}}{\binom{2N}{N}},
 \qquad 0\le j\le N.
\]
Equivalently, $J$ counts the number of marked objects obtained when
$N$ objects are sampled uniformly without replacement from a
population of $2N$ objects containing $N$ marked objects.  In
particular, $\mathbb E[J]=N/2$, and
\eqref{eq:prop8-average} is the expectation of
$\operatorname{Tr}(X_J^{(1)}X_J^{(2)})$.  Hoeffding's inequality for
sampling without replacement gives
\begin{align}
 \Pr\!\left(J<\frac N4\ \text{or}\ J>\frac{3N}{4}\right)
 \le 2e^{-N/8}.
 \label{eq:prop8-hypergeometric-tail}
\end{align}

Consider a value of $j$ in the central range
\[
 \frac N4\le j\le\frac{3N}{4}.
\]
Since
\[
 N=2\left\lfloor\frac14\log_2 M\right\rfloor,
\]
we have
\[
 j\le\frac{3N}{4}
 \le\frac38\log_2 M
 =\left(\frac12-\frac18\right)\log_2 M.
\]
Theorem~\ref{lem:tensor-power-approximation}, applied to
$\rho_M^{(i)}$ with parameter $\eta=1/8$, therefore shows that each
$\rho_j^{(i)}$ has Hilbert--Schmidt distance at most
\[
 \operatorname{poly}(\log M)M^{-1/16}
\]
from a mixture of $j$-fold tensor powers.  Moreover,
$\rho_j^{(i)}$ is entrywise nonnegative, because it is a reduction of
$\rho_M^{(i)}$, and
\[
 j\ge\frac N4=\Omega(\log M).
\]
Since $r<1$, Lemma~\ref{lem:second-moment-new} now gives, uniformly
for $N/4\le j\le 3N/4$ and for both $i\in\{1,2\}$,
\begin{align}
 \left\|
  (B^{1/2})^{\otimes j}
  \rho_j^{(i)}
  (B^{1/2})^{\otimes j}
 \right\|_2^2
 \le
 \frac14+
 \operatorname{poly}(\log M)M^{-\Omega_r(1)}.
 \label{eq:prop8-central-second-moment}
\end{align}
Combining this bound with~\eqref{eq:prop8-hs-bound} yields
\begin{align}
 \operatorname{Tr}\!\left(X_j^{(1)}X_j^{(2)}\right)
 \le
 \frac14+
 \operatorname{poly}(\log M)M^{-\Omega_r(1)}
 \label{eq:prop8-central-overlap}
\end{align}
throughout the central range.

For every $0\le j\le N$, the operators $X_j^{(i)}$ are positive
semidefinite and subnormalized.  Hence
\[
 \operatorname{Tr}\!\left(X_j^{(1)}X_j^{(2)}\right)
 \le
 \left\|X_j^{(1)}\right\|_2
 \left\|X_j^{(2)}\right\|_2
 \le1.
\]
Using~\eqref{eq:prop8-central-overlap} in the central range and this
trivial bound outside it, equation~\eqref{eq:prop8-average} gives
\begin{align*}
&\operatorname{Tr}\!\left[
 P_{\mathrm{sym}}^{(2N)}
 B^{\otimes2N}
 \left(\rho_N^{(1)}\otimes\rho_N^{(2)}\right)
\right]\\
&\quad=
 \mathbb E\!\left[
  \operatorname{Tr}\!\left(X_J^{(1)}X_J^{(2)}\right)
 \right]\\
&\quad\le
 \left(
  \frac14+
  \operatorname{poly}(\log M)M^{-\Omega_r(1)}
 \right)
 \Pr\!\left(\frac N4\le J\le\frac{3N}{4}\right)
 +
 \Pr\!\left(J<\frac N4\ \text{or}\ J>\frac{3N}{4}\right)\\
&\quad\le
 \frac14+
 \operatorname{poly}(\log M)M^{-\Omega_r(1)}
 +2e^{-N/8}\\
&\quad=
 \frac14+
 \operatorname{poly}(\log M)M^{-\Omega_r(1)}.
\end{align*}
 This proves the proposition.
\end{proof}

\subsection{Efficient implementation and proof of Theorem~\ref{thm:main}}

For either \(R=M\) or \(R=2N\), the measurement
\[
  \left\{P_{\mathrm{sym}}^{(R)},I-P_{\mathrm{sym}}^{(R)}\right\}
\]
is efficiently implementable by weak Schur sampling.  Apply the
high-dimensional Schur transform, measure the partition label
\(\lambda\), accept if and only if \(\lambda=(R)\), and apply the
inverse Schur transform on the accepting branch.  The
high-dimensional Schur transform of Burchardt et
al.~\cite{burchardt2025high} has complexity \(\widetilde O(R^4)\) in
the regime relevant here, with only polylogarithmic dependence on the
local dimension and polynomial dependence on the logarithm of the
inverse circuit error $\delta_{Schur}$.  The circuit error $\delta_{Schur}$ can be chosen
inverse-exponentially small without affecting polynomial-time implementability.

Since \(R\) and \(\log\dim(\mathcal H)\) are polynomially bounded in
the input length, both symmetric-subspace measurements are implemented
by uniform polynomial-size quantum circuits.  Hence the whole
procedure, including \(B^{\otimes2N}\), runs in polynomial time.
Together with the completeness and soundness analyses, this proves
\[
  \NEXP
  \subseteq
  \QMAplus\!\left(2,1-\frac1{p(n)},\frac14+\frac1{p(n)}\right).
\]
The reverse containment follows by direct exponential-time simulation,
so Theorem~\ref{thm:main} follows.  Finally,
Corollary~\ref{cor:general-amplification} follows from
\(\QMAplus(k,c,s)\subseteq\NEXP\).

 
\section{A dimension-independent de Finetti theorem}
\label{sec:tensor-power-approximation}
 
This section proves the dimension-independent tensor-power approximation
used in the soundness analysis.  Let \(\Gamma_m\) be any state supported
on \(\Sym^m(\mathcal H)\).  We show that its reduced state on \(\ell\)
registers is close in Hilbert--Schmidt norm to a convex combination of
states of the form \((\proj\psi)^{\otimes\ell}\), with an error independent
of \(\dim\mathcal H\).  Statements of this form are commonly called
finite quantum de Finetti theorems.  We state and prove the precise form
needed here.
 
\begin{theorem}[Dimension-independent approximation by tensor-power mixtures]
\label{lem:tensor-power-approximation}
Let \(\Gamma_m\) be a density operator supported on
\(\Sym^m(\mathcal H)\), and let \(\rho_\ell\) be its reduced state on
\(1\le\ell\le m\) registers.  There is a probability measure \(\mu\)
on the unit sphere of \(\mathcal H\) such that
\[
  \sigma_\ell
  :=\int \bigl(\proj{\psi}\bigr)^{\otimes\ell}\,d\mu(\psi)
\]
satisfies
\begin{equation}
\label{eq:hs-tensor-power-main}
  \lVert\rho_\ell-\sigma_\ell\rVert_2
  \le
  \sqrt{2(2^\ell-1)}
  \left(
    \frac{3+2\ell(\ell-1)}m
  \right)^{1/4}
  +\frac{8\ell+\sqrt\ell}{\sqrt m}.
\end{equation}
In particular, for every fixed \(0<\eta<1/2\), if
\[
  \ell\le(1/2-\eta)\log_2m,
\]
then one can choose \(\sigma_\ell\) so that
\[
  \lVert\rho_\ell-\sigma_\ell\rVert_2
  \le \operatorname{poly}(\log m)m^{-\eta/2}.
\]
The bound is uniform in the local dimension
\(\dim\mathcal H\).
\end{theorem}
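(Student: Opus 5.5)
We follow the heavy/light decomposition sketched in Section~\ref{sec:intro-tensor-strategy}. Let \(\rho_1\) be the one-register reduction of \(\Gamma_m\). Let \(P\) be the spectral projector of \(\rho_1\) onto eigenvalues \(>1/m\), and let \(Q=I-P\). Since \(\Tr\rho_1=1\), we have \(\operatorname{rank}P\le m\), while \(Q\rho_1Q\preceq Q/m\). Hence \(\Tr[(Q\rho_1Q)^2]\le 1/m\), with no dependence on \(\dim\mathcal H\). Inserting \(I=P+Q\) on each of the \(\ell\) registers on both sides (or, more economically, grouping by patterns \(s\in\{P,Q\}^\ell\)) writes \(\rho_\ell=P^{\otimes\ell}\rho_\ell P^{\otimes\ell}+R\). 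Here \(R\) is a sum of \(2^\ell-1\) terms of the form \(\Pi_s\rho_\ell\) or \(\Pi_s\rho_\ell\Pi_{s'}\) in which at least one \(Q\) appears. The target is to show each such term has Hilbert--Schmidt norm at most \(\sqrt2\,((3+2\ell(\ell-1))/m)^{1/4}\). That accounts for the first term of \eqref{eq:hs-tensor-power-main}, via the triangle inequality together with a Cauchy--Schwarz step that combines \(2^\ell-1\) contributions into \(\sqrt{2(2^\ell-1)}\) times the square root of a purity bound.

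The key step, and the main obstacle, is the projected-pattern purity bound: for a pattern with a \(Q\) in register \(1\), say, \(\lVert (Q\otimes \Pi')\rho_\ell\rVert_2^2\lesssim\Tr[(Q\rho_1Q)^2]+O(\ell^2/m)\). I would compute this with the swap identity. First write \(\lVert Q_1\rho_\ell\rVert_2^2=\Tr[\rho_\ell Q_1\rho_\ell]\). Express this as \(\Tr[(\Gamma_m\otimes\Gamma_m)\,(Q_1\otimes I)\,F_{[\ell]}]\), where \(F_{[\ell]}\) swaps the first \(\ell\) registers of the two copies. Permutation invariance of each copy lets us average the positions of the swapped registers over the \(m\) registers. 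The purification picture, in which \(\Gamma_m\) is a mixture of symmetric pure states, reduces the comparison to one between swapping \(\ell\) matched registers and swapping a single register. The symmetrization costs \(O(\ell^2/m)\) through a collision count: the probability that a random placement of \(\ell\) registers overlaps the designated ones. The surviving term is \(\Tr[(Q\rho_1 Q)^2]\) plus \(O(1/m)\) slack. This is where the constants \(3+2\ell(\ell-1)\) should come from, and a fourth root appears because we bound cross terms \(\Pi_s\rho_\ell\Pi_{s'}\) by \(\lVert\cdot\rVert_2\le(\text{purity of a }Q\text{-block})^{1/4}\) via Cauchy--Schwarz on \(\rho_\ell^{1/2}\). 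If the direct swap computation is unwieldy, I would first try the simpler route of bounding \(\Tr[(Q^{\otimes 1}\rho_\ell Q^{\otimes1})^2]\le \Tr[\rho_\ell\,(Q\rho_1Q)\otimes I]\)-type estimates using the symmetric extension to \(m\) copies.

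For the surviving block \(P^{\otimes\ell}\rho_\ell P^{\otimes\ell}\), decompose \(\Gamma_m\) by the number \(k\) of registers lying in \(\operatorname{ran}P\) (i.e.\ apply \(\sum_{|S|=k}P_S\otimes Q_{S^c}\), which commutes with symmetry). Each component restricted to its \(k\) heavy registers is a symmetric state on a space of dimension at most \(m\). Apply the known trace-distance de Finetti theorem of Jeronimo--Wu--Xu~\cite{jeronimo2026optimal} with effective dimension \(d\le m\), whose error \(O(\ell\sqrt d/k)\) becomes \(O(\ell/\sqrt m)\) on the components with \(k\) of order \(m\). Components with small \(k\) carry little \(P^{\otimes \ell}\) weight, because the hypergeometric sampling of \(\ell\) registers gives weight \((k/m)^\ell\)-type factors, up to \(O(\ell^2/m)\). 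This yields a subnormalized tensor-power mixture \(\sigma'\) supported on \(\operatorname{ran}P\) with \(\lVert P^{\otimes\ell}\rho_\ell P^{\otimes\ell}-\sigma'\rVert_2\le\lVert\cdot\rVert_1\le 8\ell/\sqrt m\).

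Finally, restore normalization. The missing weight \(\delta=1-\Tr\sigma'\) is placed on tensor powers \((\proj{e_i})^{\otimes\ell}\) of eigenvectors \(e_i\) of \(Q\rho_1Q\), weighted proportionally to their eigenvalues, each of which is at most \(1/m\). The HS norm of this added piece is \(\delta\,(\sum_i p_i^2)^{1/2}\le\sqrt{1/m}\)-ish. A slightly careful accounting should give the stated \(\sqrt\ell/\sqrt m\). Summing the three contributions gives \eqref{eq:hs-tensor-power-main}. For \(\ell\le(1/2-\eta)\log_2 m\) we have \(2^{\ell/2}\le m^{1/4-\eta/2}\), so the first term is \(\operatorname{poly}(\ell)m^{-\eta/2}\), giving the stated corollary.
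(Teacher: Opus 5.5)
Your architecture is the paper's: cut $\rho_1$ at $1/m$, delete every $Q$-containing block, apply Jeronimo--Wu--Xu sector by sector on $P\mathcal H$, and pad with light eigenvectors. But the step that carries all of the dimension independence, the projected-pattern purity bound, is asserted rather than proved, and the route you sketch does not supply it.

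You state it one-sidedly, $\lVert(Q\otimes\Pi')\rho_\ell\rVert_2^2\lesssim\Tr[(Q\rho_1Q)^2]+O(\ell^2/m)$. After averaging over positions, the $Q$-factor becomes $m^{-2}\sum_{i,j}Q_i^{(1)}F_{ij}$, which is not self-adjoint. Nothing in your sketch explains why the expectation of a product of $\ell$ such averaged swaps should be controlled by the expectation of the single $Q$-factor. ``The purification picture reduces the comparison to swapping a single register'' is exactly the claim that needs proof. The one-sided squared bound is also inconsistent with the rest of your sketch: if it held, no fourth root would be needed at all.

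The missing idea is to bound only the diagonal blocks $\lVert\Pi_{\mathbf S}\rho_\ell\Pi_{\mathbf S}\rVert_2^2$, with the pattern on \emph{both} copies. Independent position averaging then factorizes exactly as $\mathcal T_P^{\ell-q}\mathcal T_Q^{q}$, where $\mathcal T_S=m^{-2}\sum_{i,j}S_i^{(1)}S_j^{(2)}F_{ij}$. On $\Sym^m(\mathcal H)\otimes\Sym^m(\mathcal H)$ these operators have three properties:
\begin{itemize}
\item they are self-adjoint;
\item they commute, because terms sharing a position vanish by $PQ=0$ and the remaining terms act on disjoint registers;
\item they satisfy $-\tfrac1m I\preceq\mathcal T_S\preceq I$, with the lower bound coming from $\lVert\ket{\chi}+\sum_iF_{i1}\ket{\chi}\rVert^2\ge0$.
\end{itemize}
The scalar inequality $s^{\ell-q}t^{q}\le t+\tfrac2m$ on $[-\tfrac1m,1]^2$ then gives $\mathcal T_P^{\ell-q}\mathcal T_Q^{q}\preceq\mathcal T_Q+\tfrac2m I$. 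The expectation of $\mathcal T_Q$ in $\Gamma_m\otimes\Gamma_m$ is $\Tr[(Q\rho_1Q)^2]$. Replacing distinct positions by independent ones costs $2\ell(\ell-1)/m$ in operator norm. The negative part of the spectrum is precisely why factors cannot simply be discarded, and it is the source of the extra $2$ in $3+2\ell(\ell-1)$.

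Two smaller points also need fixing.

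First, the factor $\sqrt{2(2^\ell-1)}$ cannot come from the triangle inequality over $2^\ell-1$ terms. That would give $(2^\ell-1)m^{-1/4}$, which at $\ell=(1/2-\eta)\log_2m$ is $m^{1/4-\eta}$ up to logarithms and does not even tend to zero for $\eta\le1/4$. You need Hilbert--Schmidt orthogonality of the pattern blocks, for example $\lVert\rho_\ell-P^{\otimes\ell}\rho_\ell P^{\otimes\ell}\rVert_2^2\le2\sum_{\mathbf S\ne(P,\dots,P)}\lVert\Pi_{\mathbf S}\rho_\ell\rVert_2^2$ together with $\lVert\Pi_{\mathbf S}\rho_\ell\rVert_2^2\le\lVert\Pi_{\mathbf S}\rho_\ell\Pi_{\mathbf S}\rVert_2$. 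The paper instead bounds $\lVert(I-P^{\otimes\ell})\rho_\ell(I-P^{\otimes\ell})\rVert_\infty$ and uses $\lVert\rho-\Pi\rho\Pi\rVert_2^2\le2\lVert(I-\Pi)\rho(I-\Pi)\rVert_\infty$. That is where the fourth root genuinely enters.

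Second, in the padding step, $\delta(\sum_ip_i^2)^{1/2}$ is not $O(m^{-1/2})$ by itself. With $p_i=\lambda_i/\Tr(Q\rho_1)$ you only get $\sum_ip_i^2\le1/(m\Tr(Q\rho_1))$, which is large when $\Tr(Q\rho_1)$ is tiny. The needed input is $\delta=1-\Tr\rho_P\le\ell\,\Tr(Q\rho_1)$, which follows from $I-P^{\otimes\ell}\preceq\sum_jQ_j$; with it the padding term is at most $\sqrt{\ell/m}$.

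Your heavy-part step is fine. No $O(\ell^2/m)$ correction is needed there, since $\binom{k}{\ell}\big/\bigl(k\binom{m}{\ell}\bigr)\le1/m$ exactly.
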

 
\medskip
\noindent\textbf{Relation to previous bounds.}
 
Theorem~\ref{lem:tensor-power-approximation} approximates a reduced
state by a mixture of identical pure tensor powers.  Standard finite
quantum de Finetti theorems give this conclusion in trace norm, with
an error growing linearly with the local dimension
\cite{konig2005finetti,christandl2007one,lewin2015remarks}.  For a state supported on the fully
symmetric subspace, Jeronimo, Wu, and Xu~\cite[Theorem~1.3]{jeronimo2026optimal} recently proved the optimal
trace-norm rate: for \(2\le\ell\le m\),
\[
  \inf_{\sigma_\ell}
  \frac12\lVert\rho_\ell-\sigma_\ell\rVert_1
  \le \frac{4\ell\sqrt{d-1}}m,
\]
where \(d=\dim\mathcal H\) and the infimum is over mixtures of
identical pure tensor powers.  Matching
lower bounds follow from explicit constructions going back
to~\cite{christandl2007one}; in particular, the linear dependence on \(\ell\)
cannot be improved in general \cite[Theorem~B.5]{jeronimo2026optimal}.
 
For the reduced state on two registers, they also prove the
dimension-independent Hilbert--Schmidt bound
\[
  \inf_{\sigma_2}\lVert\rho_2-\sigma_2\rVert_2
  \le
  \min\!\left\{
    \frac{\sqrt{d-1}}{m-1},
    \frac8{\sqrt{m-1}}
  \right\},
\]
whose \(m^{-1/2}\) rate is optimal when the local dimension may grow
with \(m\) \cite[Theorems~6.1 and~B.6]{jeronimo2026optimal}, the latter by a
construction adapted from Jiang, Tacla, and Caves~\cite{jiang2017bosonic}.
The distinguishing feature of Theorem~\ref{lem:tensor-power-approximation} is that the number of registers
under consideration may grow as large as \(\Theta(\log m)\).  The two-register theorem does not by itself yield this statement. 
In particular, grouping several original registers into one larger register would produce tensor powers of an arbitrary state on the grouped register, rather than tensor powers of one state on the original register.
 
A complementary route to dimension independence keeps arbitrary
permutation-invariant inputs but restricts the measurements against
which the error is evaluated.  Brand\~ao, Christandl, and Yard and Brand\~ao and Christandl~\cite{brandao2011faithful,brandao2012detection} showed that the dimension dependence improves to logarithmic against one-way LOCC measurements, and Brand\~ao and Harrow proved de Finetti theorems under local measurements with algorithmic applications~\cite{brandao2013quantum}. 
These bounds control restricted measurements only, whereas
Theorem~\ref{lem:tensor-power-approximation} controls the reduced state itself in Hilbert--Schmidt norm, as required by an unrestricted verifier.
 
An earlier Frobenius-norm (equivalently, Hilbert--Schmidt-norm)
theorem of Trimborn, Werner, and Witthaut~\cite{trimborn2016quantum} treats general
\(\ell\) for the same input class and the same approximating
mixtures, with the dimension-dependent bound
\[
  \inf_{\sigma_\ell}\lVert\rho_\ell-\sigma_\ell\rVert_2^2
  \le \frac{\ell^4(d^2+\ell)}{(m+d)^2}.
\]
To our knowledge, Theorem~\ref{lem:tensor-power-approximation} is the
first finite quantum de Finetti theorem in Hilbert--Schmidt norm that
simultaneously permits the number of registers under consideration to
grow unboundedly with \(m\), remains independent of the local
dimension, and approximates by mixtures of identical pure tensor
powers.
 
\medskip
\noindent\textbf{Proof overview.}
 
The proof has three steps.  We first split the one-register reduced
state at eigenvalue \(1/m\).  The part above this threshold has rank at
most \(m\), so the theorem of Jeronimo, Wu, and Xu applies there.  The
part at or below the threshold can have arbitrarily large rank, but its
contribution to the \(\ell\)-register reduced state is small in
Hilbert--Schmidt norm.  After discarding this contribution and
approximating the remaining part, we add a small tensor-power mixture
to restore trace one.
 
The first two approximation steps are
Lemmas~\ref{lem:remove-low-spectrum} and
\ref{lem:approximate-high-spectrum}.  We state them next and prove the
approximation theorem from them.  Their proofs are deferred to
Appendix~\ref{app:tensor-power-proofs}: the former rests on an
auxiliary bound for projected blocks of \(\rho_\ell\)
(Lemma~\ref{lem:projected-pattern-bound}), proved there by a swap
calculation, and the latter is derived from the de Finetti theorem~\cite{jeronimo2026optimal}
(Theorem~\ref{thm:jwx-tensor-power-approximation}).
 
\begin{restatable}[Removing the low-eigenvalue part]{lemma}{removelowspectrum}
\label{lem:remove-low-spectrum}
Let \(\Gamma_m\) be a density operator supported on
\(\Sym^m(\mathcal H)\).  Fix \(1\le\ell\le m\), and let
\(\rho_1\) and \(\rho_\ell\) be its reduced states on one and
\(\ell\) registers, respectively.  Let
\[
  P=\one_{(1/m,1]}(\rho_1).
\]
Here, \(\one_{(1/m,1]}(\rho_1)\) denotes the orthogonal projector
onto the span of the eigenvectors of \(\rho_1\) whose eigenvalues
are larger than \(1/m\).
Then
\[
  \left\|
    \rho_\ell-P^{\otimes\ell}\rho_\ell P^{\otimes\ell}
  \right\|_2
  \le
  \sqrt{2(2^\ell-1)}
  \left(
    \frac{3+2\ell(\ell-1)}m
  \right)^{1/4}.
\]
\end{restatable}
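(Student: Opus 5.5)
The plan is to write the difference through the projector \(\Pi:=P^{\otimes\ell}\) and its complement, and to reduce everything to projected blocks in which at least one register carries \(Q:=I-P\). Since \(\rho_\ell=\Pi\rho_\ell\Pi+(I-\Pi)\rho_\ell(I-\Pi)+(I-\Pi)\rho_\ell\Pi+\Pi\rho_\ell(I-\Pi)\), and the three remaining pieces are mutually Hilbert--Schmidt orthogonal, we get
\[
  \lVert\rho_\ell-\Pi\rho_\ell\Pi\rVert_2^2
  =\lVert(I-\Pi)\rho_\ell(I-\Pi)\rVert_2^2+2\lVert(I-\Pi)\rho_\ell\Pi\rVert_2^2
  \le 2\lVert(I-\Pi)\rho_\ell\rVert_2^2 .
\]
This is the source of the factor \(\sqrt2\). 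Next expand \(I-\Pi=\sum_{S}Q_S\) over the \(2^\ell-1\) nonempty patterns \(S\subseteq[\ell]\), where \(Q_S\) places \(Q\) on the registers in \(S\) and \(P\) elsewhere. These are mutually orthogonal projectors, so \(\lVert(I-\Pi)\rho_\ell\rVert_2^2=\sum_S\lVert Q_S\rho_\ell\rVert_2^2\).

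To pass from \(Q_S\rho_\ell\) to the two-sided block \(Q_S\rho_\ell Q_S\), which is what the swap calculation controls, I will use a Cauchy--Schwarz step that produces the fourth root. Writing \(Q_S\rho_\ell=Q_S\rho_\ell^{1/2}\cdot\rho_\ell^{1/2}\), we have
\[
  \lVert Q_S\rho_\ell\rVert_2^2
  =\Tr\bigl[(\rho_\ell^{1/2}Q_S\rho_\ell^{1/2})\rho_\ell\bigr]
  \le\lVert\rho_\ell^{1/2}Q_S\rho_\ell^{1/2}\rVert_2\,\lVert\rho_\ell\rVert_2
  \le\lVert\rho_\ell^{1/2}Q_S\rho_\ell^{1/2}\rVert_2 .
\]
Moreover \(\lVert\rho_\ell^{1/2}Q_S\rho_\ell^{1/2}\rVert_2^2=\Tr(Q_S\rho_\ell Q_S\rho_\ell)=\lVert Q_S\rho_\ell Q_S\rVert_2^2\), using \(Q_S^2=Q_S\) and cyclicity. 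Hence \(\lVert Q_S\rho_\ell\rVert_2^2\le\lVert Q_S\rho_\ell Q_S\rVert_2\). Once each block satisfies \(\lVert Q_S\rho_\ell Q_S\rVert_2^2\le(3+2\ell(\ell-1))/m\), summing over the \(2^\ell-1\) patterns and taking square roots gives exactly the claimed bound.

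The main obstacle is this per-block purity bound, i.e.\ Lemma~\ref{lem:projected-pattern-bound}. By permutation invariance of \(\Gamma_m\) we may assume \(1\in S\), and then try to show
\[
  \lVert Q_S\rho_\ell Q_S\rVert_2^2\le\Tr[(Q\rho_1Q)^2]+\frac{2+2\ell(\ell-1)}m .
\]
The first term is at most \(\tfrac1m\Tr(Q\rho_1Q)\le\tfrac1m\), because every eigenvalue of \(Q\rho_1Q\) is at most \(1/m\).

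To prove the block bound, I would use the swap identity to write the purity as \(\Tr[(Y\otimes Y)F_\ell]\), where \(Y=Q_S\rho_\ell Q_S\) and \(F_\ell\) swaps the two \(\ell\)-register blocks. I would then lift \(Y\otimes Y\) to two copies of \(\Gamma_m\). Because each copy is supported on \(\Sym^m(\mathcal H)\), the swap of the first register can be averaged over all \(m\) positions inside one copy. The only configurations that escape the one-register term \(\Tr[(Q\rho_1Q)^2]\) are those in which the averaged position collides with one of the other \(\ell-1\) selected registers, or the \(P/Q\) pattern is disturbed by such a collision. There are \(O(\ell^2)\) such collisions, each of probability \(O(1/m)\), and each contributes at most a trace-one quantity. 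Bookkeeping these collision terms carefully should give the constant \(3+2\ell(\ell-1)\). This counting, and verifying that the non-colliding term really factors into \(\Tr[(Q\rho_1Q)^2]\), is where I expect the real work to be.
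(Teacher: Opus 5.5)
Your reduction to per-block bounds is correct, but it takes a different route from the paper's. You split $\rho_\ell-\Pi\rho_\ell\Pi$ into Hilbert--Schmidt-orthogonal pieces, expand $I-\Pi$ into one-sided blocks $Q_S\rho_\ell$, and use $\Tr(\rho_\ell Q_S\rho_\ell)\le\|Q_S\rho_\ell Q_S\|_2\|\rho_\ell\|_2$. The paper instead bounds all $(2^\ell-1)^2$ two-sided blocks $\Pi_{\mathbf S}\rho_\ell\Pi_{\mathbf T}$ by Cauchy--Schwarz, sums them to control $\|(I-\Pi)\rho_\ell(I-\Pi)\|_\infty$, and then invokes Fact~2.3 of \cite{jeronimo2026optimal}. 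Your version is self-contained and gives exactly the same constant.

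The gap is the per-block purity bound (Lemma~\ref{lem:projected-pattern-bound}), which carries all the content about symmetric states, and your sketch lacks the idea that proves it. Once the $Q$-swap of register $1$ is moved to an averaged position, you face $\Tr[(\Gamma_m\otimes\Gamma_m)AR]$. Here $A$ is an average of operators $Q_i^{(1)}Q_j^{(2)}F_{ij}$ and $R=\prod_{r\ge2}(S_r)^{(1)}_r(S_r)^{(2)}_rF_{rr}$. This does \emph{not} factor into $\Tr[(Q\rho_1Q)^2]$ times a number at most one. $R$ is a self-adjoint contraction that can be negative, and $A$ need not be positive, so $R$ cannot simply be dropped. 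Collision counting does not address this; note that the $2/m$ inside $3+2\ell(\ell-1)$ is not a collision term.

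The missing fact is spectral: on the doubly symmetric subspace an averaged swap is bounded below by $-1/m$. For a unit $\Psi\in\Sym^m(\mathcal H)\otimes\Sym^m(\mathcal H)$ and $\chi=Q_1^{(2)}\Psi$, the vectors $\chi,F_{11}\chi,\dots,F_{m1}\chi$ have equal norms and a common pairwise inner product $t=\langle\Psi|\mathcal T_Q|\Psi\rangle$. Hence $0\le\|\chi+\sum_iF_{i1}\chi\|^2=(m+1)\|\chi\|^2+m(m+1)t$ forces $t\ge-1/m$. If the averaged operator commutes with the remaining contraction factor, joint diagonalization then gives (product) $\preceq A+\tfrac2m I$ on the relevant subspace, and $\Tr[(\Gamma_m\otimes\Gamma_m)A]=\Tr[(Q\rho_1Q)^2]$ finishes the argument.

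The paper implements this by averaging all $2\ell$ positions independently, so that the average factors as $\mathcal T_P^{\ell-q}\mathcal T_Q^{q}$. Lemma~\ref{lem:averaged-swaps} shows these commute (using $PQ=0$) and have spectra in $[-1/m,1]$. The scalar inequality $s^{\ell-q}t^q\le t+2/m$ then gives $\mathcal T_P^{\ell-q}\mathcal T_Q^{q}\preceq\mathcal T_Q+\tfrac2mI$. Comparing independent with distinct position lists costs $2\ell(\ell-1)/m$.

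With this spectral fact your one-copy averaging would also close, even without collisions. Average register $1$'s first-copy position only over the $m-\ell+1$ positions not used by registers $2,\dots,\ell$. Then $A$ and $R$ act on disjoint registers, hence commute, and $A$ has spectrum in $[-1/(m-\ell+1),1]$ on the relevant subspace by the same Gram argument. This yields $\|Q_S\rho_\ell Q_S\|_2^2\le\Tr[(Q\rho_1Q)^2]+2/(m-\ell+1)$, which is at most $(3+2\ell(\ell-1))/m$.
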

 
\begin{restatable}[Approximating the high-eigenvalue part]{lemma}{approximatehighspectrum}
\label{lem:approximate-high-spectrum}
Let \(\Gamma_m\) be a density operator supported on
\(\Sym^m(\mathcal H)\).  Fix \(1\le\ell\le m\), and let
\(\rho_\ell\) be its reduced state on \(\ell\) registers.
Let \(P\) be a nonzero orthogonal projector on \(\mathcal H\), and write
\[
  \rho_P=P^{\otimes\ell}\rho_\ell P^{\otimes\ell}.
\]
There is a subnormalized mixture \(\widetilde\rho_P\) of states
\((\proj\psi)^{\otimes\ell}\), with \(\ket\psi\in P\mathcal H\),
such that
\[
  \Tr\widetilde\rho_P=\Tr\rho_P,
  \qquad
  \lVert\rho_P-\widetilde\rho_P\rVert_1
  \le
  \frac{8\ell\sqrt{\operatorname{rank}P-1}}m.
\]
\end{restatable}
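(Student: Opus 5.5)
The plan is to reduce to the trace-norm de Finetti theorem of Jeronimo, Wu, and Xu (Theorem~\ref{thm:jwx-tensor-power-approximation}) on the subspace \(P\mathcal H\), whose dimension is \(d_P:=\operatorname{rank}P\). To do this, I would decompose \(\Gamma_m\) according to how many of the \(m\) registers lie in \(P\mathcal H\).

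\textbf{Step 0 (the case \(\ell=1\)).} If \(\ell=1\), then \(\rho_P=P\rho_1P\) is itself a mixture of pure states \(\proj\psi\) with \(\ket\psi\in P\mathcal H\), by its spectral decomposition. So I would take \(\widetilde\rho_P=\rho_P\), which has zero error. From now on assume \(\ell\ge2\).

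\textbf{Step 1 (decomposition by the number of \(P\)-registers).} Let \(Q=I-P\). For \(0\le K\le m\), let \(\Pi_K\) be the projector onto the span of the products of \(P\)'s and \(Q\)'s on \(m\) registers containing exactly \(K\) factors \(P\). Each \(\Pi_K\) commutes with every \(U_\pi\). Hence \(\Gamma^{(K)}:=\Pi_K\Gamma_m\Pi_K\) is again supported on \(\Sym^m(\mathcal H)\). Write \(p_K:=\Tr\Gamma^{(K)}\), so that \(\sum_Kp_K=1\).

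Next consider \(\Tr_{m-\ell}[(P^{\otimes\ell}\otimes I)\Gamma_m(P^{\otimes\ell}\otimes I)]\). Insert \(I=\sum_{S}\bigotimes(P\text{ or }Q)\) on the traced registers. The partial trace forces the \(P/Q\) pattern on those registers to agree on the bra and ket sides. Together with \(P^{\otimes\ell}\) on the first \(\ell\) registers, this forces the total count \(K\) to agree. Therefore the cross terms \(\Pi_K\Gamma_m\Pi_{K'}\) with \(K\ne K'\) drop out, and
\[
  \rho_P=\sum_K \Tr_{m-\ell}\bigl[(P^{\otimes\ell}\otimes I)\Gamma^{(K)}(P^{\otimes\ell}\otimes I)\bigr].
\]

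\textbf{Step 2 (factorization of each term).} A vector in \(\Sym^m(\mathcal H)\cap\operatorname{ran}\Pi_K\) is the symmetrization of a vector in \(\Sym^K(P\mathcal H)\otimes\Sym^{m-K}(Q\mathcal H)\). I would normalize \(\Gamma^{(K)}/p_K\) and pull it back to a state \(\omega_K\) on \(\Sym^K(P\mathcal H)\otimes\Sym^{m-K}(Q\mathcal H)\). Then I would check, by a direct computation with the symmetrized basis, that
\[
  \Tr_{m-\ell}\bigl[(P^{\otimes\ell}\otimes I)\Gamma^{(K)}(P^{\otimes\ell}\otimes I)\bigr]
  =p_K\,\frac{\binom{K}{\ell}}{\binom{m}{\ell}}\;\tau_{K,\ell},
\]
where \(\tau_{K,\ell}\) is the \(\ell\)-register reduction of the \(\Sym^K(P\mathcal H)\) marginal of \(\omega_K\). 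This is the probability that \(\ell\) positions chosen at random all land among the \(K\) \(P\)-positions, times that reduction. The term vanishes for \(K<\ell\).

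\textbf{Step 3 (apply the de Finetti theorem).} Apply Theorem~\ref{thm:jwx-tensor-power-approximation} to the \(\Sym^K(P\mathcal H)\) state, in dimension \(d_P\), with \(2\le\ell\le K\). This gives a normalized mixture \(\widetilde\tau_{K,\ell}\) of \((\proj\psi)^{\otimes\ell}\) with \(\ket\psi\in P\mathcal H\) such that \(\lVert\tau_{K,\ell}-\widetilde\tau_{K,\ell}\rVert_1\le 8\ell\sqrt{d_P-1}/K\). Set
\[
  \widetilde\rho_P:=\sum_{K\ge\ell}p_K\binom K\ell\binom m\ell^{-1}\widetilde\tau_{K,\ell}.
\]
Its trace matches \(\Tr\rho_P\) term by term. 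By the triangle inequality and the bound \(\binom K\ell/\binom m\ell\le K/m\), valid for \(1\le\ell\le K\le m\), the total error is at most
\[
  \sum_K p_K\,\frac Km\cdot\frac{8\ell\sqrt{d_P-1}}K\le\frac{8\ell\sqrt{d_P-1}}m.
\]

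\textbf{Main obstacle.} I expect the main obstacle to be Step~2: verifying cleanly that the \(P^{\otimes\ell}\)-projected reduction of a fixed-\(K\) symmetric state equals exactly the hypergeometric weight times the reduction of a genuinely \(P\)-supported symmetric \(K\)-register state. I would handle this by writing the state in the occupation-number picture, as the symmetrization of \(\ket{\alpha}\otimes\ket{\beta}\) with \(\ket\alpha\in\Sym^K(P\mathcal H)\) and \(\ket\beta\in\Sym^{m-K}(Q\mathcal H)\), extended by linearity. The orthogonality of the \(Q\)-parts under the partial trace then gives the factorization.
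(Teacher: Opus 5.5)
Your proposal is correct and follows essentially the same route as the paper. Both arguments decompose by the number of $P$-outcomes, kill cross terms with the partial trace, obtain $\rho_P=\sum_{K\ge\ell} p_K\binom K\ell\binom m\ell^{-1}\tau_{K,\ell}$, apply the Jeronimo--Wu--Xu theorem sector by sector, and sum using $\binom K\ell/\binom m\ell\le K/m$. The only difference is presentational: the paper builds the sector state directly as $\Omega_a=\binom ma\Tr_{>a}[\Pi^{([a])}\Gamma_m\Pi^{([a])}]$ from a representative pattern, instead of your isometric pull-back to $\Sym^K(P\mathcal H)\otimes\Sym^{m-K}(Q\mathcal H)$, and your $\tau_{K,\ell}$ coincides with its $(\Omega_K)_\ell/p_K$.
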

 
We now prove the approximation theorem from these two statements.
 
\begin{proof}[Proof of Theorem~\ref{lem:tensor-power-approximation}]
Let \(\rho_1\) be the one-register reduced state of \(\Gamma_m\), and
define
\[
  P=\one_{(1/m,1]}(\rho_1),
  \qquad
  Q=I-P,
  \qquad
  \rho_P=P^{\otimes\ell}\rho_\ell P^{\otimes\ell}.
\]
Here, \(P\) is the orthogonal projector onto the span of the
eigenvectors of \(\rho_1\) whose eigenvalues are larger than \(1/m\).
Since \(\Tr\rho_1=1\), the rank of \(P\) is at most \(m\).
 
Lemma~\ref{lem:remove-low-spectrum} gives
\[
  \left\|\rho_\ell-\rho_P\right\|_2
  \le
  \sqrt{2(2^\ell-1)}
  \left(
    \frac{3+2\ell(\ell-1)}m
  \right)^{1/4}.
\]
If \(P\ne0\), let \(\widetilde\rho_P\) denote the subnormalized
tensor-power mixture supplied by
Lemma~\ref{lem:approximate-high-spectrum}.  If \(P=0\), set
\(\widetilde\rho_P=0\).  In either case,
\[
  \Tr\widetilde\rho_P=\Tr\rho_P,
  \qquad
  \left\|\rho_P-\widetilde\rho_P\right\|_1
  \le\frac{8\ell}{\sqrt m}.
\]
 
The operator \(\rho_P\) contains only the part of \(\rho_\ell\) in
which all \(\ell\) registers lie in \(P\mathcal H\).  Consequently,
\(\widetilde\rho_P\) need not have trace one.  We now add a
tensor-power mixture of trace \(1-\Tr\rho_P\) to obtain a normalized
state.
 
The operator inequality
\[
  I-P^{\otimes\ell}
  \preceq
  \sum_{j=1}^{\ell}
  I^{\otimes(j-1)}\otimes Q\otimes
  I^{\otimes(\ell-j)}
\]
implies
\[
\begin{aligned}
  1-\Tr\rho_P
  &=
  \Tr\!\left[
    (I-P^{\otimes\ell})\rho_\ell
  \right] \\
  &\le
  \sum_{j=1}^{\ell}
  \Tr\!\left[
    \left(
      I^{\otimes(j-1)}\otimes Q\otimes
      I^{\otimes(\ell-j)}
    \right)\rho_\ell
  \right] \\
  &=
  \ell\,\Tr(Q\rho_1).
\end{aligned}
\]
The last equality follows because every one-register reduced state of
\(\rho_\ell\) is \(\rho_1\).
 
Diagonalize the part of \(\rho_1\) supported on \(Q\mathcal H\) as
\[
  Q\rho_1Q=\sum_i\lambda_i\proj{e_i}.
\]
By the definition of \(Q\), every \(\lambda_i\) is at most \(1/m\).
If \(\Tr(Q\rho_1)>0\), define
\[
  \sigma_\ell
  =
  \widetilde\rho_P
  +
  \frac{1-\Tr\rho_P}{\Tr(Q\rho_1)}
  \sum_i
  \lambda_i
  \bigl(\proj{e_i}\bigr)^{\otimes\ell}.
\]
The second term is a mixture of identical tensor powers whose trace is
exactly \(1-\Tr\rho_P\).  Hence \(\sigma_\ell\) has trace one and is a
mixture of identical tensor powers.  Since the vectors
\(\ket{e_i}^{\otimes\ell}\) are orthonormal,
\[
\begin{aligned}
  \left\|\sigma_\ell-\widetilde\rho_P\right\|_2^2
  &=
  \left(
    \frac{1-\Tr\rho_P}{\Tr(Q\rho_1)}
  \right)^2
  \sum_i\lambda_i^2 \\
  &\le
  \frac{(1-\Tr\rho_P)^2}
       {m\,\Tr(Q\rho_1)} \\
  &\le
  \frac{\ell(1-\Tr\rho_P)}m
  \le
  \frac\ell m.
\end{aligned}
\]
Here the first inequality uses
\[
  \sum_i\lambda_i^2
  \le
  \frac1m\sum_i\lambda_i
  =
  \frac{\Tr(Q\rho_1)}m,
\]
and the second uses
\(1-\Tr\rho_P\le\ell\Tr(Q\rho_1)\).
 
If \(\Tr(Q\rho_1)=0\), \(\Tr\rho_P=1\). In this case no additional term is needed, and we set \(\sigma_\ell=\widetilde\rho_P\).
 
Finally,
\[
\begin{aligned}
  \left\|\rho_\ell-\sigma_\ell\right\|_2
  &\le
  \left\|\rho_\ell-\rho_P\right\|_2
  +
  \left\|\rho_P-\widetilde\rho_P\right\|_2
  +
  \left\|\widetilde\rho_P-\sigma_\ell\right\|_2 \\
  &\le
  \sqrt{2(2^\ell-1)}
  \left(
    \frac{3+2\ell(\ell-1)}m
  \right)^{1/4}
  +
  \frac{8\ell+\sqrt\ell}{\sqrt m},
\end{aligned}
\]
where we used
\(\lVert X\rVert_2\le\lVert X\rVert_1\) for the middle term.
 
If \(\ell\le(1/2-\eta)\log_2m\), then
\(2^\ell\le m^{1/2-\eta}\), and the first term above is
\[
  O\!\left(
    \sqrt{\ell+1}\,m^{-\eta/2}
  \right).
\]
The second term is also at most
\(\operatorname{poly}(\log m)m^{-\eta/2}\), which proves the stated
rate.
\end{proof}

\section{Discussion and open problems}
\label{sec:discussion}

The first open problem is whether the soundness can be made exactly
\(1/4\), eliminating the inverse-polynomial factor.  A stronger and complexity-theoretically consequential question is whether one can cross the line \(c=4s\) furnished by sign removal by an inverse-polynomial amount, which would imply
\(\QMAreal(2)=\NEXP\).

A second question concerns one-proof amplification. Our proof uses the class equality \(\NEXP=\QMAplus(1,c_0,s_0)\)~\cite{bassirian2024itcs} for some small constant gap $c_0-s_0$, but it also crucially relies on unentanglement.
Is it possible to obtain mild gap amplification for $\QMAplus$? Note that strong gap amplification might be impossible, since it would imply $\mathsf{QMA}=\NEXP$. It would also be interesting to explore or improve gap amplification of other related classes such as $\mathsf{QMA_{IS}}$~\cite{bassirian2024quantum} or $\mathsf{pureQPH}$~\cite{grewal2026pure}.

Finally, Theorem~\ref{lem:tensor-power-approximation} may be
useful whenever a protocol depends only on second moments of reduced
states on logarithmically many registers.  Characterizing the largest
class of measurements for which such a bound replaces a
dimension-dependent trace-norm approximation for reduced states is an independent
question.

\bibliographystyle{alpha}
\bibliography{refs}

\appendix
 
\setcounter{theorem}{0}
\renewcommand{\thetheorem}{\thesection.\arabic{theorem}}
\makeatletter
\@ifundefined{thelemma}{}{\renewcommand{\thelemma}{\thesection.\arabic{lemma}}}
\makeatother
\setcounter{equation}{0}
\renewcommand{\theequation}{\thesection.\arabic{equation}}
 
\section{Deferred proofs from Section~\ref{sec:tensor-power-approximation}}
\label{app:tensor-power-proofs}
 
This appendix proves the two approximation steps used in the proof of
Theorem~\ref{lem:tensor-power-approximation}:
Lemma~\ref{lem:remove-low-spectrum} (removing the low-eigenvalue part)
and Lemma~\ref{lem:approximate-high-spectrum} (approximating the
high-eigenvalue part).
Section~\ref{app:proof-remove-low} proves Lemma~\ref{lem:remove-low-spectrum}.
Section~\ref{app:proof-approximate-high} proves
Lemma~\ref{lem:approximate-high-spectrum} from a theorem of Jeronimo,
Wu, and Xu~\cite{jeronimo2026optimal}.

\subsection{Two frequently used facts}
\label{app:toolbox}
 
The following two facts are used several times below.  First, for
every positive semidefinite \(\rho\) and orthogonal projectors
\(\Pi,\Pi'\),
\begin{equation}
\label{eq:hs-blocks}
  \|\Pi\rho\Pi'\|_2^2
  =\Tr(\rho\Pi\rho\Pi')
  =\Tr\!\left[
    (\rho^{1/2}\Pi\rho^{1/2})(\rho^{1/2}\Pi'\rho^{1/2})
  \right],
  \qquad
  \|\rho^{1/2}\Pi\rho^{1/2}\|_2
  =\|\Pi\rho\Pi\|_2.
\end{equation}
 
As before, \(X_i^{(1)}\)
(resp.\ \(X_j^{(2)}\)) denotes \(X\) acting on register \(i\) of the
first copy (resp.\ register \(j\) of the second copy), and
\(F_{ij}\) swaps register \(i\) of the first copy with register
\(j\) of the second copy. Second fact is the following.
 
\begin{lemma}
\label{lem:reduction}
Let \(\Gamma_m\) be a density operator supported on
\(\Sym^m(\mathcal H)\), and let \(\rho_\ell\) be its reduced state
on \(\ell\) registers, where \(1\le\ell\le m\).  Let
\(B_1,\ldots,B_\ell\) be operators on \(\mathcal H\) and set
\(B=B_1\otimes\cdots\otimes B_\ell\).  If
\(\mathbf i=(i_1,\ldots,i_\ell)\) and
\(\mathbf j=(j_1,\ldots,j_\ell)\) are lists of pairwise distinct
positions in \([m]\), then
\begin{equation}
\label{eq:reduction}
  \Tr\!\Bigl[
    (\Gamma_m\otimes\Gamma_m)
    \prod_{r=1}^{\ell}(B_r)_{i_r}^{(1)}(B_r)_{j_r}^{(2)}F_{i_rj_r}
  \Bigr]
  =\Tr(\rho_\ell B\rho_\ell B).
\end{equation}
\end{lemma}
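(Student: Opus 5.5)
The plan is to reduce the left-hand side of \eqref{eq:reduction} to a trace over two copies of \(\rho_\ell\) and then evaluate it with the swap identity. The only facts needed are the permutation invariance of \(\Gamma_m\) and the definition of the partial trace.

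\textbf{Step 1: relabel to the first \(\ell\) positions.} Since \(\Gamma_m\) is supported on \(\Sym^m(\mathcal H)\), we have \(U_\pi\Gamma_m=\Gamma_m U_\pi=\Gamma_m\) for every \(\pi\in S_m\), so \(U_\pi\Gamma_m U_\pi^\dagger=\Gamma_m\). Choose \(\pi\in S_m\) with \(\pi(i_r)=r\) for \(r=1,\dots,\ell\), and \(\tau\in S_m\) with \(\tau(j_r)=r\). Both exist because the positions in each list are pairwise distinct. Conjugate the whole operator inside the trace by \(U_\pi\otimes U_\tau\), using cyclicity of the trace. This leaves \(\Gamma_m\otimes\Gamma_m\) unchanged and turns each factor \((B_r)_{i_r}^{(1)}(B_r)_{j_r}^{(2)}F_{i_rj_r}\) into \((B_r)_{r}^{(1)}(B_r)_{r}^{(2)}F_{rr}\). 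The key point is that conjugation by \(U_\pi\otimes U_\tau\) sends the swap of position \(i_r\) of copy 1 with position \(j_r\) of copy 2 to the swap of position \(r\) of copy 1 with position \(r\) of copy 2. The product over \(r\) is well defined in any order, since all these factors act on disjoint pairs of registers and therefore commute.

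\textbf{Step 2: trace out the unused registers.} After Step 1 the operator acts nontrivially only on the first \(\ell\) registers of each copy. Taking the partial trace over the remaining \(m-\ell\) registers of each copy gives
\[
\Tr\!\bigl[(\rho_\ell\otimes\rho_\ell)\,(B\otimes B)\,\mathrm{SWAP}_\ell\bigr],
\]
where \(\mathrm{SWAP}_\ell=\prod_{r}F_{rr}\) exchanges the two \(\ell\)-register blocks.

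\textbf{Step 3: apply the swap identity.} Apply the swap identity of Section~\ref{sec:preliminaries} to the big space \(\mathcal H^{\otimes\ell}\), as in the proof of Lemma~\ref{lem:double-coset-new}. This gives \(\Tr[(X\otimes Y)\mathrm{SWAP}_\ell]=\Tr(XY)\) with \(X=\rho_\ell B\) and \(Y=\rho_\ell B\), which yields exactly \(\Tr(\rho_\ell B\rho_\ell B)\).

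The only delicate point is Step 1: checking that the operator ordering \(B_i^{(1)}B_j^{(2)}F_{ij}\) transforms correctly under the relabeling, and that the different factors commute. Both follow because each factor is supported on its own pair of registers and conjugation by \(U_\pi\otimes U_\tau\) acts covariantly on local operators and on swaps.
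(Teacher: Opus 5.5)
Your proposal is correct and follows the paper's proof essentially step for step: you use the $S_m$-invariance of $\Gamma_m$ to conjugate by $U_\pi\otimes U_\sigma$ with $\pi(i_r)=\sigma(j_r)=r$, identify $\prod_r F_{rr}$ with the block swap, trace out the remaining registers to reach $\Tr[(\rho_\ell\otimes\rho_\ell)(B\otimes B)F^{\otimes\ell}]$, and finish with the swap identity. You also note explicitly that the factors act on disjoint register pairs, so the product is order-independent and splits as $(B\otimes B)\prod_r F_{rr}$; the paper leaves this implicit, but it is the same argument.
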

 
\begin{proof}
Since \(\Gamma_m\) is supported on \(\Sym^m(\mathcal H)\), \(U_\pi\Gamma_mU_\pi^\dagger=\Gamma_m\), and conjugation relabels
registers: \(U_\pi X_i^{(1)}U_\pi^\dagger=X_{\pi(i)}^{(1)}\) and
\((U_\pi\otimes U_\sigma)F_{ij}(U_\pi\otimes U_\sigma)^\dagger
=F_{\pi(i)\sigma(j)}\).  Because the entries of each list are
pairwise distinct, we may choose \(\pi,\sigma\in S_m\) with
\(\pi(i_r)=r\) and \(\sigma(j_r)=r\) for all \(r\).  Moreover, on
product vectors, \(\prod_rF_{rr}\) exchanges the two blocks of
registers \(1,\ldots,\ell\), so
\(\prod_rF_{rr}=F^{\otimes\ell}\otimes I\).
Hence
\begin{align*}
  &\Tr\!\Bigl[
    (\Gamma_m\otimes\Gamma_m)
    \prod_{r=1}^{\ell}(B_r)_{i_r}^{(1)}(B_r)_{j_r}^{(2)}F_{i_rj_r}
  \Bigr]
  \\
  &\quad=
  \Tr\!\Bigl[
    (\Gamma_m\otimes\Gamma_m)
    (B\otimes B)\textstyle\prod_rF_{rr}
  \Bigr]
  \\
  &\quad=
  \Tr\!\left[
    (\rho_\ell\otimes\rho_\ell)(B\otimes B)F^{\otimes\ell}
  \right]
  =\Tr(\rho_\ell B\rho_\ell B).
\end{align*}
\end{proof}
 
\subsection{Proof of Lemma~\ref{lem:remove-low-spectrum}}
\label{app:proof-remove-low}
 
We restate the lemma.
 
\removelowspectrum*
 
We first give technical lemmas used for the proof.

\begin{lemma}
\label{lem:averaged-swaps}
Let \(P\) be an orthogonal projector on \(\mathcal H\), and set
\(Q=I-P\).  On
two copies of \(\Sym^m(\mathcal H)\), define
\[
  \mathcal T_P
  =
  \frac1{m^2}\sum_{i,j=1}^m
  P_i^{(1)}P_j^{(2)}F_{ij},
  \qquad
  \mathcal T_Q
  =
  \frac1{m^2}\sum_{i,j=1}^m
  Q_i^{(1)}Q_j^{(2)}F_{ij},
\]
where \(F_{ij}\) swaps register \(i\) of the first copy with register
\(j\) of the second copy.  On
\(\Sym^m(\mathcal H)\otimes\Sym^m(\mathcal H)\), these operators
commute and satisfy
\begin{equation}
\label{eq:averaged-swap-spectrum}
  -\frac1mI\preceq\mathcal T_P\preceq I,
  \qquad
  -\frac1mI\preceq\mathcal T_Q\preceq I.
\end{equation}
\end{lemma}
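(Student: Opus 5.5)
The plan is to block-diagonalize both operators using the splitting $\mathcal H=P\mathcal H\oplus Q\mathcal H$. After that, $\mathcal T_P$ and $\mathcal T_Q$ act on different tensor factors, which gives commutation, and the spectral bounds reduce to a computation with sums of transpositions in two-row representations.

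\emph{Step 1 (Hermiticity).} Note first that $P\otimes P$ commutes with the swap. Hence $F_{ij}$ commutes with $P_i^{(1)}P_j^{(2)}$, so each summand is a product of two commuting Hermitian operators. In particular each summand is Hermitian with norm at most $1$. This already gives $\mathcal T_P\preceq I$ and $\mathcal T_Q\preceq I$, since each is an average of $m^2$ terms of operator norm at most one.

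\emph{Step 2 (sector decomposition).} Use the isomorphism
\[
\Sym^m(P\mathcal H\oplus Q\mathcal H)\cong\bigoplus_{k=0}^m \Sym^k(P\mathcal H)\otimes\Sym^{m-k}(Q\mathcal H),
\]
where $k$ is the number of registers in $P\mathcal H$. On two copies this gives sectors labelled $(k_1,k_2)$. A term $P_i^{(1)}P_j^{(2)}F_{ij}$ is nonzero only when both registers lie in $P\mathcal H$, and it swaps two $P$-particles. It therefore preserves $(k_1,k_2)$ and acts trivially on the $Q$-factors. Averaging over the positions, I expect the following on the sector $(k_1,k_2)$:
\[
\mathcal T_P=\frac{1}{m^2}\,S_{k_1,k_2}(P\mathcal H)\otimes I_{Q},
\qquad
\mathcal T_Q=\frac1{m^2}\,I_{P}\otimes S_{m-k_1,m-k_2}(Q\mathcal H).
\]
Here $S_{a,b}(V)=\sum_{i\le a,\,j\le b}F_{ij}$ acts on $\Sym^a(V)\otimes\Sym^b(V)$. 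The counting must be checked carefully: a symmetrized vector with $k$ $P$-particles among $m$ positions is an average over position sets, and summing over all $i,j$ should reproduce exactly $S_{k_1,k_2}$ with no extra combinatorial factor. This is the step I would verify first, for instance on a basis of symmetrized occupation vectors. Since the two operators act on different tensor factors of every sector, commutation follows.

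\emph{Step 3 (lower bound via contents).} It remains to show $S_{a,b}\succeq -\min\{a,b\}\,I$ on $\Sym^a(V)\otimes\Sym^b(V)$. This suffices because $\min\{a,b\}\le m$, giving $\mathcal T\succeq -\frac1m I$. Embed $\Sym^a\otimes\Sym^b$ in $V^{\otimes(a+b)}$ and let $J=\sum_{p<q}F_{pq}$ be the sum of all transpositions on $n=a+b$ registers. Transpositions internal to either block act as the identity, so
\[
S_{a,b}=J-\binom a2-\binom b2.
\]
By Pieri's rule, $\Sym^a\otimes\Sym^b$ decomposes into Schur–Weyl components with two-row shapes $\lambda=(n-t,t)$ for $0\le t\le\min\{a,b\}$. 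On each such component $J$ acts as the content sum
\[
c(\lambda)=\binom{n-t}{2}+\binom t2-t .
\]
A direct computation then gives the eigenvalue
\[
S_{a,b}\big|_{\lambda}=ab-t(n+1-t),
\]
which is decreasing in $t$ on the allowed range. At $t=\min\{a,b\}$, say $t=b\le a$, it equals $ab-b(a+1)=-b$. Hence $S_{a,b}\succeq-\min\{a,b\}$. The same argument applies to the $Q$-part with $a=m-k_1$ and $b=m-k_2$.

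I expect the main obstacle to be Step 2: making the identification of $\mathcal T_P$ with $S_{k_1,k_2}/m^2$ rigorous, including the normalization. The representation-theoretic content computation in Step 3 is routine once that identification is in place.
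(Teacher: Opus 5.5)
Your proposal is correct, and it takes a genuinely different route from the paper.

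\textbf{The paper's route.} The paper stays entirely elementary. By permutation invariance it reduces $\langle\Psi|\mathcal T_P|\Psi\rangle$ to $t=\langle\chi|F_{11}|\chi\rangle$ with $\chi=P_1^{(2)}\Psi$, and bounds $t\le 1$ trivially. It gets $t\ge-\|\chi\|^2/m$ from $\|\chi+\sum_iF_{i1}\chi\|^2\ge0$: the $m+1$ vectors $\chi,F_{11}\chi,\dots,F_{m1}\chi$ have equal norms and all pairwise inner products equal to $t$. It proves commutativity term by term. A $P$-summand and a $Q$-summand sharing a register multiply to zero once a projector is moved through a swap ($PQ=0$), while summands on disjoint registers commute.

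\textbf{Your route.} You split into sectors by the number of registers in $P\mathcal H$. This makes commutation structural, and via Pieri's rule and content sums it yields the exact spectrum $ab-t(a+b+1-t)$, $0\le t\le\min\{a,b\}$, of $S_{a,b}$. That algebra is right, including monotonicity in $t$ and the minimum $-\min\{a,b\}$.

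The identification you flagged in Step 2 does hold, with no extra factor. First state explicitly that both operators commute with every $U_\pi\otimes U_\sigma$, so they preserve $\Sym^m(\mathcal H)\otimes\Sym^m(\mathcal H)$. Then, between sector vectors, the matrix elements of $\sum_{i,j}P_i^{(1)}P_j^{(2)}F_{ij}$ are $m^2$ times those of $P_1^{(1)}P_1^{(2)}F_{11}$. Use the isometry $x\otimes y\mapsto\binom mk^{-1/2}\sum_{|S|=k}(x\otimes y)_S$. A fraction $\binom{m-1}{k-1}/\binom mk=k/m$ of the position sets contains position $1$ in each copy. This gives $k_1k_2\langle x'|F_{11}|x\rangle\langle y'|y\rangle$, which is exactly the matrix element of $S_{k_1,k_2}\otimes I$.

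\textbf{Comparison.} Your route buys the full joint spectrum of $\mathcal T_P$ and $\mathcal T_Q$. The paper's route avoids Schur--Weyl duality and the sector normalization bookkeeping altogether. Its commutativity argument also holds on the whole two-copy space, not only on $\Sym^m(\mathcal H)\otimes\Sym^m(\mathcal H)$.
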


\begin{proof}[Proof of Lemma~\ref{lem:averaged-swaps}]
Throughout, we use the relabeling relations
\(F_{ij}P_i^{(1)}=P_j^{(2)}F_{ij}\) and
\(F_{ij}P_j^{(2)}=P_i^{(1)}F_{ij}\), and likewise with \(Q\) or with other positions.
 
We first prove the inequalities~\ref{eq:averaged-swap-spectrum}.
Conjugation by \(U_\pi\otimes U_\sigma\) maps the summand of
\(\mathcal T_P\) with index \((i,j)\) to the summand with index
\((\pi(i),\sigma(j))\), so \(\mathcal T_P\) commutes with every
\(U_\pi\otimes U_\sigma\) and hence preserves
\(\Sym^m(\mathcal H)\otimes\Sym^m(\mathcal H)\).  Each summand is
self-adjoint,
\[
  \bigl(P_i^{(1)}P_j^{(2)}F_{ij}\bigr)^\dagger
  =F_{ij}P_j^{(2)}P_i^{(1)}
  =P_i^{(1)}P_j^{(2)}F_{ij},
\]
so \(\mathcal T_P\) is self-adjoint. 
The same holds for \(\mathcal T_Q\).

Let \(\ket{\Psi}\in\Sym^m(\mathcal H)\otimes\Sym^m(\mathcal H)\) be a unit
vector.  Since \((U_\pi\otimes U_\sigma)\ket{\Psi}=\ket{\Psi}\), all \(m^2\)
summands of \(\mathcal T_P\) have the same expectation in \(\ket{\Psi}\),
so, setting \(\ket{\chi}=P_1^{(2)}\ket{\Psi}\),
\begin{align*}
  \langle\Psi | \mathcal T_P | \Psi\rangle
  =
  \bigl\langle\Psi | P_1^{(1)}P_1^{(2)}F_{11} | \Psi\bigr\rangle
  =
  \langle\chi | F_{11} | \chi\rangle
  \le 1,
\end{align*}
which proves \(\mathcal T_P\preceq I\) on
\(\Sym^m(\mathcal H)\otimes\Sym^m(\mathcal H)\).
 
For the lower bound, note that \(\ket{\chi}\) is invariant under
permutations of the \(m\) registers of the first copy. 
Conjugating \(F_{11}\) by these permutations therefore gives
\(\langle\chi | F_{11} | \chi\rangle=\langle\chi | F_{i1} | \chi\rangle:=t\) for
every \(i\in[m]\), so
\[
  \langle\Psi|\mathcal T_P | \Psi\rangle
  =\frac1m\sum_{i=1}^m\langle\chi | F_{i1} | \chi\rangle=t .
\]

Our goal is now to show \(t\ge-1/m\). 
Consider the \(m+1\) vectors
\[
  \ket{\chi},\;F_{11}\ket{\chi},\;F_{21}\ket{\chi},\;\ldots,\;F_{m1}\ket{\chi} .
\]
Each has norm \(\|\ket{\chi}\|\), and any two of them have inner product \(t\). Then,
\begin{align*}
  &0\le
  \Bigl\|\ket{\chi}+\sum_{i=1}^mF_{i1}\ket{\chi}\Bigr\|^2
  =(m+1)\|\chi\|^2+m(m+1)\,t.
\end{align*}
Hence
\(t\ge-\|\chi\|^2/m\ge-1/m\), that is,
\(\langle\Psi,\mathcal T_P\Psi\rangle\ge-1/m\). Together with
the upper bound, this proves
\[
-\frac1mI\preceq\mathcal T_P\preceq I
\] 
on \(\Sym^m(\mathcal H)\otimes\Sym^m(\mathcal H)\). The same argument
with \(Q\) in place of \(P\) handles \(\mathcal T_Q\).
 
Finally, we prove the commutativity. We compare
\(m^4\,\mathcal T_P\mathcal T_Q
=\sum_{i,j,k,l}
P_i^{(1)}P_j^{(2)}F_{ij}\,Q_k^{(1)}Q_l^{(2)}F_{kl}\)
with the reversed products \(m^4\,\mathcal T_Q\mathcal T_P\), term by term.  If \(k=i\), moving
\(Q_i^{(1)}\) through \(F_{ij}\) turns it into \(Q_j^{(2)}\), which
meets \(P_j^{(2)}\):
\[
  P_i^{(1)}P_j^{(2)}F_{ij}\,Q_i^{(1)}Q_l^{(2)}F_{il}
  =P_i^{(1)}\bigl(P_j^{(2)}Q_j^{(2)}\bigr)F_{ij}\,Q_l^{(2)}F_{il}
  =0,
\]
and in the reversed product
\(Q_i^{(1)}Q_l^{(2)}F_{il}\,P_i^{(1)}P_j^{(2)}F_{ij}\), moving
\(P_i^{(1)}\) through \(F_{il}\) produces
\(Q_l^{(2)}P_l^{(2)}=0\). The case \(l=j\) vanishes in the same way. 
If \(k\ne i\) and \(l\ne j\), the three factors
\(P_i^{(1)},P_j^{(2)},F_{ij}\) act on registers disjoint from those
of \(Q_k^{(1)},Q_l^{(2)},F_{kl}\), so the two groups commute and the
two products agree.  Summing over all indices,
\(\mathcal T_P\mathcal T_Q=\mathcal T_Q\mathcal T_P\) on \(\Sym^m(\mathcal H)\otimes\Sym^m(\mathcal H)\).
\end{proof}

Using~\Cref{lem:averaged-swaps}, we can show the following.
 
\begin{lemma}
\label{lem:projected-pattern-bound}
Let \(\Gamma_m\) be a density operator supported on
\(\Sym^m(\mathcal H)\), and let \(\rho_\ell\) and \(\rho_1\) be its
reduced states on \(\ell\) registers and one register, respectively,
where \(1\le\ell\le m\).  Let \(P\) be an orthogonal projector and set
\(Q=I-P\).  For
\(\mathbf S=(S_1,\ldots,S_\ell)\in\{P,Q\}^{\ell}\), write
\[
  \Pi_{\mathbf S}=S_1\otimes\cdots\otimes S_\ell.
\]
If \(\mathbf S\ne(P,\ldots,P)\), then
\begin{equation}
\label{eq:projected-pattern-bound}
  \left\|\Pi_{\mathbf S}\rho_\ell\Pi_{\mathbf S}\right\|_2^2
  \le
  \Tr[(Q\rho_1Q)^2]
  +\frac{2+2\ell(\ell-1)}m.
\end{equation}
\end{lemma}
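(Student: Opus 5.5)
The plan is to write the squared Hilbert--Schmidt norm as a swap expectation on two copies of \(\Gamma_m\), replace it by a product of the averaged swap operators of Lemma~\ref{lem:averaged-swaps}, and then use their commutativity and spectral bounds. Two error terms will appear. One, of order \(\ell(\ell-1)/m\), comes from passing to the averaged operators. The other, of order \(1/m\), comes from the negative part of the spectrum.

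\emph{Step 1 (swap form).} By \eqref{eq:hs-blocks},
\[
\|\Pi_{\mathbf S}\rho_\ell\Pi_{\mathbf S}\|_2^2=\Tr(\rho_\ell\Pi_{\mathbf S}\rho_\ell\Pi_{\mathbf S}).
\]
Lemma~\ref{lem:reduction} with \(B_r=S_r\) then gives, for every pair of lists \(\mathbf i,\mathbf j\) of pairwise distinct positions,
\[
\Tr(\rho_\ell\Pi_{\mathbf S}\rho_\ell\Pi_{\mathbf S})=\Tr\Bigl[(\Gamma_m\otimes\Gamma_m)\prod_{r=1}^{\ell}(S_r)^{(1)}_{i_r}(S_r)^{(2)}_{j_r}F_{i_rj_r}\Bigr].
\]
So the left side equals the average of this expression over all such pairs \((\mathbf i,\mathbf j)\).

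\emph{Step 2 (remove the distinctness constraint).} Next I replace that average by the average over all \((\mathbf i,\mathbf j)\in[m]^\ell\times[m]^\ell\), with no distinctness required. Expanding the product of sums shows that this unrestricted average is exactly
\[
\Tr\bigl[(\Gamma_m\otimes\Gamma_m)\,\mathcal T_{S_1}\cdots\mathcal T_{S_\ell}\bigr],
\]
where \(\mathcal T_{S_r}\) is \(\mathcal T_P\) or \(\mathcal T_Q\) according as \(S_r=P\) or \(S_r=Q\).

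A uniformly random \(\mathbf i\) has a repeated entry with probability at most \(\binom{\ell}{2}/m\), and likewise for \(\mathbf j\). Hence the two averages differ only on a set of tuples of total probability at most \(\ell(\ell-1)/m\). Each individual summand is a product of projectors and swaps, so it has operator norm at most \(1\) and its expectation in a state has absolute value at most \(1\). Writing the restricted average as a mixture of the unrestricted average and the average over the bad tuples, the discrepancy is at most twice the bad probability, namely \(2\ell(\ell-1)/m\).

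\emph{Step 3 (spectral bound).} Restricted to \(\Sym^m(\mathcal H)\otimes\Sym^m(\mathcal H)\), which supports \(\Gamma_m\otimes\Gamma_m\), Lemma~\ref{lem:averaged-swaps} says that \(\mathcal T_P\) and \(\mathcal T_Q\) are commuting self-adjoint operators with spectra in \([-1/m,1]\). I diagonalize them jointly, with eigenvalue pairs \((a,b)\). Because \(\mathbf S\ne(P,\ldots,P)\), at least one factor is \(\mathcal T_Q\). The product then acts on each joint eigenvector by \(a^{k}b^{\ell-k}\) with \(\ell-k\ge1\), and
\[
a^{k}b^{\ell-k}\le|b|\le b+\frac2m,
\]
using \(|a|,|b|\le1\) and \(b\ge-1/m\). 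Hence
\[
\mathcal T_{S_1}\cdots\mathcal T_{S_\ell}\preceq\mathcal T_Q+\frac2m I
\]
on this subspace.

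Finally, by permutation invariance (as in the proof of Lemma~\ref{lem:averaged-swaps}),
\[
\Tr[(\Gamma_m\otimes\Gamma_m)\mathcal T_Q]=\Tr[(\Gamma_m\otimes\Gamma_m)Q^{(1)}_1Q^{(2)}_1F_{11}]=\Tr(\rho_1Q\rho_1Q)=\Tr[(Q\rho_1Q)^2].
\]
Adding the two error terms gives \eqref{eq:projected-pattern-bound}.

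The main delicacy is Step 2. Each collision term must be bounded by \(1\) in absolute value, and this does hold because every such term is a product of contractions. One must also confirm that the unrestricted expansion reproduces the operator product \(\prod_r\mathcal T_{S_r}\) in the same order as in Lemma~\ref{lem:reduction}. Since every operator in Step 3 commutes with the others, the order does not affect the final bound.
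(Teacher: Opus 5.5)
Your proof is correct and follows the paper's argument essentially step for step. The common route is the swap form via Lemma~\ref{lem:reduction}, the comparison of the distinct-list and independent-list averages with error $2\ell(\ell-1)/m$, the joint-eigenvalue bound $a^k b^{\ell-k}\le |b|\le b+2/m$ from Lemma~\ref{lem:averaged-swaps}, and the identification $\Tr[(\Gamma_m\otimes\Gamma_m)\mathcal T_Q]=\Tr[(Q\rho_1Q)^2]$. One phrasing slip: it is the unrestricted average that is a convex mixture of the distinct-entry average and the collision average, not the other way round, which gives $M_{\mathrm{ind}}-M_{\mathrm{dist}}=p\,(M_{\mathrm{coll}}-M_{\mathrm{dist}})$ and hence your bound $2p$.
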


\begin{proof}[Proof of Lemma~\ref{lem:projected-pattern-bound}]
Fix the pattern \(\mathbf S=(S_1,\ldots,S_\ell)\ne(P,\ldots,P)\),
let \(q\ge1\) be the number of indices \(r\) with \(S_r=Q\), and for
lists \(\mathbf i,\mathbf j\) of positions in \([m]\) write
\[
  W(\mathbf i,\mathbf j)
  =\prod_{r=1}^{\ell}
   (S_r)_{i_r}^{(1)}(S_r)_{j_r}^{(2)}F_{i_rj_r}.
\]
Each factor is a product of two projectors and a unitary, so
\(\|W(\mathbf i,\mathbf j)\|_\infty\le1\) by submultiplicativity.
 
If the entries of \(\mathbf i\) are pairwise distinct and likewise
for \(\mathbf j\), then Lemma~\ref{lem:reduction} with \(B_r=S_r\)
and \eqref{eq:hs-blocks} give
\[\Tr[(\Gamma_m\otimes\Gamma_m)W(\mathbf i,\mathbf j)]
=\Tr(\rho_\ell\Pi_{\mathbf S}\rho_\ell\Pi_{\mathbf S})
=\|\Pi_{\mathbf S}\rho_\ell\Pi_{\mathbf S}\|_2^2.
\]
Let \(M_{\mathrm{dist}}\) be the average of
\(W(\mathbf i,\mathbf j)\) over two independent uniformly random
ordered lists with pairwise distinct entries. Then,
\begin{equation}
\label{eq:mdist-value}
  \Tr\!\left[(\Gamma_m\otimes\Gamma_m)M_{\mathrm{dist}}\right]
  =\left\|\Pi_{\mathbf S}\rho_\ell\Pi_{\mathbf S}\right\|_2^2 .
\end{equation}
 
Let \(M_{\mathrm{ind}}\) be the corresponding average when all
\(2\ell\) entries are independent and uniform on \([m]\), with
repetitions allowed.  Then
\begin{align}
  M_{\mathrm{ind}}
  &=
  \frac1{m^{2\ell}}
  \sum_{i_1,j_1,\ldots,i_\ell,j_\ell=1}^m\,
  \prod_{r=1}^{\ell}
    (S_r)_{i_r}^{(1)}(S_r)_{j_r}^{(2)}F_{i_rj_r}
  \notag\\
  &=
  \prod_{r=1}^{\ell}\Bigl(
    \frac1{m^2}\sum_{i,j=1}^m
    (S_r)_{i}^{(1)}(S_r)_{j}^{(2)}F_{ij}
  \Bigr)
  &&\text{}
  \notag\\
  &=
  \prod_{r=1}^{\ell}\mathcal T_{S_r}
  =
  \mathcal T_P^{\ell-q}\mathcal T_Q^{q}.
  &&\text{(By Lemma~\ref{lem:averaged-swaps})}
  \label{eq:mind-value}
\end{align}

Next, for all \(s,t\in[-1/m,1]\) and \(q\ge1\),
\begin{equation}
\label{eq:scalar-ineq}
  s^{\ell-q}t^{q}\le t+\frac2m.
\end{equation}
By Lemma~\ref{lem:averaged-swaps}, \(\mathcal T_P\) and
\(\mathcal T_Q\) are commuting self-adjoint operators with spectra
in \([-1/m,1]\), so they have a joint orthonormal eigenbasis.
Applying \eqref{eq:scalar-ineq} to each pair of joint eigenvalues
gives
\begin{equation}
\label{eq:independent-pattern-bound}
  \mathcal T_P^{\ell-q}\mathcal T_Q^{q}
  \preceq
  \mathcal T_Q+\frac{2}{m}I .
\end{equation}
 
Call a pair of lists \emph{colliding} if some list repeats an entry.
Under independent sampling, the collision probability \(p\) is at
most \(2\binom\ell2/m=\ell(\ell-1)/m\) (union bound over the pairs
\(r<s\) in each list, \(\Pr[i_r=i_s]=1/m\)).  Conditioned on no
collision, the pair is uniform over pairs of lists with pairwise
distinct entries, so the law of total expectation gives
\(M_{\mathrm{ind}}=(1-p)M_{\mathrm{dist}}+p\,M_{\mathrm{coll}}\),
with \(M_{\mathrm{coll}}\) the collision-conditioned average.  Both
conditional averages have operator norm at most \(1\), being
averages of the \(W(\mathbf i,\mathbf j)\), so the triangle
inequality gives
\begin{equation}
\label{eq:average-comparison}
  \lVert M_{\mathrm{ind}}-M_{\mathrm{dist}}\rVert_\infty
  =p\,\lVert M_{\mathrm{coll}}-M_{\mathrm{dist}}\rVert_\infty
  \le2p
  \le\frac{2\ell(\ell-1)}m .
\end{equation}
Since \(|\Tr[(\Gamma_m\otimes\Gamma_m)D]|\le\|D\|_\infty\) for every
operator \(D\), and \(A\preceq B\) implies
\(\Tr[(\Gamma_m\otimes\Gamma_m)A]\le\Tr[(\Gamma_m\otimes\Gamma_m)B]\),
\begin{align}
  &\left\|\Pi_{\mathbf S}\rho_\ell\Pi_{\mathbf S}\right\|_2^2
  \notag\\
  &\quad=
  \Tr\!\left[(\Gamma_m\otimes\Gamma_m)M_{\mathrm{dist}}\right]
  \le
  \Tr\!\left[(\Gamma_m\otimes\Gamma_m)M_{\mathrm{ind}}\right]
  +\frac{2\ell(\ell-1)}m
  &&\text{(\eqref{eq:mdist-value} and
          \eqref{eq:average-comparison})}
  \notag\\
  &\quad\le
  \Tr\!\left[(\Gamma_m\otimes\Gamma_m)\mathcal T_Q\right]
  +\frac{2+2\ell(\ell-1)}m.
  &&\text{(\eqref{eq:mind-value} and
          \eqref{eq:independent-pattern-bound})}
  \label{eq:step3-bound}
\end{align}
Finally, a one-entry list has no repetitions, so
Lemma~\ref{lem:reduction} with \(\ell=1\), \(B_1=Q\) and \eqref{eq:hs-blocks} give
\[
\Tr[(\Gamma_m\otimes\Gamma_m)Q_i^{(1)}Q_j^{(2)}F_{ij}]
=\Tr(\rho_1Q\rho_1Q)=\Tr[(Q\rho_1Q)^2].
\] 
Averaging over \(i,j\),
\(\Tr[(\Gamma_m\otimes\Gamma_m)\mathcal T_Q]=\Tr[(Q\rho_1Q)^2]\),
and substituting this into \eqref{eq:step3-bound} proves the statement of the lemma, i.e., 
\eqref{eq:projected-pattern-bound}.
\end{proof}

We are now ready to prove Lemma~\ref{lem:remove-low-spectrum}.
 
\begin{proof}[Proof of Lemma~\ref{lem:remove-low-spectrum}]
Set \(Q=I-P\).  For
\(\mathbf S=(S_1,\ldots,S_\ell)\in\{P,Q\}^{\ell}\), write
\(\Pi_{\mathbf S}=S_1\otimes\cdots\otimes S_\ell\).
For every \(\mathbf S\ne(P,\ldots,P)\),
\begin{align}
  \|\Pi_{\mathbf S}\rho_\ell\Pi_{\mathbf S}\|_2^2
  \le
  \Tr[(Q\rho_1Q)^2]
  +\frac{2+2\ell(\ell-1)}m
  \le
  \frac{3+2\ell(\ell-1)}m,
  \label{eq:diagonal-block-bound}
\end{align}
where the first inequality is
Lemma~\ref{lem:projected-pattern-bound}, and the second holds because
\(\Tr[(Q\rho_1Q)^2]\) is at most
\(\|Q\rho_1Q\|_\infty\Tr(Q\rho_1Q)\le1/m\).
 
For any two patterns \(\mathbf S,\mathbf T\ne(P,\ldots,P)\),
positivity of \(\rho_\ell\) and Cauchy--Schwarz for the
Hilbert--Schmidt inner product give
\begin{align}
  \|\Pi_{\mathbf S}\rho_\ell\Pi_{\mathbf T}\|_2^2
  &=
  \Tr\!\left[
    (\rho_\ell^{1/2}\Pi_{\mathbf S}\rho_\ell^{1/2})
    (\rho_\ell^{1/2}\Pi_{\mathbf T}\rho_\ell^{1/2})
  \right]
  \le
  \|\rho_\ell^{1/2}\Pi_{\mathbf S}\rho_\ell^{1/2}\|_2\,
  \|\rho_\ell^{1/2}\Pi_{\mathbf T}\rho_\ell^{1/2}\|_2
  \notag\\
  &=
  \|\Pi_{\mathbf S}\rho_\ell\Pi_{\mathbf S}\|_2\,
  \|\Pi_{\mathbf T}\rho_\ell\Pi_{\mathbf T}\|_2
  \le
  \frac{3+2\ell(\ell-1)}m.
  \qquad
  \text{(\eqref{eq:hs-blocks} and
        \eqref{eq:diagonal-block-bound})}
  \label{eq:offdiagonal-block-bound}
\end{align}
The blocks are pairwise orthogonal in the Hilbert--Schmidt inner
product: for \((\mathbf S,\mathbf T)\ne(\mathbf U,\mathbf V)\),
\begin{equation}
\label{eq:block-orthogonality}
  \left\langle
    \Pi_{\mathbf S}\rho_\ell\Pi_{\mathbf T},
    \Pi_{\mathbf U}\rho_\ell\Pi_{\mathbf V}
  \right\rangle_{\mathrm{HS}}
  =
  \Tr\!\left[
    \Pi_{\mathbf T}\rho_\ell
    \Pi_{\mathbf S}\Pi_{\mathbf U}
    \rho_\ell\Pi_{\mathbf V}
  \right]
  =0.
\end{equation}
Since
\(I=(P+Q)^{\otimes\ell}\) gives
\(I-P^{\otimes\ell}=\sum_{\mathbf S\ne(P,\ldots,P)}\Pi_{\mathbf S}\),
\begin{align*}
  \left\|
    (I-P^{\otimes\ell})\rho_\ell(I-P^{\otimes\ell})
  \right\|_2^2
  &=
  \sum_{\substack{\mathbf S,\mathbf T\\
                   \mathbf S,\mathbf T\ne(P,\ldots,P)}}
  \|\Pi_{\mathbf S}\rho_\ell\Pi_{\mathbf T}\|_2^2
  &&\text{(\eqref{eq:block-orthogonality})}
  \\
  &\le
  (2^\ell-1)^2\frac{3+2\ell(\ell-1)}m.
  &&\text{(\eqref{eq:offdiagonal-block-bound})}
\end{align*}
Since \(\|A\|_\infty\le\|A\|_2\), this implies
\[
  \left\|
    (I-P^{\otimes\ell})\rho_\ell(I-P^{\otimes\ell})
  \right\|_\infty
  \le
  (2^\ell-1)
  \sqrt{\frac{3+2\ell(\ell-1)}m}.
\]
 
Fact~2.3 of \cite{jeronimo2026optimal} states that, for every density operator
\(\rho\) and orthogonal projector \(\Pi\),
\[
  \|\rho-\Pi\rho\Pi\|_2^2
  \le
  2\|(I-\Pi)\rho(I-\Pi)\|_\infty.
\]
Applying it with \(\rho=\rho_\ell\) and
\(\Pi=P^{\otimes\ell}\) proves
\[
  \left\|
    \rho_\ell-P^{\otimes\ell}\rho_\ell P^{\otimes\ell}
  \right\|_2^2
  \le
  2(2^\ell-1)
  \sqrt{\frac{3+2\ell(\ell-1)}m},
\]
as required.
\end{proof}

\subsection{Proof of Lemma~\ref{lem:approximate-high-spectrum}}
\label{app:proof-approximate-high}
 
We restate the lemma.
 
\approximatehighspectrum*
 
For the proof, we use the following result of Jeronimo, Wu, and Xu as
a black box.
 
\begin{theorem}[Jeronimo--Wu--Xu
  {\cite[Theorem~1.3]{jeronimo2026optimal}}]
\label{thm:jwx-tensor-power-approximation}
Let \(2\le \ell\le a\), and let \(\Omega_a\) be a density operator
supported on \(\Sym^a(\mathcal H)\), where
\(d=\dim\mathcal H\).  If \(\Omega_\ell\) is its reduced state on
\(\ell\) registers, then there is a probability measure \(\nu\) on
the unit sphere of \(\mathcal H\) such that
\[
  \left\|
    \Omega_\ell
    -
    \int
      \bigl(\proj{\psi}\bigr)^{\otimes\ell}
      \,d\nu(\psi)
  \right\|_1
  \le
  \frac{8\ell\sqrt{d-1}}{a}.
\]
\end{theorem}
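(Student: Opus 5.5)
The statement just above is the optimal trace-norm de Finetti theorem of Jeronimo, Wu, and Xu, which the paper uses as a black box. If I were to reprove it, I would use the standard measure-and-prepare construction and refine only the error analysis.

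\textbf{Step 1: the measure and the comparison state.} Write \(C_k=\dim\Sym^k(\mathcal H)=\binom{k+d-1}{d-1}\) and use the resolution of identity \(P_{\mathrm{sym}}^{(k)}=C_k\int(\proj\psi)^{\otimes k}\,d\psi\) over the Haar measure on the sphere. I would define
\[
  d\nu(\psi)=C_a\,\Tr\!\bigl[\Omega_a(\proj\psi)^{\otimes a}\bigr]\,d\psi .
\]
This is a probability measure because \(\Omega_a\) is supported on \(\Sym^a(\mathcal H)\). Set \(\sigma_\ell=\int(\proj\psi)^{\otimes\ell}\,d\nu(\psi)\). Applying the resolution of identity once more, now on \(a+\ell\) registers, gives
\[
  \sigma_\ell=\frac{C_a}{C_{a+\ell}}\Tr_{a}\!\left[P_{\mathrm{sym}}^{(a+\ell)}\bigl(\Omega_a\otimes I^{\otimes\ell}\bigr)\right].
\]

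\textbf{Step 2: expanding the symmetrizer.} I would expand \(P_{\mathrm{sym}}^{(a+\ell)}\) over \(S_{a+\ell}\) and group permutations by double cosets of \(S_a\times S_\ell\), as in Lemma~\ref{lem:double-coset-new}. The class index \(k\in\{0,\dots,\ell\}\) is the number of the \(\ell\) new registers that exchange with registers of \(\Omega_a\). After taking the partial trace, the class \(k\) contributes a symmetrized copy of \(\Omega_k\otimes I^{\otimes(\ell-k)}\). Its weight is a ratio of binomial coefficients times a power of \(d\) coming from the traced identities. The \(k=\ell\) class reproduces \(\Omega_\ell\), with weight \(1-O(\ell d/a)\). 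This already yields the classical \(O(\ell d/a)\) trace-norm bound.

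\textbf{Step 3: the \(\sqrt{d-1}\) improvement.} This step is the main obstacle. I would not bound each lower class term separately in trace norm. Instead I would bound \(\lVert\Omega_\ell-\sigma_\ell\rVert_1\) via a weighted Cauchy--Schwarz inequality,
\[
  \lVert X\rVert_1\le\sqrt{\Tr W}\;\bigl\lVert W^{-1/2}XW^{-1/2}\bigr\rVert_2 ,
\]
with \(W\) a suitable positive operator, for instance a mixture of \(\Omega_{\ell-1}\otimes I/d\) and \(\Omega_\ell\). The squared norm on the right would then be computed using the swap identity, as in Lemma~\ref{lem:reduction}. The hope is that the leading errors come from \(k=\ell-1\) terms. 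These differ from \(\Omega_\ell\) by a single replaced register, and in Hilbert--Schmidt norm this produces a factor \(\sqrt{d-1}\) rather than \(d\). The \(d-1\) rather than \(d\) should come from the traceless part of that register. Summing over the \(\ell\) positions then gives the linear factor \(\ell\). The terms with \(k\le\ell-2\) carry weight \(O((\ell d/a)^2)\). When \(\ell\sqrt d\le a\) this is at most \(\ell\sqrt d/a\), and when \(\ell\sqrt d>a\) the claimed bound exceeds the trivial bound \(2\), so those terms are absorbed either way. Tracking constants to reach exactly \(8\) is routine but tedious.

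\textbf{Step 4: the downstream use in Lemma~\ref{lem:approximate-high-spectrum}.} Given the theorem, this lemma follows as the paper intends. Decompose \(\Gamma_m\) by the number of registers lying in \(P\mathcal H\). The component with exactly \(a\) such registers is, after normalization, a state on \(\Sym^a(P\mathcal H)\). Apply the theorem to each component with \(d=\operatorname{rank}P\). Average the resulting errors \(8\ell\sqrt{\operatorname{rank}P-1}/a\) against the weights of the components, using that the relevant \(a\) concentrate near \(m\).
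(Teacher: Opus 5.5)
The paper does not prove this statement. It imports it from Jeronimo--Wu--Xu and uses it only as a black box in Lemma~\ref{lem:approximate-high-spectrum}. So your sketch has to stand on its own, and it has a genuine gap that is not a matter of constants. With the measure \(\nu\) you fix in Step~1, the claimed inequality fails, so no refinement of the error analysis in Step~3 can work, whatever \(W\) you choose.

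Take the i.i.d.\ input \(\Omega_a=(\proj\phi)^{\otimes a}\). Then \(d\nu(\psi)=C_a\lvert\braket{\phi}{\psi}\rvert^{2a}\,d\psi\), and
\[
  \bra{\phi}^{\otimes\ell}\sigma_\ell\ket{\phi}^{\otimes\ell}
  =C_a\int\lvert\braket{\phi}{\psi}\rvert^{2(a+\ell)}\,d\psi
  =\frac{C_a}{C_{a+\ell}}
  =\prod_{i=1}^{\ell}\frac{a+i}{a+i+d-1}.
\]
Hence \(\lVert\Omega_\ell-\sigma_\ell\rVert_1\ge2\bigl(1-C_a/C_{a+\ell}\bigr)\approx2\ell(d-1)/a\) when \(\ell d\ll a\). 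This exceeds \(8\ell\sqrt{d-1}/a\) as soon as \(d-1>16\). For instance, \(\ell=2\), \(d=101\), \(a=10^6\) gives about \(4\cdot10^{-4}\) against the claimed \(1.6\cdot10^{-4}\).

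So the CKMR rate \(\Theta(\ell d/a)\) is tight for the measure-and-prepare state. The same obstruction applies to every fixed measure-and-prepare channel, by the standard optimal-estimation bound: the \(\phi\)-averaged \(\ell\)-copy fidelity of any such scheme is at most \(C_a/C_{a+\ell}\). Your heuristic that the \(k=\ell-1\) terms cost only \(\sqrt{d-1}\) therefore cannot hold for this \(\sigma_\ell\). The missing idea is a choice of \(\nu\) that depends non-linearly on \(\Omega_a\); on i.i.d.\ inputs it must essentially collapse to \(\delta_\phi\).

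Several auxiliary steps are also wrong.
\begin{itemize}
\item The inequality \(\lVert X\rVert_1\le\sqrt{\Tr W}\,\lVert W^{-1/2}XW^{-1/2}\rVert_2\) is not invariant under \(W\mapsto cW\), and hence is false. H\"older gives \(\lVert X\rVert_1\le\sqrt{\Tr W}\,\lVert W^{-1/4}XW^{-1/4}\rVert_2\) instead.
\item Absorbing the \(k\le\ell-2\) classes needs \((\ell d/a)^2\le\ell\sqrt d/a\), i.e.\ \(\ell d^{3/2}\le a\), not \(\ell\sqrt d\le a\). For fixed \(\ell\) and \(a^{2/3}\lesssim d\lesssim a^2\), neither of your two cases applies. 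Once \(\ell d\gtrsim a\), the \(k=\ell\) class no longer carries most of the weight.
\item In Step~4, the sector sizes \(a\) need not concentrate near \(m\), since most registers may lie in \(Q\mathcal H\). The paper instead uses the exact identity \(\frac1a\binom a\ell/\binom m\ell=\frac1m\binom{a-1}{\ell-1}/\binom{m-1}{\ell-1}\le\frac1m\). This controls the averaged error for an arbitrary distribution \(\{p_a\}\).
\end{itemize}
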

 
\begin{proof}[Proof of Lemma~\ref{lem:approximate-high-spectrum}]
Write \(r=\operatorname{rank}P\) and \(Q=I-P\).  If \(r=0\), then
\(\rho_P=0\), and the conclusion holds with
\(\widetilde\rho_P=0\).  We therefore assume \(r\ge1\).

The approximation theorem used below applies to the
\(\ell\)-register reduction of a symmetric state supported entirely
on \(P\mathcal H\).  We cannot apply it directly to \(\Gamma_m\) in
the form needed here, because \(\Gamma_m\) need not be supported on
\((P\mathcal H)^{\otimes m}\).

We overcome this by separating \(\Gamma_m\) according to the total
number \(a\) of \(P\)-outcomes under the measurement \(\{P,Q\}\).
For each fixed \(a\), we obtain an \(a\)-register symmetric state
supported on \(P\mathcal H\), to which the approximation theorem can
be applied.  We then recover \(\rho_P\) from the corresponding
\(\ell\)-register reductions, approximate each reduction, and
recombine them with the same weights.

\medskip
\noindent
\emph{Constructing the \(a\)-sector states.}
For \(i\in[m]\), let \(P_i\) and \(Q_i\) denote \(P\) and \(Q\)
acting on register \(i\).  For \(S\subseteq[m]\), define
\[
  \Pi^{(S)}
  =
  \prod_{i\in S}P_i
  \prod_{i\notin S}Q_i.
\]
Thus, \(\Pi^{(S)}\) is the projector corresponding to obtaining
outcome \(P\) precisely on the registers in \(S\).

Write \(\Tr_{>k}\) for the partial trace over registers
\(k+1,\ldots,m\).  For \(0\le a\le m\), define
\begin{equation}
\label{eq:omega-def}
  \Omega_a
  =
  \binom ma
  \Tr_{>a}\!\left[
    \Pi^{([a])}\Gamma_m\Pi^{([a])}
  \right],
  \qquad
  p_a=\Tr\Omega_a.
\end{equation}
Here \([a]=\{1,\ldots,a\}\) is used as a representative choice of
the \(a\) registers with outcome \(P\), and the factor \(\binom ma\)
accounts for all possible choices of those registers.

We first check that \(p_a\) is the probability of obtaining exactly
\(a\) outcomes equal to \(P\).  Since \(\Pi^{([a])}\) is a
projector,
\[
  p_a
  =
  \binom ma
  \Tr\!\left[
    \Gamma_m\Pi^{([a])}
  \right].
\]
Permutation symmetry implies that
\[
  \Tr\!\left[
    \Gamma_m\Pi^{(S)}
  \right]
\]
has the same value for every \(S\subseteq[m]\) of size \(a\).
Therefore,
\[
  p_a
  =
  \sum_{\substack{S\subseteq[m]\\|S|=a}}
  \Tr\!\left[
    \Gamma_m\Pi^{(S)}
  \right].
\]
Moreover,
\[
  \sum_{S\subseteq[m]}\Pi^{(S)}
  =
  \prod_{i=1}^m(P_i+Q_i)
  =
  I.
\]
It follows that
\[
  p_a\ge0,
  \qquad
  \sum_{a=0}^m p_a=1.
\]

For \(p_a>0\), we intend to apply
Theorem~\ref{thm:jwx-tensor-power-approximation} to
\(\Omega_a/p_a\).  We must therefore verify that this is an
\(a\)-register symmetric state supported on \(P\mathcal H\).

The projectors \(P_i\) on the first \(a\) registers are already
present in \(\Pi^{([a])}\), so
\[
  \Omega_a
  =
  P^{\otimes a}\Omega_aP^{\otimes a}.
\]
Thus, \(\Omega_a\) is supported on
\((P\mathcal H)^{\otimes a}\).

Now let \(\pi\) be a permutation of the first \(a\) registers,
extended by the identity on the remaining registers.  The unitary
\(U_\pi\) commutes with \(\Pi^{([a])}\).  Since \(\Gamma_m\) is
supported on the symmetric subspace,
\[
  U_\pi\Gamma_m=\Gamma_m.
\]
Consequently,
\[
  U_\pi\Omega_a=\Omega_a.
\]
Averaging this equality over all permutations of the first \(a\)
registers gives
\[
  \Pi_{\mathrm{sym}}^{(a)}\Omega_a=\Omega_a.
\]
Taking adjoints also gives
\[
  \Omega_a\Pi_{\mathrm{sym}}^{(a)}=\Omega_a.
\]
Therefore, \(\Omega_a\) is supported on
\(\Sym^a(P\mathcal H)\).  In particular, whenever \(p_a>0\),
\(\Omega_a/p_a\) is a density operator on
\(\Sym^a(P\mathcal H)\).

\medskip
\noindent
\emph{Recovering \(\rho_P\) from the sector states.}
We next express \(\rho_P\) as a weighted sum of the
\(\ell\)-register reductions of the states constructed above.  By
the definition of \(\rho_P\),
\[
  \rho_P
  =
  P^{\otimes\ell}\rho_\ell P^{\otimes\ell}
  =
  \Tr_{>\ell}\!\left[
    \left(P^{\otimes\ell}\otimes I^{\otimes(m-\ell)}\right)
    \Gamma_m
    \left(P^{\otimes\ell}\otimes I^{\otimes(m-\ell)}\right)
  \right].
\]
Using \(I=P+Q\) on each of the last \(m-\ell\) registers gives
\begin{align*}
  P^{\otimes\ell}\otimes I^{\otimes(m-\ell)}
  &=
  \left(\prod_{i=1}^{\ell}P_i\right)
  \prod_{i=\ell+1}^m(P_i+Q_i)\\
  &=
  \sum_{T\subseteq\{\ell+1,\ldots,m\}}
  \Pi^{([\ell]\cup T)}.
\end{align*}
This expansion lists all possible \(P/Q\)-patterns on the last
\(m-\ell\) registers while keeping outcome \(P\) on each of the
first \(\ell\) registers.

Substituting the expansion on both sides of \(\Gamma_m\) and using
linearity of the partial trace gives
\begin{align*}
  \rho_P
  &=
  \Tr_{>\ell}\!\left[
    \left(
      \sum_{T\subseteq\{\ell+1,\ldots,m\}}
      \Pi^{([\ell]\cup T)}
    \right)
    \Gamma_m
    \left(
      \sum_{T'\subseteq\{\ell+1,\ldots,m\}}
      \Pi^{([\ell]\cup T')}
    \right)
  \right]\\
  &=
  \sum_{\substack{
    T\subseteq\{\ell+1,\ldots,m\}\\
    T'\subseteq\{\ell+1,\ldots,m\}
  }}
  \Tr_{>\ell}\!\left[
    \Pi^{([\ell]\cup T)}
    \Gamma_m
    \Pi^{([\ell]\cup T')}
  \right].
\end{align*}

The next step removes the cross terms between distinct
\(P/Q\)-patterns.  Suppose \(T\ne T'\).  Choose a register
\(k>\ell\) that belongs to exactly one of \(T\) and \(T'\).
Without loss of generality, suppose that \(k\in T\) and
\(k\notin T'\).  Then
\[
  P_k\Pi^{([\ell]\cup T)}
  =
  \Pi^{([\ell]\cup T)},
  \qquad
  \Pi^{([\ell]\cup T')}Q_k
  =
  \Pi^{([\ell]\cup T')}.
\]
Since register \(k\) is traced out, \(Q_k\) can be moved cyclically
inside the partial trace.  Hence
\begin{align*}
  &\Tr_{>\ell}\!\left[
    \Pi^{([\ell]\cup T)}
    \Gamma_m
    \Pi^{([\ell]\cup T')}
  \right]\\
  &=
  \Tr_{>\ell}\!\left[
    P_k
    \Pi^{([\ell]\cup T)}
    \Gamma_m
    \Pi^{([\ell]\cup T')}
    Q_k
  \right]\\
  &=
  \Tr_{>\ell}\!\left[
    Q_kP_k
    \Pi^{([\ell]\cup T)}
    \Gamma_m
    \Pi^{([\ell]\cup T')}
  \right]\\
  &=0,
\end{align*}
where the last equality uses \(Q_kP_k=0\).  Thus, only the terms
with \(T=T'\) remain:
\begin{equation}
\label{eq:rhoP-pattern-sum}
  \rho_P
  =
  \sum_{T\subseteq\{\ell+1,\ldots,m\}}
  \Tr_{>\ell}\!\left[
    \Pi^{([\ell]\cup T)}
    \Gamma_m
    \Pi^{([\ell]\cup T)}
  \right].
\end{equation}

We now group the surviving patterns according to their total number
of \(P\)-outcomes.  Fix \(a\ge\ell\).  The patterns with a total of
\(a\) outcomes equal to \(P\) are precisely those for which
\[
  |T|=a-\ell.
\]
Any two such sets \(T\) are related by a permutation of the last
\(m-\ell\) registers.  Such a permutation fixes the first \(\ell\)
registers, leaves \(\Gamma_m\) invariant, and does not change the
partial trace.  Hence all terms with \(|T|=a-\ell\) are equal.

There are
\[
  \binom{m-\ell}{a-\ell}
\]
sets \(T\) of this size.  Using
\(T=\{\ell+1,\ldots,a\}\) as a representative in each group,
equation~\eqref{eq:rhoP-pattern-sum} becomes
\[
  \rho_P
  =
  \sum_{a=\ell}^m
  \binom{m-\ell}{a-\ell}
  \Tr_{>\ell}\!\left[
    \Pi^{([a])}\Gamma_m\Pi^{([a])}
  \right].
\]

We next relate the operator in this sum to \(\Omega_a\). By
\eqref{eq:omega-def},
\begin{align*}
  (\Omega_a)_\ell
  &=
  \Tr_{\ell+1,\ldots,a}\!\left[
    \Omega_a
  \right]\\
  &=
  \binom ma
  \Tr_{\ell+1,\ldots,a}\!\left[
    \Tr_{a+1,\ldots,m}\!\left[
      \Pi^{([a])}\Gamma_m\Pi^{([a])}
    \right]
  \right]\\
  &=
  \binom ma
  \Tr_{\ell+1,\ldots,m}\!\left[
    \Pi^{([a])}\Gamma_m\Pi^{([a])}
  \right]\\
  &=
  \binom ma
  \Tr_{>\ell}\!\left[
    \Pi^{([a])}\Gamma_m\Pi^{([a])}
  \right],
\end{align*}
which gives
\begin{align}
  \rho_P
  &=
  \sum_{a=\ell}^m
  \frac{\binom{m-\ell}{a-\ell}}{\binom ma}
  (\Omega_a)_\ell
  \notag\\
  &=
  \sum_{a=\ell}^m
  \frac{\binom a\ell}{\binom m\ell}
  (\Omega_a)_\ell,
\label{eq:rhoP-decomposition}
\end{align}
where we used
\[
  \frac{\binom{m-\ell}{a-\ell}}{\binom ma}
  =
  \frac{\binom a\ell}{\binom m\ell}.
\]

We rewrite this decomposition in the form that will be used for the
approximation.  Since \(\Omega_a\ge0\), the condition \(p_a=0\)
implies \(\Omega_a=0\), so such terms vanish.  Therefore,
\[
  \rho_P
  =
  \sum_{\substack{a=\ell\\p_a>0}}^m
  p_a
  \frac{\binom a\ell}{\binom m\ell}
  \frac{(\Omega_a)_\ell}{p_a}.
\]
The three factors in each summand have the following roles.
The number \(p_a\) is the probability of obtaining exactly \(a\)
outcomes equal to \(P\), and
\((\Omega_a)_\ell/p_a\) is the corresponding normalized
\(\ell\)-register state.

The remaining factor specifies how much of the \(a\)-sector
contributes to \(\rho_P\).  Indeed, conditioned on there being
exactly \(a\) outcomes equal to \(P\), permutation symmetry makes
their locations uniform among the \(\binom ma\) subsets of size
\(a\).  Among these subsets,
\[
  \binom{m-\ell}{a-\ell}
\]
contain all of the first \(\ell\) positions.  Consequently,
\[
  \frac{\binom a\ell}{\binom m\ell}
  =
  \frac{\binom{m-\ell}{a-\ell}}{\binom ma}
\]
is the conditional probability that the first \(\ell\) outcomes
are all equal to \(P\).  This is relevant because the projection
defining \(\rho_P\) selects exactly these patterns.  Thus,
\[
  p_a\frac{\binom a\ell}{\binom m\ell}
\]
is precisely the weight with which
\((\Omega_a)_\ell/p_a\) contributes to \(\rho_P\).

\medskip
\noindent
\emph{Constructing the approximation.}
The last decomposition tells us exactly what to approximate.  For
each \(a\), we keep the scalar weight
\[
  p_a\frac{\binom a\ell}{\binom m\ell}
\]
unchanged and approximate only the normalized state
\((\Omega_a)_\ell/p_a\).

Fix \(a\ge\ell\) with \(p_a>0\).  We have already shown that
\(\Omega_a/p_a\) is a density operator supported on
\(\Sym^a(P\mathcal H)\), and
\((\Omega_a)_\ell/p_a\) is its reduction to \(\ell\) registers.

If \(r=1\), the space \(P\mathcal H\) is one-dimensional, so
\((\Omega_a)_\ell/p_a\) is already an exact tensor power.  If
\(\ell=1\), the spectral decomposition of
\((\Omega_a)_1/p_a\) expresses it exactly as a mixture of pure
states in \(P\mathcal H\).  In either case, we can choose a
probability measure \(\nu_a\) on the unit sphere of
\(P\mathcal H\) for which the approximation error below is zero.

In the remaining case \(r\ge2\) and \(\ell\ge2\),
Theorem~\ref{thm:jwx-tensor-power-approximation}, applied on the
\(r\)-dimensional space \(P\mathcal H\), supplies a probability
measure \(\nu_a\) on the unit sphere of \(P\mathcal H\) such that
\begin{equation}
\label{eq:sector-approximation}
  \left\|
    \frac{(\Omega_a)_\ell}{p_a}
    -
    \int
      \bigl(\proj{\psi}\bigr)^{\otimes\ell}
      \,d\nu_a(\psi)
  \right\|_1
  \le
  \frac{8\ell\sqrt{r-1}}{a}.
\end{equation}

We now replace each normalized state
\((\Omega_a)_\ell/p_a\) in the decomposition of \(\rho_P\) by its
tensor-power approximation and define
\[
  \widetilde\rho_P
  =
  \sum_{\substack{a=\ell\\p_a>0}}^m
  p_a\frac{\binom a\ell}{\binom m\ell}
  \int
    \bigl(\proj{\psi}\bigr)^{\otimes\ell}
    \,d\nu_a(\psi),
\]
which is a subnormalized mixture of states
\((\proj{\psi})^{\otimes\ell}\) with
\(\ket\psi\in P\mathcal H\).
Moreover, it follows from
\eqref{eq:rhoP-decomposition} that
\[
  \Tr\widetilde\rho_P
  =
  \sum_{a=\ell}^m
  p_a\frac{\binom a\ell}{\binom m\ell}
  =
  \Tr\rho_P.
\]

\medskip
\noindent
\emph{Bounding the approximation error.}
Because \(\rho_P\) and \(\widetilde\rho_P\) use the same sector
weights, their difference is the weighted sum of the errors made in
the individual sectors.  Therefore,
\begin{align*}
  \left\|\rho_P-\widetilde\rho_P\right\|_1
  &=
  \left\|
    \sum_{\substack{a=\ell\\p_a>0}}^m
    p_a\frac{\binom a\ell}{\binom m\ell}
    \left(
      \frac{(\Omega_a)_\ell}{p_a}
      -
      \int
        \bigl(\proj{\psi}\bigr)^{\otimes\ell}
        \,d\nu_a(\psi)
    \right)
  \right\|_1
  &&\text{(\eqref{eq:rhoP-decomposition} and the definition of
  \(\widetilde\rho_P\))}
  \\
  &\le
  \sum_{\substack{a=\ell\\p_a>0}}^m
  p_a\frac{\binom a\ell}{\binom m\ell}
  \left\|
    \frac{(\Omega_a)_\ell}{p_a}
    -
    \int
      \bigl(\proj{\psi}\bigr)^{\otimes\ell}
      \,d\nu_a(\psi)
  \right\|_1
  &&\text{(triangle inequality)}
  \\
  &\le
  8\ell\sqrt{r-1}
  \sum_{a=\ell}^m
  \frac{p_a}{a}
  \frac{\binom a\ell}{\binom m\ell}
  &&\text{(\eqref{eq:sector-approximation})}
  \\
  &=
  \frac{8\ell\sqrt{r-1}}{m}
  \sum_{a=\ell}^m
  p_a
  \frac{\binom{a-1}{\ell-1}}
       {\binom{m-1}{\ell-1}}
  &&\text{\(\left(
    \frac1a\frac{\binom a\ell}{\binom m\ell}
    =
    \frac1m
    \frac{\binom{a-1}{\ell-1}}
         {\binom{m-1}{\ell-1}}
  \right)\)}
  \\
  &\le
  \frac{8\ell\sqrt{r-1}}{m}
  &&\text{\(\left(
    \frac{\binom{a-1}{\ell-1}}
         {\binom{m-1}{\ell-1}}
    \le1,\quad
    \sum_{a=\ell}^m p_a
    \le\sum_{a=0}^m p_a=1
  \right)\)}.
\end{align*}

\end{proof}

\end{document}